\title{On the associativity of 1-loop corrections to the celestial operator product in gravity}
\author{Roland Bittleston}
\affiliation{Perimeter Institute for Theoretical Physics,\\ 51 Caroline Street, Waterloo, Ontario, Canada}
\emailAdd{rbittleston@perimeterinstitute.ca}
\begin{document}
	
\abstract{The question of whether the holomorphic collinear singularities of graviton amplitudes define a consistent chiral algebra has garnered much recent attention. We analyse a version of this question for infinitesimal perturbations around the self-dual sector of 4d Einstein gravity. The singularities of tree amplitudes in such perturbations do form a consistent chiral algebra, however at 1-loop its operator products are corrected by the effective graviton vertex. We argue that the chiral algebra can be interpreted as the universal holomorphic surface defect in the twistor uplift of self-dual gravity, and show that the same correction is induced by an anomalous diagram in the bulk-defect system. The 1-loop holomorphic collinear singularities do not form a consistent chiral algebra. The failure of associativity can be traced to the existence of a recently discovered gravitational anomaly on twistor space. It can be restored by coupling to an unusual 4\textsuperscript{th}-order gravitational axion, which cancels the anomaly by a Green-Schwarz mechanism. Alternatively, the anomaly vanishes in certain theories of self-dual gravity coupled to matter, including in self-dual supergravity.}
	
\maketitle
\flushbottom


\section{Introduction}

The celestial holography program suggests that theories of gravity in 4d asymptotically flat spacetimes admit the action of an infinite dimensional chiral algebra consisting of asymptotic symmetries arising from soft theorems \cite{Strominger:2013jfa,He:2014laa}. Universal collinear singularities of amplitudes in the gravitational theory are controlled by operator products in the corresponding chiral algebra. (See, e.g., the reviews \cite{Strominger:2017zoo,Raclariu:2021zjz,Pasterski:2021rjz}.) It was recently argued that at tree level the tower of positive-helicity soft graviton symmetries in Einstein gravity generate the loop algebra of a certain infinite dimensional Lie algebra closely related to the wedge subalgebra of $w_{1+\infty}$ \cite{Guevara:2021abz,Strominger:2021lvk}. We shall denote the chiral algebra of positive-helicity symmetries by $\cV$. It reduces to one considered by Penrose \cite{Penrose:1968me,Penrose:1976js} in the twistor setting \cite{Adamo:2021lrv}.

We study the 4d theory of self-dual gravity arising as the $\kappa\to0$ limit of the (self-dual) Palatini action \cite{Plebanski:1975wn,Capovilla:1991qb,Smolin:1992wj}. It's field equations supply a self-dual vacuum Einstein metric on spacetime. (For an introduction to self-dual gravity we refer the reader to the survey \cite{Krasnov:2016emc} and references therein.) The theory includes states of both helicities, with the negative-helicity field acting as a Lagrange multiplier.\\

In the beautiful paper \cite{Ball:2021tmb} it was argued that the celestial chiral algebra of self-dual gravity is undeformed by quantum corrections. We seek to extend this result by asking a considerably more general question: do the collinear singularities of amplitudes in generic 1\textsuperscript{st}-order deformations of quantum self-dual gravity form a consistent chiral algebra?\\

This question is motivated by the results of \cite{Costello:2022wso,Costello:2022upu}, in which the authors study an extension of the celestial chiral algebra of self-dual Yang-Mills whose operator products encode the collinear limits of tree form factors, i.e., tree gluon amplitudes in the presence of local operators. Integrating a local operator over spacetime gives a 1\textsuperscript{st}-order deformation of self-dual Yang-Mills, so their results show that the collinear limits of tree amplitudes in such deformations are universal. This extended celestial chiral algebra receives quantum corrections, the simplest of which can be attributed to the 1-loop all-plus gluon splitting amplitude discovered in \cite{Bern:1994zx}. Associativity of the operator product is violated in the deformed chiral algebra, but can be restored by cancelling an anomaly in the twistor description of the theory.\\

Self-dual gravity has no BRST invariant local operators, at least of vanishing ghost number, but 1\textsuperscript{st}-order deformations of the theory still make sense. There does exist an extension of its celestial chiral algebra whose operator products describe the collinear limits of tree amplitudes in 1\textsuperscript{st}-order deformations of self-dual gravity. It consists of a split extension of $\cV$ by its adjoint module, encoding the states of the negative-helicity field \cite{Costello:2022wso}.

We show that the operator products of this extended celestial chiral algebra receive 1-loop corrections. This might seem to conflict with the well known result of \cite{Bern:1998sv} that there are no 1-loop splitting amplitudes in gravity. The resolution lies in the fact that this result applies only to the \emph{true} collinear limit, whereas operator products in the chiral algebra describe \emph{holomorphic} collinear singularities. In fact, 1-loop graviton amplitudes can acquire double poles in the holomorphic collinear limit from the diagram illustrated in figure \ref{fig:splitting} \cite{Brandhuber:2007up,Alston:2015gea}. One first order pole is generated by the loop integration, and a second arises from the internal propagator. Understanding the holomorphic collinear singularities, and more generally the factorization at complex kinetic points, of amplitudes is central to the application of recursion methods \cite{Britto:2005fq,Bern:2005hs,Alston:2012xd}.

In \cite{Brandhuber:2007up} an effective 1-loop graviton vertex was introduced describing this double pole. We massage the effective vertex into a 1-loop holomorphic splitting amplitude
\be \label{eq:splitting-intro} \mathrm{Split}^\mathrm{1-loop}_+(1^+,2^+;t) = \frac{4t(1-t)}{180(4\pi)^2}\frac{[12]^4}{\la12\ra^2}\,. \ee
(For details of our conventions see section \ref{sec:splitting}.) In principle, it can be computed as the partially off-shell 3-point amplitude obtained by stripping off the tree in figure \ref{fig:splitting}. We emphasise that in the true limit $[12]\propto\overline{\la12\ra}\to0$ it vanishes. For completeness, we explicitly recover this splitting amplitude by taking the holomorphic collinear limit of the 5-point mostly-plus amplitude originally computed in \cite{Dunbar:2010xk}.

\begin{figure}[h!]
	\centering
	\includegraphics[scale=0.3]{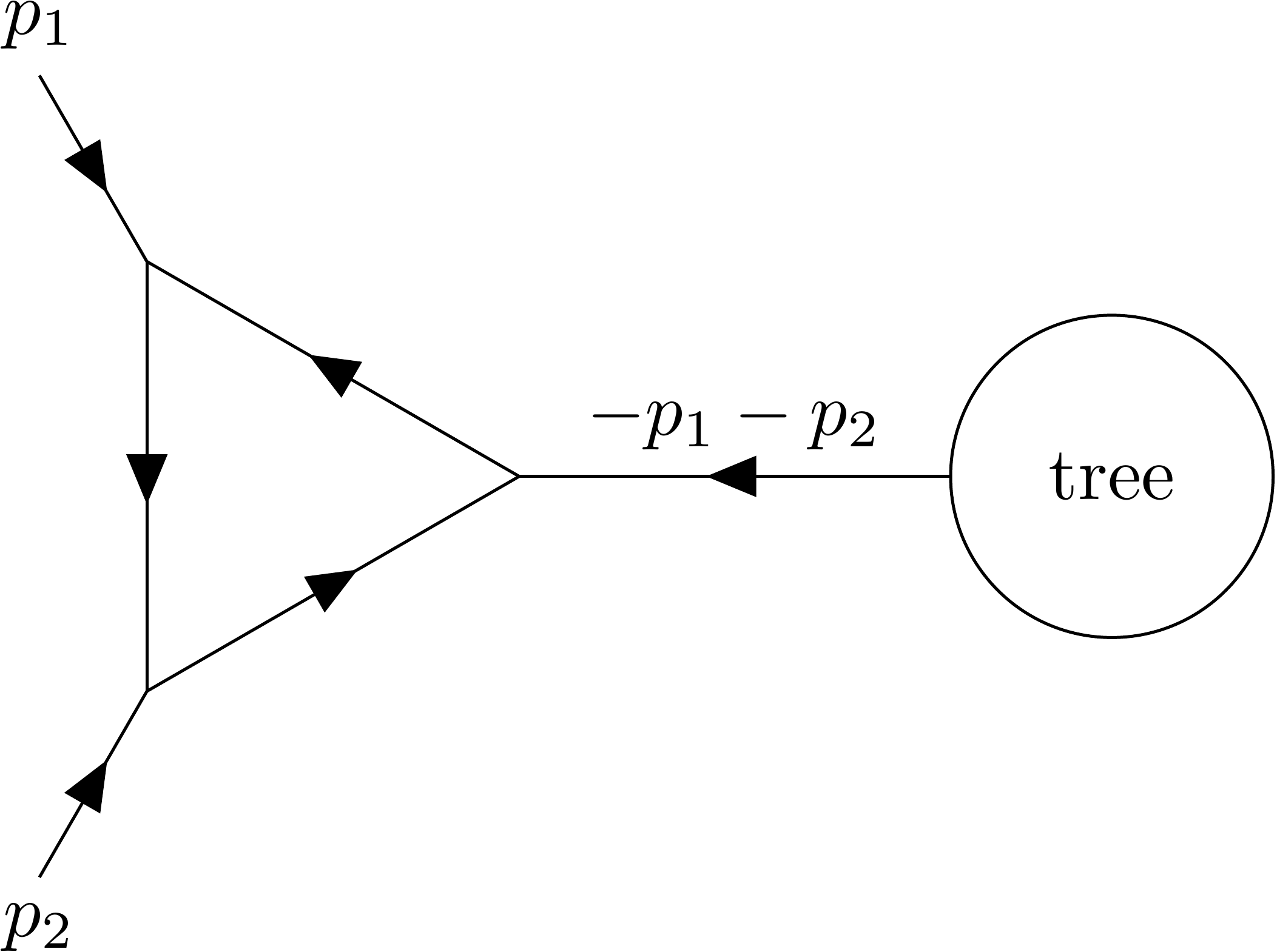}
	\caption{\emph{This diagram is responsible for double poles in the holomorphic collinear limits of graviton amplitudes. Arrows indicate the flow of momenta and helicity through the diagram.}} \label{fig:splitting}
\end{figure}

The above diagram does not contribute in the holomorphic collinear limits of the 1-loop all-plus amplitudes, on account of the vanishing of the 1-minus trees, and so does not modify the operator products of the celestial chiral algebra of self-dual gravity. However the 1-minus tree amplitudes in a generic 1\textsuperscript{st}-order deformation of self-dual gravity will not vanish, so the diagram in figure \ref{fig:splitting} corrects the operator products of our extended celestial chiral algebra. It's important here that our algebra includes states associated to both the positive- and negative-helicity modes, since the 1-loop effective vertex describes a 2-plus to 1-minus scattering process.\\

We then determine whether this 1-loop deformation of the extended celestial chiral algebra is consistent, i.e., whether it has an associative operator product. In order to check this, we first need to characterise the potential 1-loop corrections in general, which we do through symmetry arguments. The leading double poles in the corrected operator products necessarily take the same form as those induced by the 1-loop holomorphic splitting amplitude, but there can also be subleading simple poles which are bilinear in generators. We find that the holomorphic collinear singularities \emph{do not} form an associative chiral algebra. This failure signals that the holomorphic collinear singularities of amplitudes in 1\textsuperscript{st}-order deformations are not universal: they depend on the choice of deformation. The non-vanishing 1-loop all-plus amplitudes in self-dual gravity are responsible such non-universal behaviour.\\

To understand how to correct this failure, we appeal to twistor theory. Penrose's non-linear graviton construction realises self-dual vacuum Einstein spacetimes in terms of the complex geometry of a 6d twistor space \cite{Penrose:1968me,Penrose:1976js}. This can be exploited to obtain an uplift of self-dual gravity to a holomorphic Poisson-BF theory on the twistor space of flat spacetime \cite{Mason:2007ct,Sharma:2021pkl}.

As a holomorphic theory, the twistor uplift admits surface defects supporting chiral algebras. Classically, the universal holomorphic surface defect coincides with the extended celestial chiral algebra describing the collinear singularities of tree amplitudes in 1\textsuperscript{st}-order deformations of self-dual gravity. We argue that this identification persists at the quantum level. As evidence, we identify the anomalous 1-loop diagrams in the coupled bulk-defect system necessitating corrections to operator products. Evaluating these for particular configurations of external legs, we recover precisely the deformation induced by the 1-loop holomorphic splitting amplitude. We further find that certain subleading terms bilinear in generators are non-vanishing.

From the twistor perspective, the failure of associativity can be traced to a recently discovered gravitational anomaly in Poisson-BF theory \cite{Bittleston:2022nfr}. The presence of this anomaly obstructs the existence of the twistorial theory, and there's no reason to expect its holomorphic surface defects support consistent chiral algebras. This is compatible with the spacetime interpretation: the anomalous box diagram on twistor space can be identified with the 1-loop 4-point all-plus amplitude on spacetime.\\

This perspective presents a natural method of correcting the associativity failure: by cancelling the twistorial gravitational anomaly.

One method of doing so is using a kind of Green-Schwarz mechanism. This requires coupling to a $(2,1)$-form field on twistor space, describing an unusual 4\textsuperscript{th}-order gravitational axion on spacetime. Incorporating the axion states into the chiral algebra we find that associativity is restored, conditional on suitable normalizations of certain 1-loop corrected operator products. These match those determined by the splitting amplitude, and the results of our direct calculation in the bulk-defect system on twistor space.

Alternatively, the twistorial anomaly vanishes in certain theories of self-dual gravity coupled to matter, including in self-dual supergravity. We briefly discuss how associativity is restored in these examples.\\

Even in cases where the anomaly is cancelled, at 1-loop operator products of generators involve terms bilinear in generators. These do not directly correspond to holomorphic splitting amplitudes, but nonetheless are anticipated to describe the subleading holomorphic collinear singularities of 1-minus amplitudes in Einstein gravity coupled to a 4\textsuperscript{th}-order gravitational axion.

The presence of non-linearities in the chiral algebra mean that its mode algebra has been deformed as an associative algebra, not a Lie algebra. In this sense it's a kind of quantum group, whose relationship to self-dual gravity is somewhat analogous to that of the Yangian to the principal chiral model \cite{Luscher:1977uq,Luscher:1977rq,Brezin:1979am,Bernard:1990jw}. (In fact, the twistor formulation of self-dual Yang-Mills theory is known to be closely related to the 4d Chern-Simons description of the principal chiral model \cite{Costello:2019tri,Bittleston:2020hfv}.) One consequence of this is that the quantum deformation we find here doesn't seem to be related to the symplecton deformation of the wedge subalgebra of $w_{1+\infty}$ \cite{Biedenharn:1971et}, which instead appears as the celestial chiral algebra of the Moyal deformation of self-dual gravity \cite{Monteiro:2022lwm,Bu:2022iak}.\footnote{The wedge subalgebra of $w_{1+\infty}$ admits a continuous family of inequivalent Lie algebra deformations labelled by $s\geq -1/2$ \cite{Pope:1991ig,Shen:1992dd}. One corollary of our results in appendix \ref{app:extended-sl2C} is that only one of these, the symplecton with $s=-1/4$, defines a deformation of $\cV$. This was also noted in \cite{Bu:2022iak}.}\\

The paper is organized as follows:
\begin{itemize}
	\item[-] In section \ref{sec:background} we review relevant background material. Much of this concerns the twistor formulation of self-dual gravity, which is only essential for sections \ref{sec:defect} and \ref{sec:cure}.
	\item[-] In section \ref{sec:splitting} we obtain the 1-loop holomorphic splitting amplitude from the effective 1-loop 3-point vertex. We explain why it deforms the extended celestial chiral algebra, and determine the resulting 1-loop corrected operator products.
	\item[-] In section \ref{sec:associativity} we characterise the 1-loop corrections to the extended celestial chiral algebra in general. We then show that associativity of the operator product is violated. This failure signals that the holomorphic collinear singularities of amplitudes in 1\textsuperscript{st}-order deformations are not universal.
	\item[-] In section \ref{sec:defect} we argue that the extended celestial chiral algebra can be identified with the universal holomorphic surface defect in the twistor uplift of self-dual gravity. We identify the anomalous 1-loop diagrams in the coupled bulk-defect system necessitating corrections to operator products, and evaluate them for particular choices of external fields. The failure of associativity can be traced to the recently discovered twistorial gravitational anomaly.
	\item[-] In section \ref{sec:cure} we consider self-dual gravitational theories for which the twistorial anomaly vanishes. One example is given by coupling self-dual gravity to an unusual 4\textsuperscript{th}-order gravitational axion, which cancels the twistorial anomaly by a Green-Schwarz mechanism. We verify that the previously identified associativity failure is cured in its extended celestial chiral algebra. We also briefly discuss minimally coupling to matter in such a way that the anomaly vanishes directly.
	\item[-] In section \ref{sec:discussion} conclude by discussing possible extensions of this work.
\end{itemize}

Our conventions are as follows. We let $\alpha,\beta,\ldots\in\{1,2\}$ and $\dot\alpha,\dot\beta,\ldots\in\{1,2\}$ denote left- and right-handed Weyl spinor indices respectively. Spinor indices can be raised or lowered using the Levi-Civita symbols $\eps^{\da\db}$ and $\eps_{\da\db}$ with the conventions $u_\da = \eps_{\da\db}u^\db$, along with $\eps^{\da\dc}\eps_{\dc\db} = \delta^\da_{~\,\db}$. $\gSL_2(\bbC)$ invariant spinor contractions will be denoted  $[uv] = u^\da v_\da$ and $\langle ab\rangle = a^\alpha b_\alpha$. Greek letters $\mu,\nu,\dots$ from the middle of the alphabet denote spacetime indices.


\section{Background}
\label{sec:background}

In this section we sum up essential background material. First, we cover the classical twistor action for self-dual (SD) gravity as originally proposed in \cite{Mason:2007ct}. Next, we recall the celestial chiral algebra (CCA) of SD gravity and its extension to include negative-helicity states. It arises as the universal holomorphic surface defect in the twistor uplift of the theory \cite{Costello:2022wso}. Finally, we review the newly discovered anomaly in the twistor description of SD gravity, and two methods of cancelling it, from \cite{Bittleston:2022nfr}.


\subsection{Twistor action for self-dual gravity} \label{subsec:classical-SDGR}

The twistor space of flat Euclidean spacetime is denoted $\bbPT$. As a complex manifold it's the total space of the rank 2 vector bundle
\be \cO(1)\oplus\cO(1)\mapsto\bbCP^1\,. \ee
It admits a natural action of the (double cover of the) complexified Lorentz group $\gSL_2(\bbC)_+\times\gSL_2(\bbC)_-$ where the first factor mixes the fibres and the second acts by M\"{o}bius transformations on the base. We use holomorphic coordinates $v^\da$ for $\da=1,2$ on the fibres and the inhomogeneous coordinate $z$ on the base. Note that the $v^\da$ coordinates are singular at $z=\infty$ since they represent sections of $\cO(1)$.

Complexified spacetime is recovered as the space of degree 1 curves $\bbCP^1\hookrightarrow\bbPT$. Explicitly we can parametrize such curves by a quadruple $(u^1,u^2,\tilde u^1,\tilde u^2)\in\bbC^4$ as
\be \label{eq:incidence} v^1 = u^1 - z\tilde u^2\,,\qquad v^2 = u^2 + z\tilde u^1\,. \ee
The equations \eqref{eq:incidence} are known as incidence relations, and we refer to the curve they define as a twistor line.

To recover Euclidean spacetime we introduce the antipodal map $\sigma:z\mapsto -1/\bar z$ which extends to act on the $v^\da$ coordinates by $(v^1,v^2)\mapsto(\bar v^2/\bar z,-\bar v^1/\bar z)$. This automorphism has no fixed points, so there's a unique degree 1 curve passing through a point $Z\in\bbPT$ and $\sigma(Z)\in\bbPT$. Twistor lines of this type correspond to points in the Euclidean slice $\bbR^4\subset\bbC^4$, and we refer to them as real twistor lines. Explicitly they are determined by a pair $(u^1,u^2)\in\bbC^2\cong\bbR^4$ with $(\tilde u^1,\tilde u^2)=(\bar u^1,\bar u^2)$.

The choice of involution $\sigma$ breaks $\gSL_2(\bbC)_+\times\gSL_2(\bbC)_-$ to $\gSU(2)_+\times\gSU(2)_-$, which induces an action of $\gSO_4(\bbR)$ on $\bbR^4\cong\bbC^2$ preserving the metric
\be \dif s^2 = \dif u^1\dif\bar u^1 + \dif u^2\dif\bar u^2\,. \ee

This construction furnishes twistor space with a smooth fibration over spacetime $\bbPT\to\bbR^4$ with $S^2$ fibres. Assuming the reality condition $(\tilde u^1,\tilde u^2)=(\bar u^1,\bar u^2)$, we can solve the incidence relation \eqref{eq:incidence} for $u=(u^1,u^2)$ to get
\be u^1 = \frac{v^1 + z\bar v^2}{1+|z|^2}\,,\qquad u^2 = \frac{v^2 - z\bar v^1}{1+|z|^2}\,. \ee
We emphasise that although we are choosing to parametrize Euclidean spacetime $\bbR^4\cong\bbC^2$ with complex coordinates $(u^1,u^2)$, twistor space is indifferent to this choice. Indeed, at fixed $u$ the coordinate $z$ can be viewed as parametrizing the space of K\"{a}hler structures on the tangent space compatible with the metric $\dif s^2$ (and a choice of orientation).

Although we will largely be working in Euclidean signature on spacetime, we note that an important result of \cite{Costello:2021bah} is that complex symmetries of classical twistorial theories persist at the quantum level. In particular, translation invariant twistorial theories have analytic correlation functions with poles on the complexified light cone. In this way our results extend to Lorentzian and ultrahyperbolic signatures.\\

It is a remarkable fact that solutions to chiral equations on spacetime are equivalent to holomorphic structures on twistor space. Linearly, solutions to spin $s$ free field equations on spacetime can be identified with Dolbeault cohomology classes $H^1(\bbPT,\cO(2s-2))$ on twistor space \cite{Penrose:1968me}. Non-linearly, solutions to the SD Yang-Mills equations on spacetime are equivalent to holomorphic bundles on twistor space which are trivial on twistor lines \cite{Ward:1977ta}. In this work we will be concerned with curved spacetimes obeying the SD vacuum Einstein equations, i.e., which are Ricci flat and have SD Weyl curvature. (We will always assume a vanishing cosmological constant.) These can be characterised by twistor data using Penrose's non-linear graviton construction \cite{Penrose:1976js,Atiyah:1978wi}.

The non-linear graviton construction provides a correspondence between 4 dimensional manifolds $\cM$ with a conformal class of metrics $[g]$ whose Weyl curvature is SD and 3 dimensional complex manifolds $\cPT$ possessing a 4 parameter family of rational curves $\cL_u$ with normal bundle $\cN_u=\cO(1)\oplus\cO(1)$.\footnote{Strictly this correspondence applies to conformal classes of holomorphic metrics on complexified spacetimes. To get Riemannian metrics on real slices we also require a free antiholomorphic involution of $\cPT$ acting as the antipodal map on $\cL_u$.} The moduli space of such curves is identified with $\cM$, and $T_u\cM\cong H^0(\cL_u,\cN_u)$. $\cPT$ is the curved twistor space of $\cM$.

In perturbation theory we can view the complex structure on $\cPT$ as a deformation of that on $\bbPT$, which can be explicitly parametrized by a Beltrami differential $V\in\Omega^{0,1}(\bbPT,T^{1,0}\bbPT)$. The antiholomorphic Dolbeault operator $\nbar$ on $\cPT$ is related to $\bar\p$ on $\bbPT$ by
\be \nbar = \bar\p + \cL_V\,. \ee
We emphasise that the undeformed Dolbeault operator $\dbar$ is not trivial. In this work, whenever the Lie derivative $\cL$ appears we mean the $(1,0)$ Lie derivative $\cL_V = [V\ip\,,\p\,]$. This makes no difference when acting on Dolbeault forms with only antiholomorphic degree, but on generic $(p,q)$-forms the distinction is important. Often when $(1,0)$ vector fields act on antiholomorphic forms we suppress the Lie derivative, since there is no ambiguity. The Beltrami differential determines a deformation of the almost complex structure, and for this to be an integrable deformation the Nijenhuis tensor must also vanish
\be N = \nbar^2 = \bar\p V + \frac{1}{2}[V,V] = 0\,. \ee

The non-linear graviton construction can be further refined to obtain SD vacuum metrics by requiring that the curved twistor space admit a fibration $\cPT\to\bbCP^1$ with a $\cO(2)$-valued symplectic form on the fibres. In particular, if this condition holds there is a unique Ricci flat metric $g$ in the conformal class determined by $\cPT$. The vertical tangent bundle with respect to the holomorphic fibration over $\bbCP^1$ is denoted by $\cN$, and its pullback to a twistor line can be identified with the normal bundle $\cN_u$. In the case of flat space this fibration clearly exists and the required symplectic form is $\dif v^2\wedge\dif v^1$. The almost complex structure deformation determined by $V$ preserves this form, and so determines an SD vacuum metric if it's Hamiltonian with respect to the bivector $\Pi = \p_{v^1}\vee\p_{v^2} = \frac{1}{2}\epsilon^{\db\da}\p_\da\vee\p_\db\in\wedge^2\cN\otimes\cO(-2)$ for $\p_\da = \p_{v^\da}$. Writing $\{\ ,\ \}$ for the corresponding Poisson bracket this is the condition
\be \label{eq:PB} V = \{h,\ \} = \epsilon^{\db\da}\p_\da h\,\p_\db\,. \ee
where $h\in\Omega^{0,1}(\bbPT,\cO(2))$ is the Hamiltonian. $h$ must take values in the line bundle $\cO(2)$ to compensate the twisting of the bivector. For such almost complex structure deformations the Nijenhuis tensor is itself Hamiltonian, i.e., $N = \{T,\ \}$, where
\be T = \bar\p h + \frac{1}{2}\{h,h\}\,. \ee
In this way we can characterise SD vacuum fluctuations of spacetime by a field $h\in\Omega^{0,1}(\bbPT,\cO(2))$ obeying $T=0$.\\

An action on $\bbPT$ with field equations imposing integrability of a Hamiltonian complex structure deformation was provided by Mason and Wolf in \cite{Mason:2007ct}. (See also the discussions in \cite{Adamo:2013tja,Adamo:2013cra}.) They take for dynamical fields $h\in\Omega^{1,0}(\bbPT,\cO(2))$ and a Lagrange multiplier field $g\in\Omega^{3,1}(\bbPT,\cO(-2))$. The action is then
\be \label{eq:S-PBF} S_\mathrm{PBF}[g,h] = \frac{1}{2\pi\im}\int_\bbPT g\wedge\bigg(\dbar h + \frac{1}{2}\{h,h\}\bigg)\,. \ee
Note that the twisting of $g$ has been chosen to ensure that Lagrangian takes values in $\Omega^{3,3}(\bbPT)$. We refer to this as (holomorphic) Poisson-BF theory. It can be interpreted as a truncation of the $\cN=8$ twistor string \cite{Skinner:2013xp} to the constant map sector. The equations of motion are
\be T = \dbar h + \frac{1}{2}\{h,h\} = 0\,,\qquad \nbar g = \dbar g + \{h,g\} = 0\,. \ee
As per the above discussion, the first ensures that the almost complex structure determined by $V = \{h,\ \}$ is integrable, and the second tells us that $g$ is closed in this deformed complex structure. When coupling to defects we will find it convenient to introduce $\tilde h\in\Omega^{0,1}(\bbPT,\cO(-6))$ defined by
\be g = \Dif^3Z\wedge\tilde h\,, \ee
where
\be \Dif^3Z = \frac{1}{2}\,\dif z\,\dif v^\da\,\dif v_\da \ee
generates $H^{3,0}(\bbPT,\cO(4))\cong\bbC$.

The action \eqref{eq:S-PBF} has two families of gauge symmetries. The first is given by
\be \delta h = \nbar\chi = \dbar\chi + \{h,\chi\}\,,\qquad \delta g = \{g,\chi\}\,, \ee
where $\chi$ can be interpreted as the Hamiltonian for a vector field $\xi = \{\chi,\ \}$. This gauge symmetry therefore identifies complex structure deformations related by Poisson diffeomorphisms. The second is
\be \label{eq:gauge-phi} \delta h = 0\,,\qquad \delta g = \nbar\phi = \dbar\phi + \{h,\phi\}\,, \ee
which tells us that $g$ is determined modulo the addition of exact forms in the deformed almost complex structure. On-shell we therefore find that $g\in H^{3,1}(\cPT,\cO(-2))$.

This theory admits a number of symmetries which play an important role in what follows. Firstly, we have the action of $\gSL_2(\bbC)_+\times\gSL_2(\bbC)_-$, which includes a $z$ scaling symmetry $z\mapsto s z$ for $s\in\bbC^*$. Secondly, we can scale the fibres as $v^\da\mapsto tv^\da$ for $t\in\bbC^*$ corresponding to a spacetime dilation. These also appear as symmetries of the CCA, and we will use them to constrain OPEs.\\

The spacetime counterpart of the twistor action \eqref{eq:S-PBF} can be recovered by following a rather technical partial gauge fixing procedure as outlined in \cite{Sharma:2021pkl}. (See also \cite{Bittleston:2022nfr} for an alternative reduction leading to a superficially rather different, but ultimately equivalent, action for SD gravity originally appearing in \cite{Krasnov:2021cva}.) It depends on a vierbein $e^{\da\alpha}$ and a 1-form Lagrange multiplier field $\Gamma_{\alpha\beta}$ symmetric in its spinor indices, and has action
\be \label{eq:S-SDGR} S_\mathrm{SDGR}[\Gamma,e] = \frac{1}{2}\int_{\bbR^4}\Gamma_{\alpha\beta}\wedge\dif(e^{\alpha\da}\wedge e^\beta_{~\,\da})\,. \ee
This is the $\kappa\to0$ limit of the SD Palatini action \cite{Ashtekar:1987qx,Capovilla:1991qb,Smolin:1992wj}.\footnote{The SD Palatini action differs from the tetradic Palatini action by a topological Nieh-Yan term, and so is equivalent to Einstein gravity in perturbation theory. It coincides with the Holst action if the Barbero-Immirzi parameter is specialised to $\beta=-\im$. The theory of SD gravity we study here is then the $\kappa\to0$ limit of the SD Palatini action.} The field $\Gamma_{\alpha\beta}$ encodes the negative-helicity states. Since it appears linearly in the action, it can be treated as a loop counting parameter in place of $\hbar$. We will often refer to this theory simply as SD gravity.\\

We will find it convenient to employ the BV formalism in order to simplify perturbative calculations. This is a powerful generalization of the standard Fadeev-Popov gauge fixing. Here we briefly review the elements needed for this paper, which are fairly minimal. For a more extensive description of the BV formulation of holomorphic theories we refer the reader to \cite{Williams:2018ows}, and for the particular case of Poisson-BF theory to \cite{Elliott:2020jai}.

We extend the dynamical fields $h,g$ to polyform fields $\bh\in\Omega^{0,\bullet}(\bbPT,\cO(2))[1]$ and $\bg\in\Omega^{3,\bullet}(\bbPT,\cO(-2))[1]$. Here the symbol $\bullet$ indicates, e.g., that $\bh$ is not merely a $(0,1)$-form, but rather a sum of $(0,q)$-forms for $0\leq q\leq3$. The symbol $[1]$ stipulates a shift in cohomological degree so that it coincides with minus the ghost number. Expanding $\bh$ into components of definite degree we have
\be \bh = \chi + h + g^\vee + \phi^\vee\,. \ee
In degrees $-1,0$ are the Poisson diffeomorphism ghost $\chi$ and the physical field $h$ respectively. (We abuse notation by denoting the ghost with the same symbol as the gauge parameter above. The difference is that the ghost is graded odd.) The components with positive degree are the antifield $g^\vee$ to the physical field $g$, and the antifield $\phi^\vee$ to the ghost $\phi$ associated to the gauge symmetry \eqref{eq:gauge-phi}. We can similarly expand $\bg = \phi + g + h^\vee + \chi^\vee$ to obtain the remaining fields. For coupling to defects it's again convenient to define $\tilde\bh\in\Omega^{0,\bullet}(\bbPT,\cO(-6))[1]$ by $\bg = \Dif^3Z\,\tilde\bh$, where the physical field corresponding to $\tilde\bh$ is $\tilde h$.

The extended space of fields is equipped with a canonical $(-1)$-shifted symplectic pairing
\be \frac{1}{2\pi\im}\int_\bbPT \delta\bg\,\delta\bh\,, \ee
where here alone $\delta$ denotes the exterior derivative on the space of fields. This induces a BV bracket on functionals $\{\ ,\ \}_\mathrm{BV}$, which should not be confused with Poisson bracket introduced around equation \eqref{eq:PB}.

The classical action \eqref{eq:S-PBF} is replaced with its BV counterpart
\be \label{eq:S-PBF-BV} S_\mathrm{PBF}[\bg,\bh] = \frac{1}{2\pi\im}\int_\bbPT\bg\,\bigg(\bar\p\bh + \frac{1}{2}[\bh,\bh]\bigg)\,, \ee
where the integral is understood to extract the $(3,3)$-form part of the Lagrangian. Although this takes exactly the same form as \eqref{eq:S-PBF}, it receives contribution from more terms. These encode the action of the BRST operator, which is
\be \delta = \{S_\mathrm{PBF},\ \}_\mathrm{BV}\,. \ee
In particular
\be \delta\bh = T(\bh) = \bar\p\bh + \frac{1}{2}\{\bh,\bh\}\,,\qquad \delta\bg = \nbar\bg = \bar\p\bg + \{\bh,\bg\}\,. \ee
Restricting the above formulae to definite cohomological degree we recover the structure constants of the gauge algebra, the BRST transformations of the physical fields and the classical equations of motion. Nilpotence of the BRST operator follows from the classical master equation $\{S_\mathrm{PBF},S_\mathrm{PBF}\}_\mathrm{BV}=0$.

When performing computations in the BV formalism we should employ a `gauge fixing', which in this context refers to a choice of Lagrangian subspace in the space of fields. For us this will always be $\bar\p^\dag\bh=\bar\p^\dag\bg=0$ where $\bar\p^\dag = - \star\bar\p\,\star$ for $\star$ the anti-linear Hodge star associated to some choice of Hermitian metric on $\bbPT$.

The BV formalism encodes the data of the theory efficiently, but its real power lies in perturbative computations.


\subsection{Celestial chiral algebra of self-dual gravity}
\label{subsec:classical-chiral-algebra}

Here we write down an extended CCA for SD gravity incorporating states of both positive- and negative-helicity \cite{Guevara:2021abz,Strominger:2021lvk}. We also review how this extended algebra arises as the universal holomorphic surface defect in the twistor uplift of the theory \cite{Costello:2022wso}.\\

The extended CCA of SD gravity is a vertex algebra consisting of two infinite towers of states $w[m,n]$, $\tilde w[m,n]$ for $m,n\geq0$, corresponding to positive- and negative-helicity gravitons respectively.\footnote{In this work the term `chiral algebra' refers to the mathematical notion of a vertex algebra. Physically chiral algebras arise as holomorphic subsectors of 2d CFTs.} These are listed in table \ref{tab:classical-generators}, together with their conformal spins.

\begin{table}[h!]
	\centering
	\begin{tabular}{c c c c c}
		\toprule
		Generator & Field & Spin & $\fsl_2(\bbC)_+$ representation & Dimension \\ \midrule
		$w[m,n]\,,~m,n\geq0$ & $h$ & $2-(m+n)/2$ & $\mathbf{m+n+1}$ & $2-m-n$ \\
		$\tilde w[m,n]\,,~m,n\geq0$ & $g$ & $-2-(m+n)/2$ & $\mathbf{m+n+1}$ & $-4-m-n$ \\
		\bottomrule
	\end{tabular}
	\caption{Quantum numbers of $w,\tilde w$}
	\label{tab:classical-generators}
\end{table}

The defining OPEs of the chiral algebra are \cite{Strominger:2021lvk,Costello:2022wso}
\bea \label{eq:chiral-algebra}
w[p,q](z)w[r,s](0)&\sim \frac{ps-qr}{z}w[p+r-1,q+s-1](0)\,, \\
w[p,q](z)\tilde w[r,s](0)&\sim \frac{ps-qr}{z}\tilde w[p+r-1,q+s-1](0)\,, \\
\tilde w[p,q](z)\tilde w[r,s](0)&\sim 0\,. \\
\eea
These can be better understood by introducing the Lie algebra of Hamiltonian vector fields on $\bbC^2$ equipped with the standard holomorphic symplectic structure, which we denote by $\Ham(\bbC^2)$. As a vector space $\Ham(\bbC^2)\cong\bbC[x,y]$ for $x,y$ formal parameters, with basis $t_{m,n} = x^my^n$ for $m,n\in\bbZ_{\geq0}$. The Lie algebra structure is induced by the Poisson bracket associated to the bivector $\p_x\vee\p_y$
\be [t_{p,q},t_{r,s}] = \p_x(x^py^q)\p_y(x^ry^s) - \p_y(x^py^q)\p_x(x^ry^s) = (ps-qr)t_{p+r-1,q+s-1}\,. \ee
The vertex algebra generated by the $w$ states, which we denote by $\cV$, is isomorphic to the loop algebra of $\Ham(\bbC^2)$, i.e., $\cV\cong L(\Ham(\bbC^2))$. The $\tilde w$ states generate the adjoint module of this vertex algebra, which we denote by $\widetilde\cV$. The extended CCA of SD gravity is then the split extension of $\cV$ by the module $\widetilde\cV$, and we shall denote it by $\cU$.\\

We note that much of the celestial holography literature concerns the algebra generated by just the positive-helicity soft symmetries of gravity, and so is $\cV$ alone. To avoid confusion, we will refer to this as `the celestial chiral algebra of self-dual gravity', and to its extension by the adjoint module as `the extended celestial chiral algebra of self-dual gravity'. This distinction is more important in the quantum setting, where we will interpret the algebras differently.

$\Ham(\bbC^2)$ is closely related to $w_{1+\infty}$: in particular the wedge subalgebra of $w_{1+\infty}$ is isomorphic to $\Ham(\bbC^2/\bbZ_2)$, where the $\bbZ_2$ acts by $(x,y)\mapsto(-x,-y)$. Pulling back by quotient map $\bbC^2\to\bbC^2/\bbZ_2$ induces an embedding into $\Ham(\bbC^2)$. Under the natural action of $\gSL_2(\bbC)$ we have
\bea \Ham(\bbC^2)\cong \mathbf{1}\oplus\mathbf{2}\oplus\mathbf{3}\oplus\dots\,,\qquad \Ham(\bbC^2/\bbZ_2)\cong \mathbf{1}\oplus\mathbf{3}\oplus\mathbf{5}\oplus\dots\,. \eea
($\mathfrak{sl}_2(\bbC)$ representations of dimension $d$ are denoted $\mathbf{d}$.) One significant difference between these is that the wedge subalgebra of $w_{1+\infty}$ admits a continuous family of inequivalent linear deformations \cite{Pope:1991ig}, however only one of these defines a deformation of $\Ham(\bbC^2)$. This point was noted in \cite{Bu:2022iak}, and also follows from our calculations in appendix \ref{app:extended-sl2C}.

$\gSL_2(\bbC)_-$ transformations of the complexified spacetime can be identified with M\"{o}bius maps on the celestial sphere acting as conformal symmetries of the chiral algebra. $\gSL_2(\bbC)_+$ acts in the natural way on $\Ham(\bbC^2)$. Dilations on spacetime provide an action of $\bbC^*$, and we refer to the weight of a state under this action as its dimension. The $\gSL_2(\bbC)_+$ representations and dimensions of states are listed in table \ref{tab:classical-generators}. Writing $s$ for spin and $d$ for dimension, the combination $s-d/2$ takes the values $1,0$ for the states $w,\tilde w$ respectively. These are the charges under simultaneous dilations $z\mapsto r^{-1}z$ on the celestial sphere and $x\mapsto r^{1/2}x$ on spacetime, and are particularly useful in constraining quantum corrections to OPEs.  \\

We gain insight into the chiral algebra $\cU$ by considering its low lying states, i.e., those transforming in $\fsl_2(\bbC)_+$ representations with small dimension:
\begin{itemize}
	\item[-] $w[0,0]$ has trivial OPEs with all other generators and transforms trivially under $\fsl_2(\bbC)_+$. We could therefore safely remove $w[0,0]$ from the chiral algebra, though we will find it convenient not to do so. Similar statements can be made regarding $\tilde w[0,0]$.
	\item[-] $w[p,q]$ for $p+q=1$ are distinguished as the only generators which can raise dimension. When interpreted as asymptotic symmetry generators they correspond to supertranslations.
	\item[-] $w[p,q]$ for $p+q=2$ generate the loop algebra of $\fsl_2(\bbC)_+$. All other states belong to affine primaries for this loop algebra. It is akin to the $\hat\fg_0$ subalgebra present in gauge theory case \cite{Costello:2022wso}. When interpreted as asymptotic symmetry generators these correspond to superrotations.
	\item[-] The extended CCA is generated (in the strong sense) by $w[p,q]$ with $p+q\leq3$ and $\tilde w[r,s]$ with $r+s\leq 2$.
\end{itemize}

In \cite{Costello:2022wso} two different realisations of $\cU$ are presented utilizing the twistor uplift of SD gravity, which we've seen is Poisson-BF theory. First, as the Lie algebra of infinitesimal gauge transformations on $\bbPT$ with the fibres $z=0,\infty$ removed which preserve the vacuum, and second, as the universal holomorphic surface defect. In this work we will employ the second description. Mathematically, the chiral algebra $\cU$ is Koszul dual to the algebra of local operators in classical Poisson-BF theory on $\bbPT$. For more details on the role of Koszul duality in quantum field theory we refer the reader to \cite{Paquette:2021cij}.

Consider a holomorphic surface defect supported on the real twistor line $\cL_0 = \{u^\da=0\}\subset\bbPT$. The most general holomorphic coupling to the bulk theory is through
\be \label{eq:defect-gh} \frac{1}{2\pi\im}\sum_{m,n\geq0}\int_{\cL_0}\dif z\,(w[m,n](z)\cD_{m,n}\bh + \tilde w[m,n](z)\cD_{m,n}\tilde\bh)\,, \ee
for undetermined operators on the defect suggestively denoted $w[m,n],\tilde w[m,n]$. Here we've introduced the notation $\cD_{m,n} = \p_{v^1}^m\p_{v^2}^n/m!n!$. Although we're using the BV fields $\bh,\tilde\bh$, we could replace them with their physical counterparts because only their $(0,1)$-form parts contribute.

BRST invariance of the coupled bulk-defect system imposes relations on the operators $w[m,n],\tilde w[m,n]$. Working perturbatively, we can interpret the BRST variation of the combined system using Feynman diagrams. The classical relations are determined by trees, two of which are illustrated in figure \ref{fig:BRST-tree}.

\begin{figure}[h!]
	\centering
	\begin{subfigure}[t]{0.49\textwidth}
		\centering
		\includegraphics[scale=0.3]{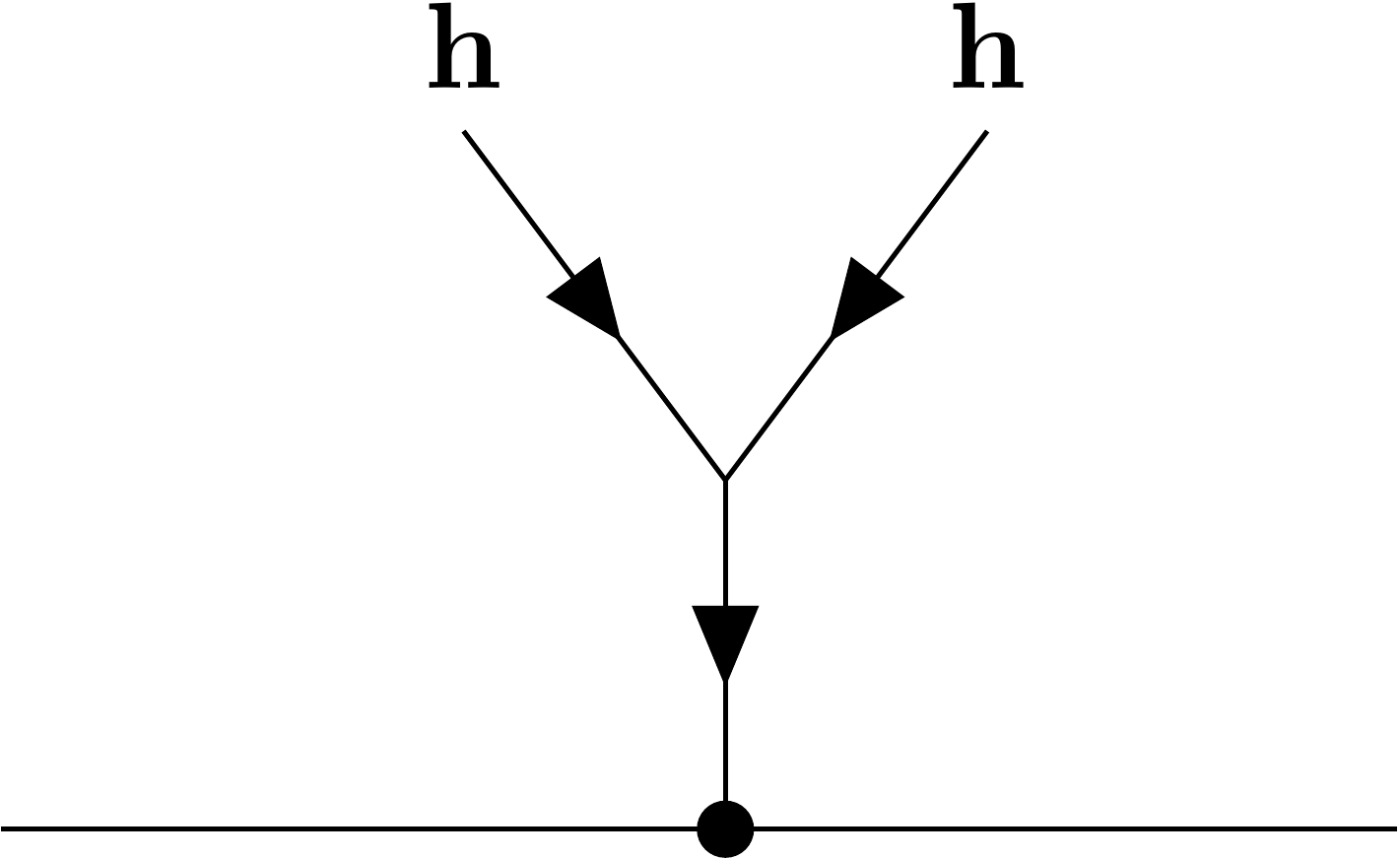}
	\end{subfigure}
	\begin{subfigure}[t]{0.49\textwidth}
		\centering
		\includegraphics[scale=0.3]{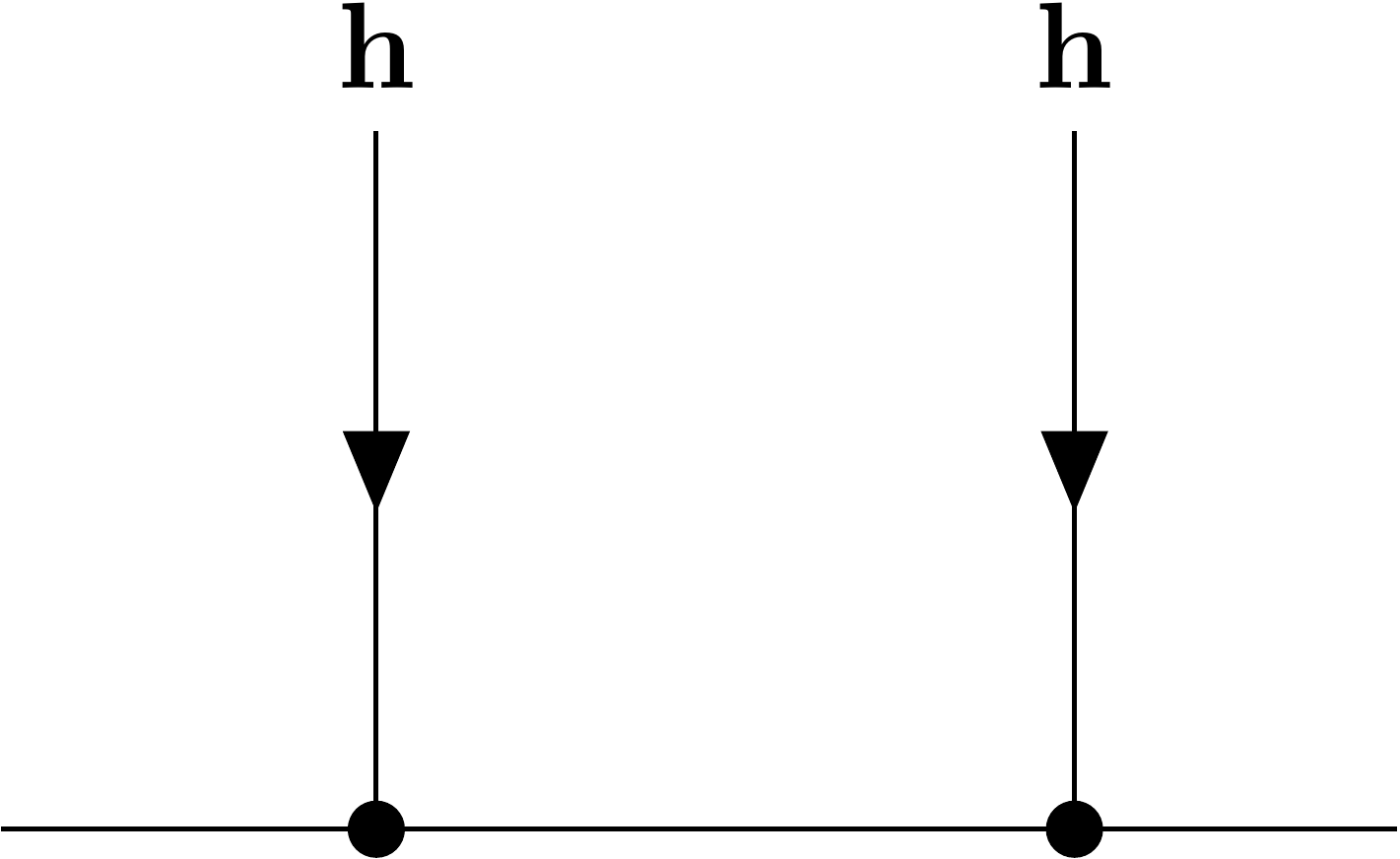}
	\end{subfigure}
	\caption{\emph{Feynman diagrams whose BRST variation determines the classical $w,w$ OPEs. The horizontal line represents the defect supported on $\cL_0$, and the solid dots represent couplings to $w$}.}
	\label{fig:BRST-tree}
\end{figure}

Working classically, we can also directly take the BRST variation of the defect, but it's useful to keep the diagrammatic representation in mind as we do so. Ignoring the $\tilde\bh$ states for the moment, the classical BRST variation of the defect coupling \eqref{eq:defect-gh} is
\bea \label{eq:BRST-tree-nonlinear}
&\delta\Bigg(\sum_{m,n\in\bbZ_{\geq0}}\frac{1}{2\pi\im}\int_{\cL_0}\dif z\,w[m,n](z)\cD_{m,n}\bh\Bigg) \\
&= -\frac{1}{2\pi\im}\sum_{m,n\in\bbZ_{\geq0}}\sum_{p=0}^{m+1}\sum_{q=0}^{n+1}\int_{\cL_0}\dif z\,p(n-q+1)w[m,n](z)\cD_{p,q}\bh\cD_{m+1-p,n+1-q}\bh \\
&= -\frac{1}{4\pi\im}\sum_{p,q,r,s\in\bbZ_{\geq0}}\int_{\cL_0}\dif z\,(ps-qr)w[p+r-1,q+s-1](z)\cD_{p,q}\bh\cD_{r,s}\bh\,.
\eea
In going from the penultimate to final line we've used the fact that $\bh$ is graded odd. The expression above can readily be identified with the BRST variation of the first diagram in figure \ref{fig:BRST-tree}.

This is cancelled by the BRST variation of the second diagram in figure \ref{fig:BRST-tree}, which represents a bilocal term on the defect
\be \delta\Bigg(\frac{1}{2(2\pi\im)^2}\sum_{p,q,r,s\in\bbZ_{\geq0}}\int_{\cL_{0,1}\times\cL_{0,2}}\dif z_1\,w[p,q](z_1)\cD_{p,q}\bh_1\dif z_2\,w[r,s](z_2)\cD_{r,s}\bh_2\Bigg)\,.
\ee
The factor of $1/2$ is present because this term is obtained by expanding an exponential to quadratic order in the path integral. Taking the linearised part of the BRST variation here gives
\be \label{eq:BRST-tree-bilocal} \frac{1}{2(2\pi\im)^2}\sum_{p,q,r,s\geq0}\int_{\cL_{0,2}}\bigg(\lim_{\epsilon\to0}\oint_{|z_{12}|=\epsilon}\dif z_{12}\,w[p,q](z_1)w[r,s](z_2)\bigg)\dif z_2\,\cD_{p,q}\bh_2\cD_{r,s}\bh_2\,. \ee
(The non-linear BRST variation cancels against trilocal terms on the defect.) Comparing \eqref{eq:BRST-tree-nonlinear} to \eqref{eq:BRST-tree-bilocal} we see that they cancel precisely if
\be \frac{1}{2\pi\im}\lim_{\epsilon\to0}\oint_{|z_{12}|=\epsilon}\dif z_{12}\,w[p,q](z_1)w[r,s](z_2) = (ps-qr)w[p+r-1,q+s-1](z_2)\,, \ee
which holds when
\be w[p,q](z)w[r,s](0)\sim\frac{ps-qr}{z}w[p+r-1,q+s-1](0)\,. \ee
By an identical argument we recover the remaining OPEs in \eqref{eq:chiral-algebra}.

The charges in table \ref{tab:classical-generators} can be read off from the defect coupling \eqref{eq:defect-gh}. In particular, dilations on spacetime lift to twistor space as scaling of the $v^\da$ coordinates at fixed $z$. On the other hand, the difference $s-d/2$ gives the eigenvalue under scaling $z$ at fixed $v^\da$.\\

The advantage of viewing the chiral algebra as being the universal surface defect supported on a real twistor line is that it can readily be generalized to loop level. Indeed, in section \ref{sec:defect} we will identify a 1-loop diagram in the coupled bulk-defect system whose BRST variation is non-vanishing, necessitating corrections to the OPEs. We will argue that these can be identified with 1-loop corrections to the holomorphic collinear singularities of amplitudes in 1\textsuperscript{st}-order deformations of SD gravity.


\subsection{Twistorial gravitational anomaly}
\label{subsec:quantum-SDGR}

Thus far our discussion has been completely classical. In this subsection we review the results of \cite{Bittleston:2022nfr} concerning the twistor uplift of SD gravity, which we've seen is Poisson-BF theory, as a quantum field theory.\\

Since Poisson-BF theory is 1-loop exact and holomorphic, it's finite \cite{Williams:2018ows}. However, it can in principle suffer from an anomaly. The dangerous diagram is the box illustrated in figure \ref{fig:PBF-anomaly}.

\begin{figure}[h!]
\centering
  \includegraphics[scale=0.3]{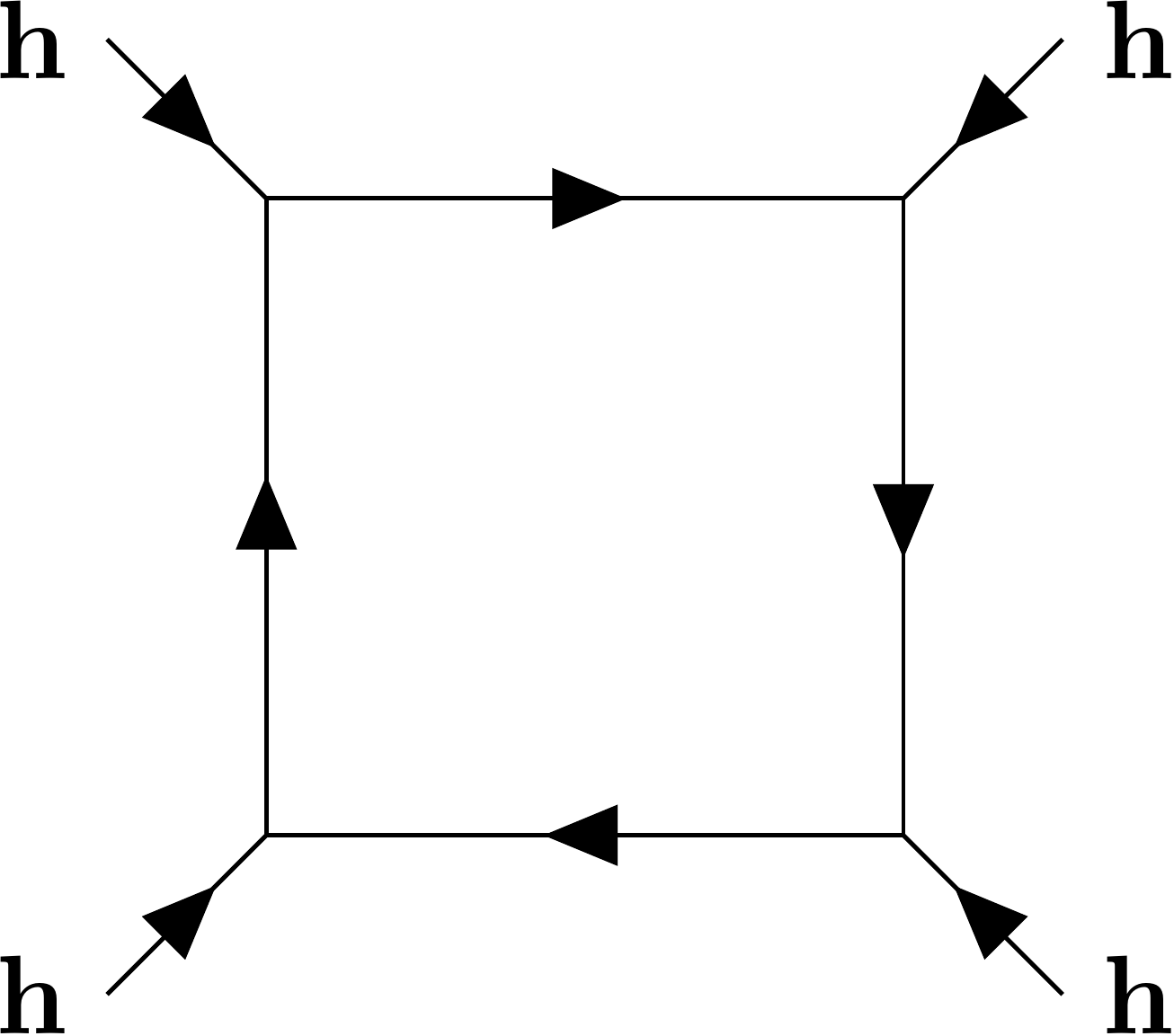}
\caption{\emph{This diagram fails to be BRST invariant in Poisson-BF theory.}} \label{fig:PBF-anomaly}
\end{figure}

In \cite{Bittleston:2022nfr} the BRST variation of this diagram was evaluated using both the family index theorem and by a direct perturbative calculation, and found to be
\be \label{eq:PBF-anomaly} \cA = \frac{1}{4\cdot 5!}\bigg(\frac{\im}{2\pi}\bigg)^3\int_\bbPT\p^\dd\p_\da\bh\,\p^\da\p_\db\p\bh\,\p^\db\p_\dc\p\bh\,\p^\dc\p_\dd\p\bh\,. \ee
This can be more directly understood by introducing
\be \bs^\da_{~\,\db} = - \p^\da\p_\db\bh\,, \ee
which we view as a $(0,\bullet)$-polyform with values in $\End\,\cN$. Under a BRST transformation
\be \delta\bs = \nbar\bs + \frac{1}{2}[\bs,\bs] = \bar\p\bs + \{\bh,\bs\} + \frac{1}{2}[\bs,\bs]\,, \ee
so that $\bs$ is an $\fsl_2(\bbC)$ connection on $\cN$ over the deformed spacetime $\cPT$. In terms of $\bs$ the anomaly cocycle \eqref{eq:PBF-anomaly} is
\be \label{eq:PBF-anomaly-s} \cA = \frac{1}{4\cdot 5!}\bigg(\frac{\im}{2\pi}\bigg)^3\int_\bbPT\tr(\bs\p\bs^3)\,. \ee
This takes a similar form to the anomaly cocycle in holomorphic BF theory for the gauge algebra $\fsl_2(\bbC)$ \cite{Costello:2021bah,Costello:2019jsy}, although the overall coefficient differs.

Following \cite{Costello:2021bah}, we do not interpret this as an anomaly in SD gravity. Indeed, we expect that it can be cancelled by a counterterm which is non-local on $\bbPT$, but which is local on spacetime. Instead it should be viewed as obstructing the integrability of SD gravity. The spacetime counterpart of the box diagram in figure \ref{fig:PBF-anomaly} is the 1-loop 4-point all-plus amplitude, consistent with the proposal of Bardeen that the all-plus amplitudes break integrability \cite{Bardeen:1995gk}.\\

Having identified an anomaly, it's natural to seek a means of cancelling it. A particularly simple method is by coupling to appropriate matter. In general, the twistorial gravitational anomaly is proportional to the difference between the number of bosonic and fermionic degrees of freedom in the theory, so there are many different ways of achieving this. Certainly the anomaly vanishes in theories of SD supergravity, but probably the simplest choice is to minimally couple to a single Weyl fermion on spacetime. The disadvantage of this approach is that it cancels one quantum effect against another, and so cannot be used to compare loop computations against easier tree calculations.

A second method identified in \cite{Bittleston:2022nfr} involves coupling to a field $\bseta\in\Omega^{2,\bullet}(\bbPT)[1]$ obeying $\p\bseta=0$. (We will sometimes write this as $\bseta\in\Omega^{2,\bullet}_\mathrm{cl}(\bbPT)[1]$.) This is motivated by the case of SD Yang-Mills, where a kind of Green-Schwarz mechanism allows for anomaly cancellation \cite{Costello:2021bah}. A variant of this applies in the gravitational case, though it is somewhat more complicated. The kinetic term for $\bseta$ and its classical coupling to $\bh$ is
\be \label{eq:S-eta} S_\mathrm{kin.}[\bseta;\bh] = \frac{1}{4\pi\im}\int_\bbPT\p^{-1}\bseta\,\nbar\bseta = \frac{1}{4\pi\im}\int_\bbPT\p^{-1}\bseta\,\bar\p\bseta + \bseta\,\{\bh,\ \}\ip\bseta\,, \ee
where
\be \nbar \bseta = \bar\p\bseta + \cL_{\{\bh,\ \}}\bseta = \bar\p\bseta - \p(\{\bh,\ \}\ip\bseta)\,. \ee
Here it's essential that we use the $(1,0)$ Lie derivative $\cL_V = [V\ip\,,\p\,]$, and in the second equality we've employed the identity $\p\bseta=0$. The Lie derivative term has no counterpart in the gauge theory case, but is essential to ensure that together \eqref{eq:S-eta} and the Poisson-BF action \eqref{eq:S-PBF-BV} obey the classical master equation.

So far, this gets us no closer to cancelling the anomaly. Indeed, it makes the situation worse. Now the field $\bseta$ can run through the loop in figure \ref{fig:PBF-anomaly}, doubling the coefficient of the anomaly cocycle. This is because $\bseta$ carries twice the degrees of freedom of $\bh$, but its propagator is undirected so the anomaly diagram is accompanied by an extra $1/2$ symmetry factor.

However, we can also introduce a further coupling between $\bh$ and $\bseta$ taking the form
\be S_\mathrm{count.}[\bseta;\bh] = \frac{\mu}{4\pi\im}\int_\bbPT\bseta\,\p^\db\p_\da\bh\,\p^\da\p_\db\p\bh = \frac{\mu}{4\pi\im}\int_\bbPT\bseta\,\tr(\bs\p\bs)\,.
\ee
We refer to this as the Poisson-BF counterterm, since we will use it to cancel the Poisson-BF anomaly. The full action $S = S_\mathrm{PBF} + S_\mathrm{kin.} + S_\mathrm{count.}$ is not invariant under the modified BRST transformations
\bea
&\delta\bh = T(\bh) = \bar\p\bh + \frac{1}{2}\{\bh,\bh\}\,, \\
&\delta\bg = \nbar\bg + \frac{1}{2}\cL_{\p^\da}(\bseta\,\p_\da\ip\bseta) - \mu\,\cL_{\p^\db}\cL_{\p_\da}\bseta\,\p^\da\p_\db\p\bh \,, \\
&\delta\bseta = \nbar\bseta - \frac{\mu}{2}\tr(\p\bs^2)\,.
\eea
Instead we find that
\be \delta S = - \frac{\mu^2}{8\pi\im}\int_{\bbPT}\tr(\bs\p\bs)\,\tr(\p\bs^2)\,. \ee
By tuning the value of $\mu$ we can use this failure to precisely cancel the Poisson-BF anomaly \eqref{eq:PBF-anomaly-s}. This relies on the following trace identity for the fundamental of $\fsl_2(\bbC)$
\be \tr(X^4) = \frac{1}{2}\tr(X^2)^2\,, \ee
and the required value of $\mu$ is
\be \mu^2 = \frac{1}{5!}\bigg(\frac{\im}{2\pi}\bigg)^2 \,. \ee
The advantage of this approach is that it involves cancelling a 1-loop effect against a tree level counterterm. In \cite{Costello:2022upu} this same feature was exploited in the gauge theory case to determine 1-loop corrections to the extended CCA without performing explicit diagram computations. Unfortunately the covariant $\bseta,\bh$ coupling makes this more challenging in the gravitational case, though we will see in section \ref{sec:cure} that it can still be leveraged to constrain the chiral algebra.\\

In full, the twistor action is
\be \label{eq:S-full} S[\bseta;\bg,\bh] = \frac{1}{2\pi\im}\int_\bbPT\bg\,T(\bh) + \frac{1}{4\pi\im}\int_\bbPT\big(\p^{-1}\bseta\,\nbar\bseta + \mu\,\bseta\,\tr(\bs\p\bs)\big)\,. \ee
It was shown in \cite{Mason:FAv1} that the field $\bseta$ descends to a 4\textsuperscript{th}-order scalar on spacetime, which at the linearised level can be recovered as
\be \rho = \frac{1}{2\pi\im}\int_\bbPT\p^{-1}\bseta\,. \ee
The freedom to add $\p$-exact terms to $\p^{-1}\bseta$ leads to an ambiguity $\rho\sim\rho + C$ for $C$ constant, so the spacetime theory can only depend on $\rho$ through $\dif\rho$.

It was further argued in \cite{Bittleston:2022nfr} that, upon restriction to the physical fields, the twistor action \eqref{eq:S-full} descends to the following action on spacetime
\be \label{eq:SDGR+axion} S_{\mathrm{SDGR}+\rho}[\rho;\Gamma,e] = S_\mathrm{SDGR}[\Gamma,e] + \int_{\bbR^4}\bigg(\vol_g\,\frac{1}{2}(\Delta_g\rho)^2 + \frac{\mu}{\sqrt{2}}\,\rho \,R^\mu_{~\,\nu}\wedge R^\nu_{~\,\mu}\bigg)\,. \ee
Here $R^\mu_{~\nu}$ is the Riemann curvature tensor, viewed as a 2-form with values in endomorphisms of the tangent bundle. $R^\mu_{~\,\nu}\wedge R^\nu_{~\,\mu}$ is the Pontryagin class, revealing $\rho$ to be a kind of gravitational axion.\footnote{On the support of the classical equations of motion the Pontryagin and Euler classes coincide, so this coupling is unambiguous modulo field redefinitions.} Note that this coupling respects the redundancy $\rho\sim\rho+C$.

As is well known, the only non-vanishing amplitudes in SD gravity, aside from the 3-point tree, are the 1-loop all-plus amplitudes. We've already seen that the anomalous box diagram on twistor space can be identified with the 1-loop 4-point all-plus amplitude on spacetime. It should therefore come as no surprise that by cancelling the anomaly we eliminate this amplitude. It was further argued in \cite{Bittleston:2022nfr} that the only non-vanishing amplitudes of the theory \eqref{eq:SDGR+axion} are 3-point trees, with the 1-loop all-plus amplitudes all being cancelled by axion exchange.


\section{1-loop holomorphic splitting amplitudes in gravity} \label{sec:splitting}

In this section we massage the effective 1-loop graviton vertex of \cite{Brandhuber:2007up} into a 1-loop holomorphic splitting amplitude. It characterises the physical double pole arising in the holomorphic collinear limit of a generic 1-loop graviton amplitude. We argue that it deforms the extended CCA at 1-loop, and then go on to compute the resulting corrections to operator products.


\subsection{Holomorphic collinear limit} \label{subsec:collinear-limit}

First, let us explain what we mean by the holomorphic collinear limit, which appears in \cite{Ball:2021tmb}.\\

Consider an $n$-point graviton amplitude $\cM_n(\{p_i^{J_i}\}_{i=1}^n)$ where $p_i,J_i$ are the (complexified) momenta and helicities of the external gravitons. Since gravity is a massless theory the $p_i$ are null, and we can always decompose
\be p_i^{\alpha\da} = \lambda_i^\alpha{\tilde\lambda}_i^\da\,, \ee
where we're implicitly using the Van der Waerden symbols. $\lambda_i,{\tilde\lambda}_i$ are known as spinor-helicity variables. It's standard to express amplitudes with definite helicity configurations in terms of the $\gSL_2(\bbC)_+\times\gSL_2(\bbC)_-$ invariant combinations
\be \la ij\ra = \sqrt{2}\epsilon_{\alpha\beta}\lambda_i^\alpha\lambda_j^\beta\,,\qquad [ij] = -\sqrt{2}\epsilon_{\da\db}{\tilde\lambda}_i^\da{\tilde\lambda}_j^\db\,. \ee
These have been normalized so that $(p_i + p_j)^2 = 2p_i\cdot p_j = \la ij\ra[ji]$, matching standard conventions.\\

In the true collinear limit of an amplitude we specify two external momenta, without loss of generality $p_1,p_2$, and take
\be p_1\to tP_{12}\,,\qquad p_2\to (1-t)P_{12} \ee
for $P_{12} = p_1 + p_2$ and some parameter $t\in\bbC$. (For real momenta in Lorentzian or ultrahyperbolic signature we have $t\in\bbR$.) In terms of the spinor-helicity variables
\bea \label{eq:true}
&\lambda_1\to \sqrt{2t}\lambda_{P_{12}}\,,\qquad \lambda_2\to\sqrt{2(1-t)}\lambda_{P_{12}}\,, \\ 
&{\tilde\lambda}_1\to\sqrt{\frac{t}{2}}{\tilde\lambda}_{P_{12}}\,,\qquad {\tilde\lambda}_2\to\sqrt{\frac{(1-t)}{2}}{\tilde\lambda}_{P_{12}}\,,
\eea
so that $\la12\ra,[12]\to0$ separately, and importantly these limits are taken at the same rate. (Our slightly unconventional normalizations here have been chosen for later convenience.) Usually the true limit is taken in Lorentzian signature, where $[12] \propto \overline{\la12\ra}$. In the true collinear limit a tree graviton amplitude factorizes as
\be \cM^\mathrm{tree}_n(\{p_i^{J_i}\}_{i=1}^n) \to \mathrm{Split}^\mathrm{tree}_\pm(p_1^{J_1},p_2^{J_2};t)\,\cM^\mathrm{tree}_{n-1}(P_{12}^\mp,\{p_i^{J_i}\}_{i=3}^n) \ee
for the tree graviton splitting amplitudes
\be \label{eq:tree-splitting} \mathrm{Split}^\mathrm{tree}_-(1^+,2^+;t) = - \frac{1}{4t(1-t)}\frac{[12]}{\la12\ra}\,,\qquad \mathrm{Split}^\mathrm{tree}_+(1^-,2^+;t) = - \frac{4t^3}{(1-t)}\frac{[12]}{\la12\ra} \ee
originally computed in \cite{Bern:1998xc}.\footnote{In this work we strip the coupling constant $\kappa^2=32\pi G_\mathrm{Newton}$ from amplitudes. It can easily be reintroduced: $n$-point trees are accompanied by $(\kappa/2)^{n-2}$ and $n$-point loops by $(\kappa/2)^n$.} In the true limit these are not actually singular, but acquire a phase as $p_1,p_2$ rotate around each other at fixed $P_{12}$. It's a remarkable result of \cite{Bern:1998sv} that a 1-loop graviton amplitude also decomposes as
\be
\cM^\mathrm{1-loop}_n(\{p_i^{J_i}\}_{i=1}^n) \to \mathrm{Split}^\mathrm{tree}_\pm(p_1^{J_1},p_2^{J_2};t)\,\cM^\mathrm{1-loop}_{n-1}(P_{12}^\mp,\{p_i^{J_i}\}_{i=3}^n)\,.
\ee
In other words, there are no 1-loop splitting amplitudes in gravity, at least in the true limit.\\

In the holomorphic collinear limit we again specify two external momenta, $p_1,p_2$, but instead take only
\be \label{eq:lambda-P} \lambda_1\to\sqrt{2t}\lambda_{P_{12}}\,,\qquad \lambda_2\to\sqrt{2(1-t)}\lambda_{P_{12}}\,, \ee
so that $\la12\ra\to0$, whilst holding $[12]$ fixed. Note that in this limit the momenta $p_1,p_2$ do not become collinear, but instead span a totally null 2-plane. Only the spinor-helicity variables $\lambda_1,\lambda_2$ become linearly dependent. In Lorentzian signature the reality condition $[12]\propto\overline{\la12\ra}$ prevents us from taking the holomorphic limit, but there is no such issue in the complexified or ultrahyperbolic setting. Let 
\be \label{eq:mu-P} {\tilde\lambda}_{P_{12}} = \sqrt{2t}{\tilde\lambda}_1 + \sqrt{2(1-t)}{\tilde\lambda}_2\,, \ee
so that in the limit $P^{\alpha\da}_{12} = \lambda_{P_{12}}^\alpha{\tilde\lambda}_{P_{12}}^\da$. Our conventions have been chosen so that for $t=1/2$ we have ${\tilde\lambda}_{P_{12}} = {\tilde\lambda}_1 + {\tilde\lambda}_2$, and to ensure consistency with the true limit \eqref{eq:true}.

In the next subsection we'll see that as the momenta $p_1,p_2$ of two positive helicity gravitons become holomorphically collinear, a generic 1-loop graviton amplitude acquires a double pole
\be
\cM^\mathrm{1-loop}_n(p_1^+,p_2^+,\{p_i^{J_i}\}_{i=2}^n) \sim \mathrm{Split}^\mathrm{1-loop}_+(p_1^+,p_2^+;t)\,\cM^\mathrm{tree}_{n-1}(P_{12}^-,\{p_i^{J_i}\}_{i=3}^n)\,,
\ee
encoded in a `1-loop holomorphic splitting amplitude' $\mathrm{Split}^\mathrm{1-loop}_+(p_1^+,p_2^+;t)$ which can be identified with the 1-loop effective vertex. In order for this to be consistent with \cite{Bern:1998sv}, in the true limit we must have
\be \mathrm{Split}^\mathrm{1-loop}_+(p_1^+,p_2^+;t)\to 0\,. \ee


\subsection{1-loop holomorphic splitting amplitude} \label{subsec:effective-vertex}

The holomorphic collinear singularities in a generic 1-loop graviton amplitude arise either from an internal propagator going on-shell, or from the loop integration. As the momenta $p_1,p_2$ of two positive helicity massless particles become homomorphically collinear, it's possible for both of these sources to contribute, leading to a double pole. The relevant diagram is illustrated in figure \ref{fig:splitting-again} \cite{Brandhuber:2007up,Dunbar:2010xk,Dunbar:2011xw,Alston:2012xd}.

\begin{figure}[h!]
	\centering
	\includegraphics[scale=0.3]{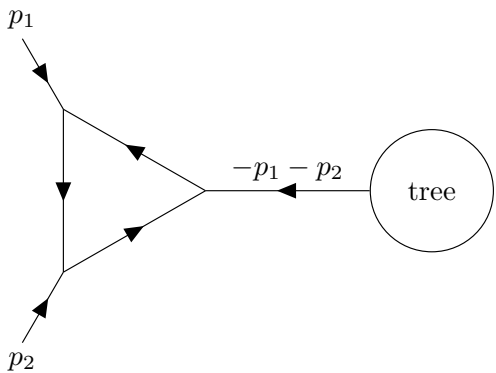}
	\caption{\emph{This diagram is responsible for double poles in 1-loop amplitudes in the holomorphic collinear limit. Arrows indicate the flow of momenta and helicity through the diagram.}} \label{fig:splitting-again}
\end{figure}

Here the internal propagator with momentum $-p_1-p_2$ is off-shell. This is why it gives a non-vanishing result, whereas the 1-loop 3-point all-plus amplitude vanishes.

The only non-vanishing amplitudes in SD gravity, aside from the 3-point trees, are the 1-loop all-plus amplitudes. These do not acquire double poles in their holomorphic collinear limits: although the off-shell triangle in figure \ref{fig:splitting-again} involves only vertices present in the SD theory, the 1-minus trees all vanish.

Instead, double poles first arise in the rational 1-loop mostly-plus amplitudes. For more general helicity configurations, 1-loop graviton amplitudes have discontinuities at branch cuts. These can always be collected in a set of box, triangle and bubble integrals with rational coefficients, which themselves can be determined by generalized unitarity. The remaining rational part of the amplitude then has unphysical poles compensating those in the cut constructible piece \cite{Bern:1994cg,Dunbar:2011xw}. This complicated cancellation can obscure the physical singularities, and for this reason the rational 1-loop mostly-plus amplitudes are those in which the double pole is most clearly identifiable.\\

The factorization of amplitudes at complex kinematic points can be exploited to reconstruct them recursively. Indeed, this is essentially the basis of BCFW recursion for Yang-Mills tree amplitudes \cite{Britto:2005fq}, which can be extended to the case of gravity.\footnote{It is notable that the MHV rules for gravity \cite{Bjerrum-Bohr:2005xoa,Mason:2009afn}, which can be obtained from a particular BCFW shift of the negative helicity momentum variables, fail for NMHV amplitudes at 12-points and above \cite{Benincasa:2007qj,Bianchi:2008pu,Conde:2012ik} due to the presence of poles at infinity.}

Double poles complicate the application of recursion methods to 1-loop amplitudes, because they hide poorly understood simple poles. However, in this work we will be most interested the double poles themselves. These have been characterised using a 1-loop effective graviton vertex describing the (off-shell) 1-loop 3-point all-plus amplitude \cite{Brandhuber:2007up} (see also \cite{Dunbar:2010xk,Alston:2012xd})
\be \label{eq:effective-graviton-vertex} \cM_3^\mathrm{1-loop}(1^+,2^+,3^+) = - \frac{\im}{180(4\pi)^2}\frac{[12]^2[23]^2[31]^2}{P_{12}^2} = \frac{\im}{180(4\pi)^2}\frac{[12][23]^2[31]^2}{\la12\ra}\,. \ee
The simple pole is generated by the loop integration. This vertex determines a `1-loop holomorphic splitting amplitude'
\be \label{eq:holomorphic-splitting} \Split^\mathrm{1-loop}_+(1^+,2^+;t) = \frac{\im V^\mathrm{1-loop}_3(1^+,2^+,P_{12}^+)}{P_{12}^2} = \frac{4t(1-t)}{180(4\pi)^2}\frac{[12]^4}{\la12\ra^2}\,, \ee
where we've included the propagator connecting the triangle and tree in figure \ref{fig:splitting-again}. As expected, it vanishes in the true limit. For completeness, we show explicitly in appendix \ref{app:5pt-amplitude} that this splitting amplitude arises in the holomorphic collinear limit of the 5-point 1-loop mostly-plus amplitude. We will explain shortly why this splitting amplitude deforms the extended celestial algebra.\\

Although this work is largely concerned with the double poles in the 1-loop corrected extended CCA, we will also find simple poles which are bilinear in generators. We expect these are related to the simple poles beneath double poles, which are responsible for much of the difficulty in applying recursion techniques to 1-loop amplitudes. In the case of the 1-loop mostly plus amplitudes these subleading simple poles have been computed in \cite{Dunbar:2010xk,Alston:2015gea}.\\

In subsection \ref{subsec:splitting-deformation}, we compute the 1-loop corrections to operator products of $\cU$ induced by the 1-loop holomorphic splitting amplitude \eqref{eq:holomorphic-splitting}. Before doing so, let's address why the deformation occurs.


\subsection{Why is the extended celestial chiral algebra deformed?} \label{subsec:interpretation}

As shown in \cite{Ball:2021tmb}, only tree splitting amplitudes contribute in the holomorphic collinear limits of 1-loop all-plus graviton amplitudes. The CCA of SD gravity is therefore undeformed by quantum corrections.\\

Why then, does the 1-loop holomorphic splitting amplitude \eqref{eq:holomorphic-splitting} deform the extended CCA? To answer this question, we first need to understand which collinear singularities the extended CCA is describing.\footnote{We would particularly like to thank Kevin Costello for many useful discussions on this point.}\\

We can argue by analogy with the case of SD Yang-Mills, as analysed in detail in \cite{Costello:2022wso,Costello:2022upu}. It was proven in \cite{Costello:2022wso} that the operator products of the classical extended CCA of SD Yang-Mills describe the universal holomorphic collinear singularities of tree form factors, i.e., tree scattering amplitudes in the presence of local operators. Here by universal we mean that the singularities are independent of the choice of local operator. In a general field theory the holomorphic collinear singularities of form factors will certainly not be universal, let's see why this is the case at tree level in SD Yang-Mills.

The collinear singularities of $n$-point tree form factors arise from internal propagators going on-shell. They can be attributed to factorizing diagrams in which the momenta becoming collinear are attached to a 3-point vertex connected by a singular propagator to an $(n-1)$-point tree, as illustrated in figure \ref{fig:tree-factorization}. However, SD Yang-Mills has only trivial tree amplitudes, so in any such diagram the $(n-1)$-point tree must involve the local operator. The holomorphic collinear singularities are therefore universal. Since the classical OPEs of the extended CCA simply encode the tree vertices present in the action, they characterise this behaviour.

\begin{figure}[h!]
	\centering
	\begin{subfigure}[t]{0.49\textwidth}
		\centering
		\includegraphics[scale=0.3]{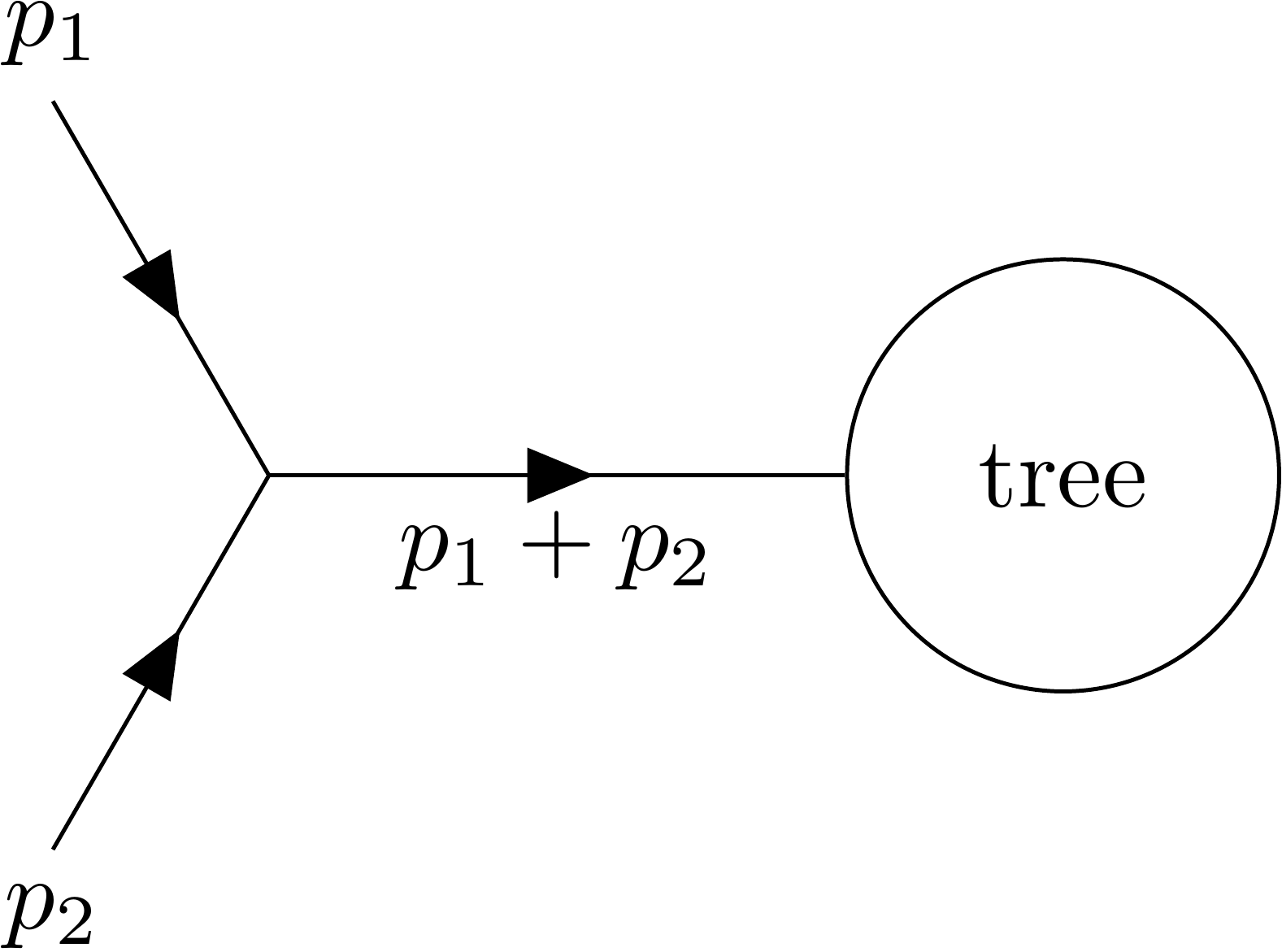}
		\label{subfig:plusplustoplus}
	\end{subfigure}
	\begin{subfigure}[t]{0.49\textwidth}
		\centering
		\includegraphics[scale=0.3]{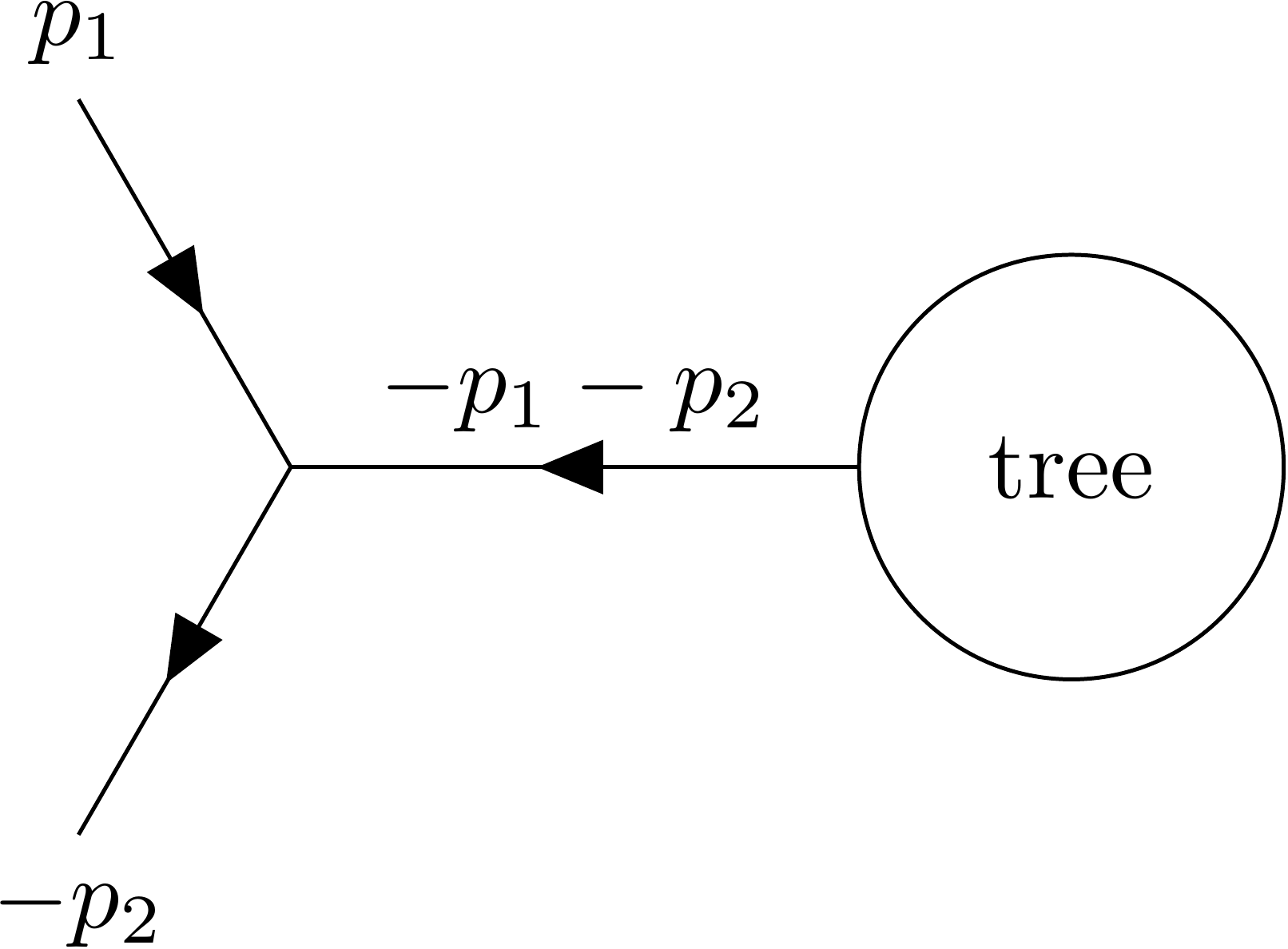}
		\label{subfig:plusminustominus}
	\end{subfigure}
	\caption{\emph{These diagrams are responsible for the holomorphic collinear singularities of tree form factors in SD Yang-Mills, and tree amplitudes in 1\textsuperscript{st}-order deformations of SD gravity. Since the tree amplitudes in both SD Yang-Mills and SD gravity are trivial, the blob labelled `tree' must involve the local operator or deformation. The collinear singularities are therefore universal.}}
	\label{fig:tree-factorization}
\end{figure}

Full Yang-Mills can be written as a perturbation around its SD sector, so the integrals of certain tree form factors constitute tree amplitudes in the non-SD theory. This explains why the classical extended CCA describes the holomorphic collinear singularities of tree gluon amplitudes which do not arise in SD Yang-Mills.\\

In SD gravity the above interpretation of the classical extended CCA is complicated by the fact that there are no BRST invariant local operators, at least of ghost number zero. We can instead consider 4-form local operators whose BRST variation is de Rham exact, which are related to BRST invariant local operators of higher ghost number by anomaly ascent. (These can also be thought of as local operators with BRST variation a total derivative.) Such an operator, $\cO$ say, determines a 1\textsuperscript{st}-order deformation of SD gravity
\be S_\mathrm{SDGR}[\Gamma,e] \mapsto S_\mathrm{SDGR}[\Gamma,e] + \epsilon\int_{\bbR^4}\cO\,. \ee
Expanding to 1\textsuperscript{st}-order in the infinitesimal parameter $\epsilon$ inside the path integral, the operator $\cO$ plays the role of an interaction vertex which appears in Feynman diagrams once. The operator products in the classical extended CCA of SD gravity describe the universal holomorphic collinear limits of tree amplitudes in the presence of such 1\textsuperscript{st}-order deformations. The universality of this behaviour again rests on the vanishing of all non-trivial tree amplitudes. Full Einstein gravity can be written as a perturbation around its SD sector, so the classical extended CCA describes the holomorphic collinear singularities of tree graviton amplitudes which are not present in the SD theory.\\

In \cite{Costello:2022upu}, the authors explored whether the collinear singularities of form factors in quantum SD Yang-Mills define a consistent chiral algebra. By analogy, we should seek a quantum deformation of the classical extended CCA whose OPEs encode the holomorphic collinear limits of amplitudes in 1\textsuperscript{st}-order deformations of quantum SD gravity. Diagrams of the form illustrated in figure \ref{fig:splitting-again} can certainly contribute in these limits, where the blob labelled `tree' now represents an amplitude in the 1\textsuperscript{st}-order deformation. The 1-loop effective vertex therefore deforms the extended classical CCA.

The non-vanishing of the 1-loop all-plus amplitudes in SD gravity mean that the holomorphic collinear singularities of amplitudes in its 1\textsuperscript{st}-order deformations are not universal. We will ultimately find that these 1-loop amplitudes are the source of the associativity failure in the quantum deformation of the extended CCA. This point will be addressed in detail in section \ref{sec:associativity}.\\

Motivated by the results of \cite{Costello:2022wso}, in section \ref{sec:defect} we will argue that the extended CCA has a concrete realisation as the universal holomorphic surface defect in the twistor formulation of SD gravity. We will see that an anomalous diagram in the coupled bulk-defect system necessitates precisely the corrections to OPEs induced by the 1-loop holomorphic splitting amplitude, together with subleading corrections.


\subsection{1-loop corrections to operator products from the splitting amplitude} \label{subsec:splitting-deformation}

In this subsection we determine the corrections to the OPEs of the classical extended CCA induced by the 1-loop holomorphic splitting amplitude from equation \eqref{eq:holomorphic-splitting}.\\

Following \cite{Costello:2022upu}, to obtain the OPEs of a CCA from splitting amplitudes we first assemble the states into generating functions
\be \label{eq:hard-modes}
w[{\tilde\lambda}](z) = \sum_{m,n\in\bbZ^+_0}\frac{({\tilde\lambda}^1)^m({\tilde\lambda}^2)^n}{m!\,n!}w[m,n](z)\,,\qquad \tilde w[{\tilde\lambda}](z) = \sum_{m,n\in\bbZ^+_0}\frac{({\tilde\lambda}^1)^m({\tilde\lambda}^2)^n}{m!\,n!}\tilde w[m,n](z)\,.
\ee
The coordinate $z$ on the celestial sphere determines a left-handed spinor-helicity variable by $\lambda = (1\ z)$, and ${\tilde\lambda}$ can be interpreted as a right-handed spinor-helicity variable. Since we've fixed a normalization for $\lambda$, the scale of ${\tilde\lambda}$ can be identified with the energy of $p^{\alpha\da} = \lambda^\alpha{\tilde\lambda}^\da$. Then, up to appropriate powers of the energy, the generating functions in equation \eqref{eq:hard-modes} are the positive- and negative-helicity hard graviton operators dual to states with momentum $p$.

The tree splitting amplitudes \eqref{eq:tree-splitting} determine the chiral algebra $\cU$ through the identities
\bea
w[{\tilde\lambda}_1](z_1)w[{\tilde\lambda}_2](z_2)&\sim \Split^\mathrm{tree}_-(1^+,2^+;1/2) w[{\tilde\lambda}_1+{\tilde\lambda}_2](z_2)\,, \\
w[{\tilde\lambda}_1](z_1)\tilde w[{\tilde\lambda}_2](z_2)&\sim \Split^\mathrm{tree}_+(1^+,2^-;1/2) \tilde w[{\tilde\lambda}_1+{\tilde\lambda}_2](z_2)\,,
\eea
where $\la12\ra=\sqrt{2}z_{12}$, $[12] = -\sqrt{2}\epsilon_{\da\db}{\tilde\lambda}_1^\da{\tilde\lambda}_2^\db$. Indeed, expanding out the above identities readily reproduces the OPEs from equation \eqref{eq:chiral-algebra}.\\

The holomorphic 1-loop splitting amplitude deforms these by
\be w[{\tilde\lambda}_1](z_1)w[{\tilde\lambda}_2](z_2)\sim \Split_+^\mathrm{1-loop}(1^+,2^+;1/2)\tilde w[{\tilde\lambda}_1+{\tilde\lambda}_2]((z_1+z_2)/2)\,, \ee
where symmetry forces us to evaluate the state $\tilde w[{\tilde\lambda}_1+{\tilde\lambda}_2]$ at $(z_1+z_2)/2$ on the r.h.s. Decomposing into soft modes using equations \eqref{eq:hard-modes} we find that 
\bea &\sum_{p,q,r,s\in\bbZ^+_0}\frac{({\tilde\lambda}_1^1)^p({\tilde\lambda}_1^2)^q({\tilde\lambda}_2^1)^r({\tilde\lambda}_2^2)^s}{p!\,q!\,r!\,s!}w[p,q](z_1)w[r,s](z_2) \\
&\sim \frac{2}{5\pi^2z_{12}^2}\sum_{p,q,r,s\in\bbZ_0^+}\frac{R_4(p,q,r,s)}{(4!)^2}\frac{({\tilde\lambda}_1^1)^p({\tilde\lambda}_1^2)^q({\tilde\lambda}_2^1)^r(\tilde\lambda_2^2)^s}{p!\,q!\,r!\,s!}\tilde w[p+r-4,q+s-4]((z_1+z_2)/2)\,.
\eea
Here
\be \label{eq:Rpqrs}
R_\ell(p,q,r,s) = \sum_{k=0}^\ell(-)^k {\ell\choose k} [p]_{\ell-k}[q]_k[r]_k[s]_{\ell-k}\,,
\ee
for $[x]_n = x(x-1)\dots(x-n+1)$ the descending Pochhammer symbol. This object can be understood as intertwining $\fsl_2(\bbC)_+$ representations
\be (\mathbf{p+q+1})\otimes(\mathbf{r+s+1})\to (\mathbf{p+q+r+s+1-2\boldsymbol{\ell}})\,. \ee
We therefore find that the classical OPEs receive a 1-loop correction
\bea \label{eq:full-double-pole}
&w[p,q](z)w[r,s](0) \\
&\sim\frac{2}{5\pi^2}\frac{R_4(p,q,r,s)}{(4!)^2}\bigg(\frac{1}{z^2}\tilde w[p+r-4,q+s-4] + \frac{1}{2z}\p_z\tilde w[p+r-4,q+s-4]\bigg)(0)\,.
\eea
The first non-trivial deformation occurs with $p+q=r+s=4$, and is determined up to the action of $\fsl_2(\bbC)_+$ by
\be \label{eq:first-correction} w[4,0](z)w[0,4](0)\sim
\frac{2}{5\pi^2}\bigg(\frac{1}{z^2}\tilde w[0,0] + \frac{1}{2z}\p_z\tilde w[0,0]\bigg)(0)\,. \ee
In fact, this single OPE necessitates the full deformation \eqref{eq:full-double-pole}. This is because the $\fsl_2(\bbC)_+$ symmetry admits a split extension by the Heisenberg algebra composed of the zero-modes of $w[m,n]$ with $m+n\leq 1$. Each tower of states furnishes a representation of this algebra, and the $R_\ell(p,q,r,s)$ intertwine these representations. Details can be found in appendix \ref{app:extended-sl2C}.\\

We emphasise that \eqref{eq:full-double-pole} is not necessarily the full 1-loop correction to the chiral algebra. Indeed, we will argue in the next section that there can be subleading simple poles which are bilinear in generators. The interpretation of these terms is subtle, and will be addressed in section \ref{sec:cure}.


\section{Associativity of the operator product}
\label{sec:associativity}

In this section analyse the most general 1-loop corrections to the operator products of the extended CCA compatible with the symmetries of SD gravity. In addition to double poles of the same form as those induced by the 1-loop splitting amplitude, we identify subleading simple poles which can be bilinear in generators.

We then analyse associativity of the operator product in the quantum deformation of the extended CCA, and find that it is violated. We argue that this failure can be attributed to the presence of the 1-loop all-plus amplitudes in SD gravity.

It is essential that we identify the most general 1-loop corrections to OPEs before checking associativity, to ensure that subleading simple poles do not contribute.


\subsection{Characterising 1-loop corrections} \label{subsec:possible-corrections}

In section \ref{sec:splitting} we argued the extended CCA of SD gravity, whose OPEs describe the holomorphic collinear singularities of amplitudes in first 1\textsuperscript{st}-order deformations of the theory, receives quantum corrections. The simplest of these can be attributed to the 1-loop effective vertex characterising double poles in the holomorphic collinear limits of 1-loop graviton amplitudes. In this section we determine the most general 1-loop corrections to the singular parts of OPEs compatible with the symmetries of the theory.\\

Consider the 1-loop correction to the OPE of two generators $\cO_1,\cO_2$. We assume that it takes the form
\be \label{eq:general-correction} \cO_1(z)\cO_2(0) \sim \sum_{N\geq n>0} \frac{1}{z^n}\cO^{(n)}_3(0)\,, \ee
for $N\in\bbZ_{\geq0}$ and $\cO^{(n)}_3$ a sum of normal ordered products of generators and their holomorphic derivatives. We will argue that these objects are tightly constrained by symmetry.\\

All terms in $\cO^{(n)}_3$ must contain one more copy of $\tilde w$ or its derivatives than appear in that pair $\cO_1,\cO_2$. This is most easily seen by reintroducing $\hbar$ into the theory. Since the $\cO^{(n)}_3$ represent 1-loop corrections, they are all accompanied by an explicit factor of $\hbar$. SD gravity is invariant under simultaneous rescalings of $\hbar$ and the negative-helicity Lagrange multiplier field $\Gamma$. Demanding equivariance of equation \eqref{eq:general-correction} under this symmetry gives the claimed result.\\

Next let's restrict to particular cases. First we take both $\cO_1,\cO_2$ to be $w$ states, so that the $\cO^{(3)}_n$ are linear in $\tilde w$ and its derivatives. The number of occurrences of $w$ and its derivatives can be determined by performing simultaneous dilations $z\mapsto r^{-1}z$ on the celestial sphere and $x\mapsto r^{1/2}x$ on spacetime, corresponding to rescaling the coordinate $z$ on twistor space at fixed $v^\da$. Under this symmetry $w,\tilde w$ have charges $1,0$ respectively, and the holomorphic derivative $\p_z$ has charge $1$. equivariance of \eqref{eq:general-correction} requires that the only non-vanishing terms are $\cO^{(2)}_3$, which must be a linear combination of $\tilde w$ states, and $\cO^{(1)}_3$, which is a linear combination of the derivatives $\p_z\tilde w$ and normal ordered products $\normord{w\tilde w}$.

We now impose $\fsl_2(\bbC)_+$ equivariance and dimension matching, i.e., spacetime dilation equivariance. These impose stringent restrictions. For example, they imply that 1-loop corrections to OPEs involving $w[m,n]$ for $m+n\leq3$ involve at worst simple poles, and for $m+n\leq2$ necessarily vanish. In particular, OPEs involving the generators of supertranslations and superrotations are unmodified. Proofs of these results can be found in appendix \ref{app:extended-sl2C}. From here we proceed by identifying the possible 1-loop corrections to the OPEs of low-lying generators, i.e., those transforming in $\fsl_2(\bbC)_+$ representations of small dimension.

The first potential 1-loop corrections are to the OPEs between generators $w[m,n]$ for $m+n=3$. They're determined up to the action of $\fsl_2(\bbC)_+$ by
\be \label{eq:first-simple-pole} w[3,0](z)w[0,3](0)\sim \frac{\beta_{4,4}}{z}w[0,0]\tilde w[0,0](0)\,, \ee
for some $\beta_{4,4}\in\bbC$. The next are between $w[p,q]$ with $p+q=4$ and $w[r,s]$ with $r+s=3$, and are determined by
\be \label{eq:second-simple-pole} w[4,0](z)w[0,3](0)\sim \frac{1}{z}\big(\beta^{2,1}_{5,4}w[1,0]\tilde w[0,0] + \beta_{5,4}^{1,2}w[0,0]\tilde w[1,0]\big)(0) \ee
for constants $\beta_{5,4}^{2,1},\beta_{5,4}^{1,2}\in\bbC$. The first potential double poles appear in the OPEs between generators $w[m,n]$ for $m+n=4$. They're determined by
\bea \label{eq:first-double-pole}
&w[4,0](z)w[0,4](0)\sim \alpha\bigg(\frac{1}{z^2}\tilde w[0,0] + \frac{1}{2z}\p_z\tilde w[0,0]\bigg)(0) + \frac{1}{z}\big(\beta^{3,1}_{5,5}w[1,1]\tilde w[0,0] \\
&+ \beta^{2,2}_{5,5}(\normord{w[1,0]\tilde w[0,1]} + \normord{w[0,1]\tilde w[1,0]}) + \beta^{1,3}_{5,5}w[0,0]\tilde w[1,1]\big)(0)
\eea
for constants $\alpha,\beta_{5,5}^{3,1},\beta_{5,5}^{2,2},\beta_{5,5}^{1,3}\in\bbC$. Note that the coefficient of $\p_z\tilde w[0,0]$ is fixed in terms of $\alpha$ by symmetry. We will find that these simple 1-loop OPEs already capture the essential features of the deformed algebra.\\

Second let's take $\cO_1$ to be a $w$ state, and $\cO_2$ to be a $\tilde w$ state. Equivariance under the combined dilation discussed above shows that the OPE can involve at worst simple poles, and that $\cO^{(1)}_3$ is a linear combination of products $\tilde w\tilde w$. (Since the classical OPEs between $\tilde w$ generators vanish, there is no need to normal order.) Arguments from appendix \ref{app:extended-sl2C} can be adapted to show that OPEs involving $\tilde w[m,n]$ for $m+n\leq 2$ are unmodified. Therefore the first non-trivial 1-loop corrections are determined by
\bea \label{eq:tilde-first-simple-poles}
&w[3,0](z)\tilde w[0,3](0)\sim \frac{1}{z}\tilde\beta_{4,4}\tilde w[0,0]\tilde w[0,0](0)\,, \\
&w[4,0](z)\tilde w[0,3](0)\sim \frac{1}{z}\tilde\beta^{2,1}_{5,4}\tilde w[1,0]\tilde w[0,0](0)\,, \\
&w[3,0](z)\tilde w[0,4](0)\sim \frac{1}{z}\tilde\beta^{2,1}_{4,5}\tilde w[0,1]\tilde w[0,0](0)
\eea
for constants $\tilde\beta_{4,4},\tilde\beta_{5,4}^{2,1},\tilde\beta_{4,5}^{2,1}\in\bbC$.\\

Finally consider taking both $\cO_1,\cO_2$ to be $\tilde w$ states. Under the combined dilation both have vanishing charge, so their OPE cannot be deformed. This same argument applies at any loop order.\\

We saw in section \ref{sec:splitting} that the 1-loop holomorphic splitting amplitude induces a double pole in the $w[4,0],w[0,4]$ OPE of the form appearing in equation \eqref{eq:first-double-pole}. We argue in appendix \ref{app:extended-sl2C} that invariance under the split extension of $\fsl_2(\bbC)_+$ by the Heisenberg algebra necessitates that \eqref{eq:first-double-pole} is completed to
\bea
w[p,q](z)w[r,s](0)\sim \alpha\frac{R_4(p,q,r,s)}{(4!)^2}\bigg(\frac{1}{z^2}\tilde w[p+r-4,r+s-4] \\ 
+ \frac{1}{z}\p_z\tilde w[p+r-4,q+s-4]\bigg)(0) + \dots\,,
\eea
where $+\dots$ denotes subleading terms which are bilinear in generators. This matches the form of the full deformation induced by the 1-loop holomorphic splitting amplitude. It is striking that symmetry is sufficient to determine this structure up to an overall constant, which we saw in section \ref{sec:splitting} takes the value $\alpha=2/5\pi^2$.\\

The bilinear terms accompanying simple poles on the r.h.s. of equations 
\eqref{eq:first-simple-pole}, \eqref{eq:second-simple-pole} and \eqref{eq:first-double-pole} cannot be straightforwardly interpreted as splitting amplitudes. These are subtle, and will be interpreted in section \ref{sec:cure} once we understand the associativity failure in the chiral algebra.\\

We noted in subsection \ref{subsec:classical-chiral-algebra} that $\cU$ is generated in the strong sense by $w[p,q]$ with $p+q\leq3$ and $\tilde w[r,s]$ with $r+s\leq2$. Does the correction to the $w[3,0],w[0,3]$ OPE therefore determine the quantum deformation of the extended CCA in full? In section \ref{sec:associativity} we show that associativity of the operator product is violated so this question is no longer meaningful, however we also present methods of overcoming this failure. Even in these cases it may not be sufficient to determine the corrected OPEs of a generating set. It could be that the generators obey relations involving repeated operator products, and that these are deformed. Nonetheless, we've seen in this subsection and appendix \ref{app:extended-sl2C} that CCAs are highly constrained by symmetry, so there are grounds for hoping they're completely determined by the OPEs of low-lying generators.


\subsection{Failure of associativity} \label{subsec:inconsistency}

Here we show that the 1-loop corrections to OPEs derived in the previous subsection are incompatible with associativity of the operator product if the constant $\alpha$ is non-vanishing, and in particular if it takes the value determined by the 1-loop effective vertex. In the case of SD Yang-Mills, this kind of associativity check was performed in \cite{Costello:2022upu}. Similar calculations also appear in \cite{Ren:2022sws,Bhardwaj:2022anh}.\\

To test associativity for the operators $\cO_i$, $i=1,\dots 3$, we consider the identity
\bea \label{eq:contour-identity}
&\oint_{|z_2|=2}\dif z_2\,\bigg(\oint_{|z_{12}|=1}\dif z_{12}\,\cO_1(z_1)\cO_2(z_2)\bigg)\cO_3(0) \\
&\qquad= \oint_{|z_1|=2}\dif z_1\,\cO_1(z_1)\bigg(\oint_{|z_2|=1}\dif z_2\,\cO_2(z_2)\cO_3(0)\bigg) \\ 
&\qquad\qquad - \oint_{|z_2|=2}\dif z_2\,\cO_2(z_2)\bigg(\oint_{|z_1|=1}\dif z_1\,\cO_1(z_1)\cO_3(0)\bigg)\,,
\eea
which follows from the equivalence of contours illustrated below.

\begin{figure}[h!]
	\centering
	\includegraphics[scale=0.3]{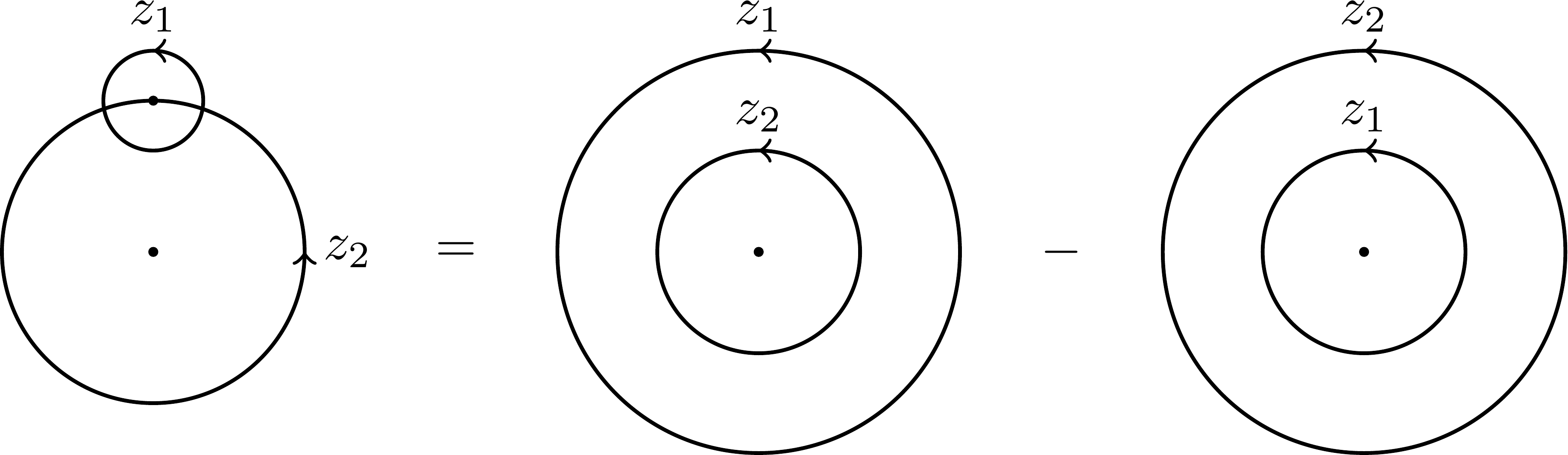}
	\caption{\emph{Equivalence of contours used to test for associativity.}}
	\label{fig:contours}
\end{figure}

Anticipating the conclusions of section \ref{sec:defect}, our choices for the operators $\cO_i$ are motivated by our expectation that the failure of associativity can be traced to the twistoral gravitational anomaly. The minimal dangerous product should be in a sense `dual' to the cocycle \eqref{eq:PBF-anomaly}. Since it depends only on $\bh$, we take $\cO_i\propto w[p_i,q_i]$. It involves five $v^1$ and five $v^2$ derivatives, so we take $\sum_ip_i=\sum_iq_i=5$. The single $z$ derivative leads us to include an explicit factor of $z$ in $\cO_2$.

This prompts us to take
\be \label{eq:operators-445} \cO_1(z) = w[3,0](z)\,,\qquad \cO_2(z) = zw[0,3](z)\,,\qquad \cO_3(z) = w[2,2](z)\,. \ee 

\begin{proposition} \label{prop:associativity-failure}
	The identity \eqref{eq:contour-identity} fails to hold for the choice of operators \eqref{eq:operators-445} if the $w[4,0],w[0,4]$ OPE has a double pole, that is, if the coefficient $\alpha$ from equation \eqref{eq:first-double-pole} is non-vanishing.
\end{proposition}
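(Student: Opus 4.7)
The plan is to evaluate both sides of \eqref{eq:contour-identity} for the choice \eqref{eq:operators-445} using the classical OPEs \eqref{eq:chiral-algebra} together with the most general 1-loop corrections allowed by the symmetry analysis of subsection \ref{subsec:possible-corrections}, and to isolate the coefficient of the diagnostic state $\tilde w[0,0](0)$. The key observation is that the explicit factor $z_2$ multiplying $w[0,3]$ in $\cO_2$ converts a $1/z_2^2$ pole of an outer OPE into a $1/z_2$ residue while killing any $1/z_2$ pole entirely. It therefore selects precisely the double-pole contributions to the outer integrals, which is exactly what isolates the coefficient $\alpha$.

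On the LHS, the inner contour extracts $2\pi\im\bigl(9\,w[2,2](z_2) + \beta_{4,4}\,w[0,0]\tilde w[0,0](z_2)\bigr)$ from \eqref{eq:chiral-algebra} and \eqref{eq:first-simple-pole}, and the outer integral reduces to $z_2$-weighted OPEs with $w[2,2](0)$. The classical OPE $w[2,2]\,w[2,2]$ vanishes because $ps-qr = 0$, but by the $\fsl_2(\bbC)_+$-equivariant extension from subsection \ref{subsec:splitting-deformation} its 1-loop correction has a double pole $\tfrac{\alpha R_4(2,2,2,2)}{(4!)^2 z_2^2}\tilde w[0,0](0) = \tfrac{\alpha}{6 z_2^2}\tilde w[0,0](0)$. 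This survives the $z_2$ factor and contributes a non-zero multiple of $\alpha\,\tilde w[0,0](0)$ to the outer integral. The $\beta_{4,4}$ piece drops out because $w[0,0]$ is central and $\tilde w[0,0]$ has rigid OPEs at 1-loop, so the corresponding integrand is regular in $z_2$. On the RHS, the first term's inner integral vanishes because every singularity of $w[0,3]\,w[2,2]$ is at worst a simple pole (the double-pole obstruction is $p+r=2<4$), and all such simple poles are killed by the $z_2$ factor. In the second term the inner contour yields $12\pi\im\,w[4,1](0)$ plus 1-loop simple-pole bilinears, and the outer integrand reduces to $z_2\,w[0,3](z_2)\,w[4,1](0)$ together with analogous products. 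Although $w[0,3]\,w[4,1]$ has $p+r=q+s=4$, making it a priori eligible for an $\alpha$-dependent double pole, the relevant intertwiner $R_4(0,3,4,1)$ vanishes identically (each summand is annihilated either by $[0]_n$ with $n\geq 1$ or, at $k=4$, by $[3]_4 = 0$), so no such double pole arises; all remaining simple poles are again suppressed by $z_2$.

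Putting the two sides together, LHS $-$ RHS contains a non-zero multiple of $\alpha\,\tilde w[0,0](0)$, independent of all the other undetermined 1-loop constants $\beta^{\ldots}_{\ldots}$ and $\tilde\beta^{\ldots}_{\ldots}$, so the identity fails whenever $\alpha\neq 0$. The main technical subtlety --- and the potential obstacle --- is to rule out an accidental cancellation of the $\alpha$-dependent piece by some other allowed 1-loop correction. This is controlled by three rigidity facts used above: the $z_2$ factor kills simple-pole residues while preserving double-pole residues; the intertwiner $R_4$ vanishes on every RHS configuration that could otherwise generate a competing double pole; and OPEs involving $\tilde w[m,n]$ with $m+n\leq 2$ are unmodified at 1-loop.
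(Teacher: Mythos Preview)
Your proof is correct and follows essentially the same approach as the paper: evaluate the three terms of \eqref{eq:contour-identity} for the operators \eqref{eq:operators-445}, use the $z_2$ prefactor to isolate double-pole contributions in the outer integrals, and show that only the LHS produces the $\alpha$-dependent term $\tfrac{3}{2}\alpha\,\tilde w[0,0](0)$. The only stylistic differences are that the paper rules out the bilinear 1-loop correction to the inner $w[3,0]\,w[2,2]$ integral directly via the Cartan weight (weight $3$ cannot sit in $\mathbf{2}$), whereas you allow such terms and then observe they yield only simple poles killed by $z_2$; and the paper excludes the $w[0,3]\,w[4,1]$ double pole by noting $\mathbf{1}\notin\mathbf{4}\otimes\mathbf{6}$, whereas you verify $R_4(0,3,4,1)=0$ explicitly.
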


\begin{proof}
	We proceed by evaluating the three terms in \eqref{eq:contour-identity} separately. Let's start by considering the l.h.s. It gets contributions from the tree OPEs \eqref{eq:chiral-algebra} and 1-loop correction \eqref{eq:first-simple-pole}
	\bea
	&\frac{1}{2\pi\im}\oint_{|z_{12}|=1}\dif z_{12}\,w[3,0](z_1)w[0,3](z_2)z_2 = z_2\big(9w[2,2] + \beta_{4,4}w[0,0]\tilde w[0,0]\big)(z_2)\,. \eea
	The exterior integral extracts the residue in the OPE with $w[2,2](0)$. Since $w[0,0],\tilde w[0,0]$ are central, the only possible contribution is from
	\be z_2w[2,2](z_2)w[2,2](0) \sim \frac{\alpha}{6z_2}\tilde w[0,0](0)\,. \ee
	The factor of $1/6$ relative to the double pole in the $w[4,0],w[0,4]$ follows from $\fsl_2(\bbC)_+$ invariance. The l.h.s. therefore evaluates to
	\be \frac{3}{2}\alpha\tilde w[0,0](0)\,. \ee
	
	Next consider the first term on the r.h.s. The OPE of $w[0,3]$ with $w[2,2]$ has only a simple pole. The explicit factor of $z_2$ therefore ensures that the interior contour integral vanishes.
	
	This leaves the second term on the r.h.s. The interior integral over $z_1$ gives
	\be \frac{1}{2\pi\im}\oint_{|z_1|=1}\dif z_1\,w[3,0](z_1)w[2,2](0) = 6w[4,1](0)\,. \ee
	We emphasise that the 1-loop correction from equation \eqref{eq:second-simple-pole} cannot contribute here, since it factors through the representation $\mathbf{2}$, but the $w[3,0],w[2,2]$ OPE has weight $3$ under the Cartan. The exterior contour integral then vanishes,
	\be - \frac{6}{2\pi\im}\oint_{|z_2|=2}\dif z_2\,z_2w[0,3](z_2)w[4,1](0) = 0\,, \ee
	since $\fsl_2(\bbC)_+$ invariance precludes a double pole appearing at 1-loop in the $w[0,3],w[4,1]$ OPE. We conclude that associativity fails unless $\alpha=0$.
\end{proof}

It's tempting to try and bypass this failure by simply discarding the central element $\tilde w[0,0]$ from the chiral algebra. However, a parallel calculation shows that taking
\be \label{eq:operators-545} \cO_1(z)= w[4,0](z)\,,\qquad \cO_2(z) = zw[0,3](z)\,,\qquad \cO_3(z) = w[2,2](z)\,, \ee
leads to a similar associativity failure. In this case the l.h.s. of equation \eqref{eq:contour-identity} gives $6\alpha\tilde w[1,0]$, whereas the r.h.s. gives $0$. The difference is therefore proportional to $\tilde w[1,0]$, which cannot simply be discarded from the algebra.

Instead taking
\be \label{eq:operators-553} \cO_1(z) = w[4,0](z)\,,\qquad \cO_2(z) = zw[0,4](z)\,,\qquad \cO_3(z) = w[1,1](z) \ee
we find that associativity of the operator product necessitates
\be \beta_{5,5}^{2,2} = 2\alpha\,. \ee
This demonstrates the importance of characterising the 1-loop corrected OPEs in full before performing the associativity check, since the subleading simple poles can contribute.\\


\subsection{Non-universality of holomorphic collinear limits} \label{subsec:non-universal}

In summary, quantum corrections to the extended CCA of SD gravity break associativity of the operator product.\\

From the discussion in subsection \ref{subsec:interpretation}, OPEs in the extended CCA are expected to describe the holomorphic collinear singularities of amplitudes in 1\textsuperscript{st}-order deformations of SD gravity. However, the presence of the 1-loop all-plus amplitudes mean that these singularities are not universal. For example, consider the 1\textsuperscript{st}-order deformation\footnote{This deformation is of particular interest as its finite counterpart gives the SD Palatini action for full Einstein gravity. Expanding around the SD sector in this way can be shown to reproduce the MHV vertices for gravity \cite{Mason:2009afn}.}

\be \label{eq:deform-to-gravity} \delta S_\mathrm{SDGR}[\Gamma,e] = \frac{1}{4}\int_{\bbR^4}\Gamma_{\alpha\gamma}\wedge\Gamma^\gamma_{~\,\beta}\wedge e^{\alpha\da}\wedge e^\beta_{~\,\da}\,. \ee

Working perturbatively around flat space, this deformation introduces a quadratic, cubic and quartic vertex. Representing the cubic vertex by a crossed dot, the mostly-plus amplitudes now acquire a non-universal holomorphic collinear singularity which can be attributed to the diagram illustrated in figure \ref{fig:non-universal}. The blob indicates the insertion of a 1-loop all-plus amplitude.

\begin{figure}[h!]
	\centering
	\includegraphics[scale=0.3]{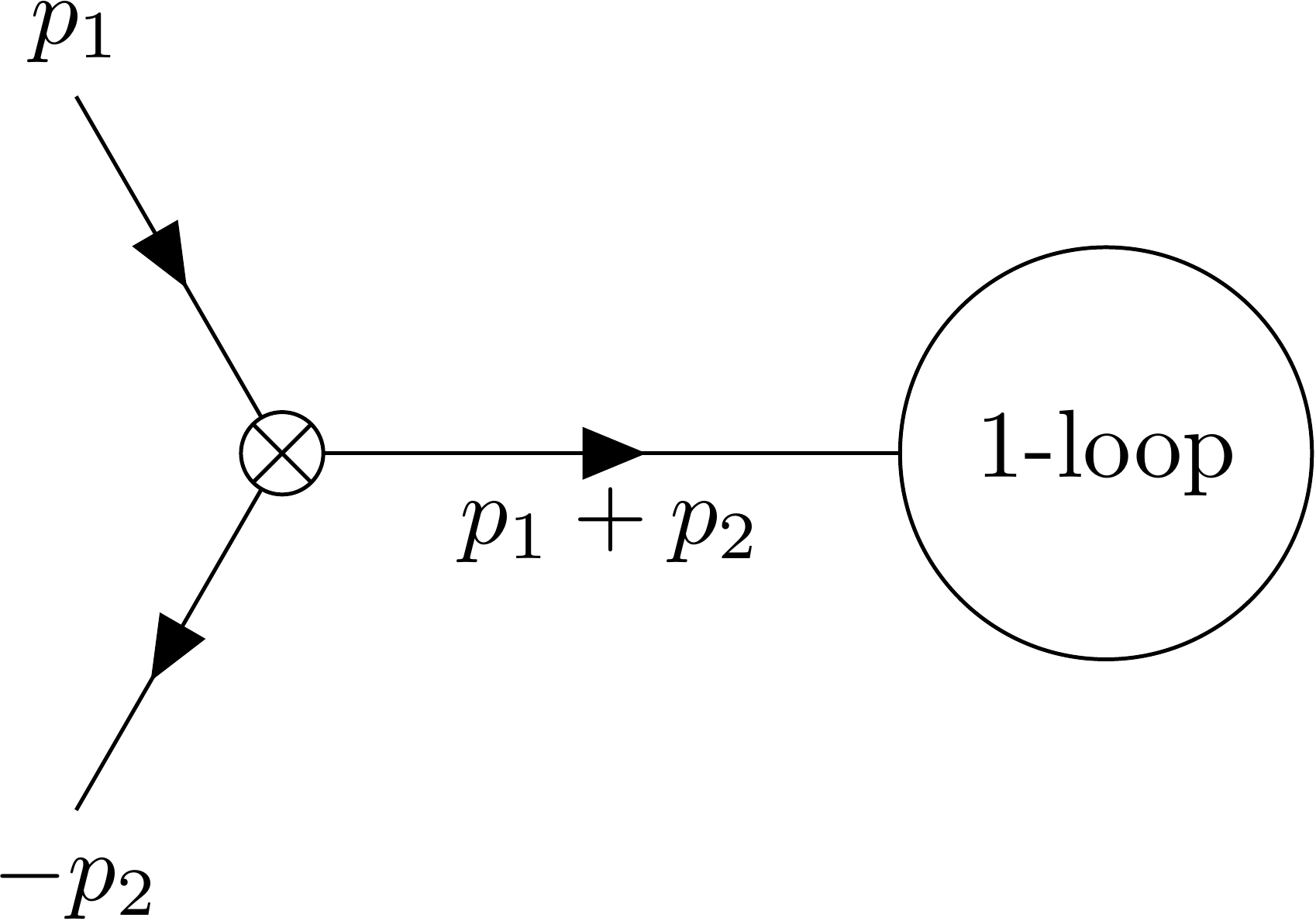}
	\caption{\emph{This diagram is responsible for non-universal holomorphic collinear singularities in the 1-minus amplitudes of a particular 1\textsuperscript{st}-order deformation of SD gravity.}} \label{fig:non-universal}
\end{figure}

The failure of associativity can be attributed to this non-universality. A necessary condition for obtaining a consistent extended CCA is therefore the cancellation of the 1-loop all-plus amplitudes.\\

There is also a second interpretation of this inconsistency in the chiral algebra. The extended CCA is closely related to the hidden symmetry of the 4d hyper-K\"{a}hler hierarchy \cite{Penrose:1976js,Park:1989vq,Takasaki:1989cg,Dunajski:2000iq}. The associativity failure can be understood as a global anomaly in this hidden symmetry, consistent with the proposal of Bardeen that the 1-loop all-plus amplitudes obstruct integrability \cite{Bardeen:1995gk}.\footnote{This interpretation perhaps applies more naturally to the case of the K\"{a}hler scalar. See the discussion in section \ref{sec:discussion} for further details.}


\section{Chiral algebras supported on holomorphic surface defects} \label{sec:defect}

In subsection \ref{subsec:classical-chiral-algebra} we reviewed how the classical extended CCA arises from surface defects in the twistor uplift of SD gravity. In this section we propose that this identification persists at the quantum level, i.e., that the universal chiral algebra supported on holomorphic surface defects is the quantum deformation of the extended CCA we seek.

This is motivated the example of SD QCD, for which the results of \cite{Costello:2022wso,Costello:2022upu} show that the universal holomorphic surface defect in the twistor uplift coincides with the extended CCA describing the holomorphic collinear singularities of form factors.\footnote{At least in cases in which the twistorial gauge anomaly vanishes.}

As evidence, we identify the anomalous 1-loop diagrams in the coupled bulk-defect system which necessitate corrections to OPEs. We evaluate these in some simple cases, and obtain a precise match with the coefficient of the double poles induced by the 1-loop effective vertex in section \ref{sec:splitting}. We also find that some of the subleading bilinear terms identified in subsection \ref{subsec:possible-corrections} are non-vanishing.

We then interpret the associativity failure from the previous section in terms of the recently discovered twistorial gravitational anomaly from \cite{Bittleston:2022nfr}.


\subsection{Anomalous 1-loop diagrams in the bulk-defect system} \label{subsec:anomalous-diagram}

The universal holomorphic surface defect wrapping the real twistor line $\cL_0=\{u^\da=0\}\subset\bbPT$ couples to Poisson-BF theory via
\be \frac{1}{2\pi\im}\sum_{m,n\geq0}\int_{\cL_0}\dif z\,(w[m,n](z)\cD_{m,n}\bh + \tilde w[m,n](z)\cD_{m,n}\tilde\bh)\,, \ee
for two towers of operators $w[m,n],\tilde w[m,n]$ with $m,n\in\bbZ_{\geq0}$. We saw in subsection \ref{subsec:classical-chiral-algebra} that tree level BRST invariance of the coupled bulk-defect system necessitates that these operators obey the relations of the classical extended CCA of SD gravity \eqref{eq:chiral-algebra}. Quantum corrections can be understood as arising from anomalous loop diagrams in the coupled bulk-defect system. Here we identify the anomalous diagrams at 1-loop.\\

Let's being by concentrating on the $w,w$ OPEs. These are determined by Feynman diagrams with two external bulk legs both of which are $\bh$. The only potentially anomalous 1-loop diagrams are illustrated in figure \ref{fig:BRST-1loop}. The symmetry arguments from subsection \ref{subsec:possible-corrections} apply equally well here, and so the simplest 1-loop corrections to OPEs must take the form given in equations \eqref{eq:first-double-pole}, \eqref{eq:second-simple-pole} and \eqref{eq:first-double-pole}.

\begin{figure}[h!]
	\centering
	\begin{subfigure}[t]{0.49\textwidth}
		\centering
		\includegraphics[scale=0.3]{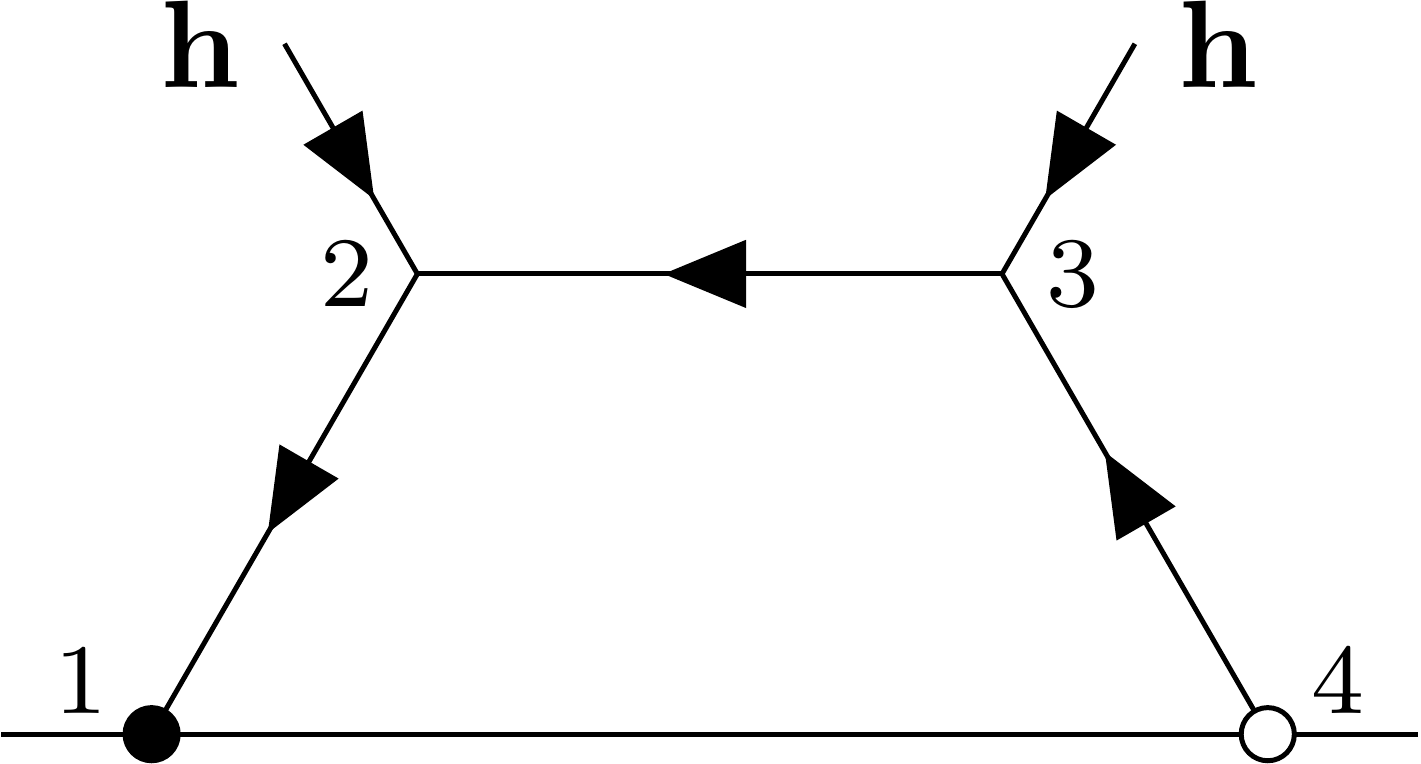}
		\subcaption{Defect box}
		\label{subfig:box}
	\end{subfigure}
	\begin{subfigure}[t]{0.49\textwidth}
		\centering
		\includegraphics[scale=0.3]{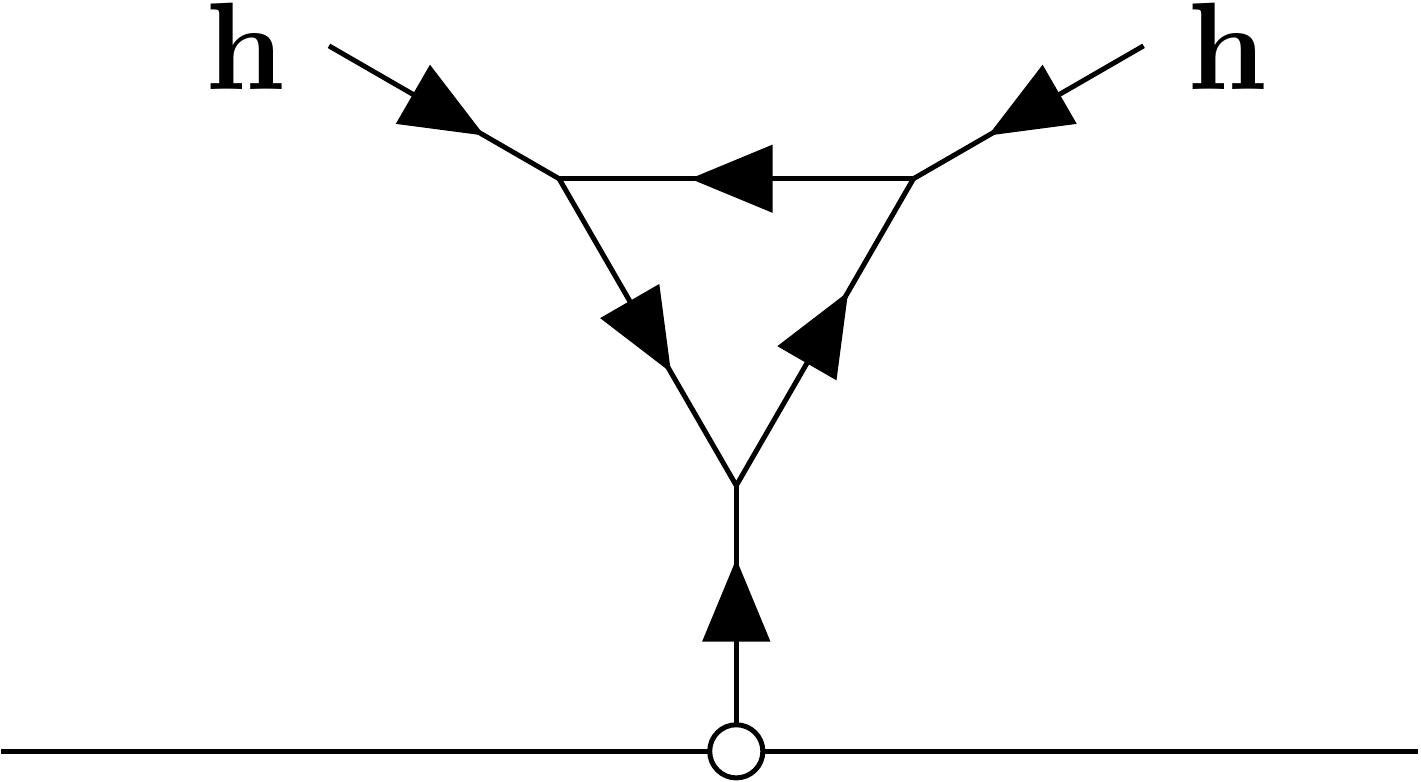}
		\subcaption{Bulk triangle}
		\label{subfig:triangle}
	\end{subfigure}
	\caption{\emph{1-loop Feynman diagrams whose BRST variation could in principle necessitate a correction to the $w,w$ OPEs. The undirected horizontal line represents the defect supported on $\cL_0$, and the solid and empty dots represent couplings to $w,\tilde w$ on the defect respectively. We have labelled the vertices of the first diagram for later convenience.}}
	\label{fig:BRST-1loop}
\end{figure}

The $w,\tilde w$ OPEs also receive corrections at 1-loop from essentially the same diagram as in subfigure \ref{subfig:box}. The differences are that one external vertex is replaced by $\bg$, and both legs on the defect couple to $\tilde w$. Again, the symmetry arguments from subsection \ref{subsec:possible-corrections} apply in this context, and this simplest corrections take the form given in \eqref{eq:tilde-first-simple-poles}.

The $\tilde w,\tilde w$ OPEs are not deformed at any loop order, since forgetting the defect from a diagram in the coupled bulk-defect system leaves a bulk diagram, which can have at most one negative helicity external leg.


\subsection{Gauge fixing} \label{subsec:gauge-fix}

In the remainder of this section we explicitly compute the BRST variation of the diagrams in figure \ref{fig:BRST-1loop} for particular specialisations of the external legs, allowing us to evaluate the constants $\alpha,\beta_{4,4},\beta_{5,5}^{2,2}$ from equations \eqref{eq:first-simple-pole} and \eqref{eq:first-double-pole}.\\

First we need to fix the gauge. To do so let's restrict to the patch $\{z\neq\infty\}\cong\bbC^3\subset\bbPT$, on which we continue to use the holomorphic coordinates $\{Z^a\} = \{z,v^\da\}$. We assume the defect wraps $\{v^\da=0\}\cong\bbC\subset\bbC^3$, which can be viewed as a subset of the real twistor line $\cL_0$. Since anomalies are local, working in this patch is sufficient.

In this patch the line bundles $\cO(n)$ trivialize, and in particular $\Dif^3Z$ can be identified with the untwisted holomorphic $(3,0)$-form
\be \dif^3Z = \dif z\dif v^1\dif v^2 = \frac{1}{3!}\varepsilon_{abc}\dif Z^a\dif Z^b\dif Z^c\,. \ee
Specifying the standard Hermitian form
\be \|\dif Z\|^2 = \delta_{a\bar b}\dif Z^a\dif\bar Z^{\bar b} =  \dif z\dif\bar z + \dif v^1\dif\bar v^1 + \dif v^2\dif\bar v^2 \ee
we can impose the gauge fixing conditions
\be \bar\p^\dag\bh = - \star\bar\p\star\bh = 0\,,\qquad \bar\p^\dag\tilde\bh = - \star\bar\p\star\tilde\bh = 0 \ee
for $\star$ the anti-linear Hodge star.\footnote{We emphasise that this is not the Woodhouse gauge familiar to twistor theorists \cite{Woodhouse:1985id}.} Technically in the BV formalism this specifies a Lagrangian subspace in the extended space of fields. The operator $\bar\p^\dag$ has the property
\be \{\bar\p,\bar\p^\dag\} = \Delta_{\bar\p}\ee
for $\Delta_{\bar\p}$ the Dolbeault Laplacian. The propagator is the pullback by the difference map
\be D_{12}:(\bbC^3)_1\times(\bbC^3)_2\to\bbC^3\,,\qquad (Z_1,Z_2)\mapsto Z_{12} = Z_1-Z_2 \ee
of the $(0,2)$-form $P$ obeying
\be \bar\p P = 2\pi\im\bar\delta^3 \ee
for $\bar\delta^3$ the $(0,3)$-form $\delta$-function with support at $Z=0$ such that
\be \int_{\bbC^3}\dif^3Z\,\bar\delta^3 = -1\,. \ee
Explicitly
\be P = 2\pi\im\bar\p^\dag\Delta_{\bar\p}^{-1}\bar\delta^3 = \frac{1}{8\pi^2}\bar\p^\dag\bigg(\frac{\dif^3\bar Z}{\|Z\|^4}\bigg) = \frac{\varepsilon_{\bar a\bar b\bar c}\bar Z^{\bar a}\dif\bar Z^{\bar b}\dif\bar Z^{\bar c}}{4\pi^2\|Z\|^6} \,. \ee
Fortunately it's possible to evaluate the necessary loop diagrams using a point-splitting regularisation on the defect, so there is no need to introduce a regulated propagator.


\subsection{Explicit diagram computation} \label{subsec:evaluate-diagram}

We are now in position to evaluate the constants $\alpha,\beta_{4,4},\beta_{5,5}^{2,2}$. In principle they can receive contributions from both diagrams in figure \ref{fig:BRST-1loop}, but fortunately the bulk triangle diagram vanishes for algebraic reasons in a 6d holomorphic theory \cite{Williams:2018ows}. We can therefore concentrate on the box diagram with one edge representing the defect as illustrated in subfigure \ref{subfig:box}.\\

We label the vertices by $i=1,\dots,4$ as shown in figure \ref{subfig:box}, and denote their positions by $Z_i$. In particular, $Z_1 = (z_1,0)$ locates the coupling of $\bh$ to the defect, and we move around the diagram clockwise so that $Z_4=(z_4,0)$ locates the coupling of $\bg$ to the defect. The vertex at which an external field is evaluated is denoted by a subscript. A point-splitting regularisation on the defect, $|z_{14}|\geq\epsilon$, ensures that our integrals are finite. At the end of the calculation we will take $\epsilon\to0$.

A linearised BRST transformation $\delta\bh = \bar\p\bh$ can act on either of the external legs, leading to two terms. Iteratively integrating by parts and employing the identity
\be \bar\p P = 2\pi\im\bar\delta^3 \ee
we can express the variation as a sum over possible `contractions' of the internal edges, where we replace the propagator on a given edge by a $\delta$-function, together with a boundary term on the defect.

Consider contracting the propagator connecting the two bulk vertices at $Z_2,Z_3$ resulting in the diagram illustrated in figure \ref{fig:23-contraction}. Eliminating the $Z_3$ variable in favour of $Z_2$, this diagram involves holomorphic derivatives acting upon the product of propagators $P_{12}P_{24}$. Holomorphic derivatives do not modify the antiholomorphic form structure, so the product of forms
\be (\varepsilon_{\bar a\bar b\bar c}\bar Z_{12}^{\bar a}\dif\bar Z_{12}^{\bar b}\dif\bar Z_{12}^{\bar c})(\varepsilon_{\bar d\bar e\bar f}\bar Z_{24}^{\bar d}\dif\bar Z_{24}^{\bar e}\dif\bar Z_{24}^{\bar f}) \ee
appears in the integrand. Expanding
\be
\varepsilon_{\bar a\bar b\bar c}Z^{\bar a}\dif\bar Z^{\bar b}\dif\bar Z^{\bar c} = \bar z\dif\bar v^\da\dif\bar v_\da + 2\bar v^\da\dif\bar v_\da\dif\bar z\,,
\ee
recalling that $Z_1 = (z_1,0)$, $Z_4 = (z_4,0)$ and writing $Z_2=(z_2,v^\da)$ we have
\be (\varepsilon_{\bar a\bar b\bar c}\bar Z_{12}^{\bar a}\dif\bar Z_{12}^{\bar b}\dif\bar Z_{12}^{\bar c})(\varepsilon_{\bar d\bar e\bar f}\bar Z_{24}^{\bar d}\dif\bar Z_{24}^{\bar e}\dif\bar Z_{24}^{\bar f}) = - 4\bar v^\da\bar v^\dd\dif\bar v_\da\dif\bar v_\dd\dif\bar z_{12}\dif\bar z_{24} = -2 [\bar v\,\bar v][\dif\bar v\,\dif\bar v]\dif\bar z_{12}\dif\bar z_{24}\,, \ee
which vanishes since $[\bar v\,\bar v]=0$. The same argument shows that there are no contributions to the anomaly from any propagator contractions of the diagram in subfigure \eqref{subfig:box}.

\begin{figure}[h!]
	\centering
	\includegraphics[scale=0.3]{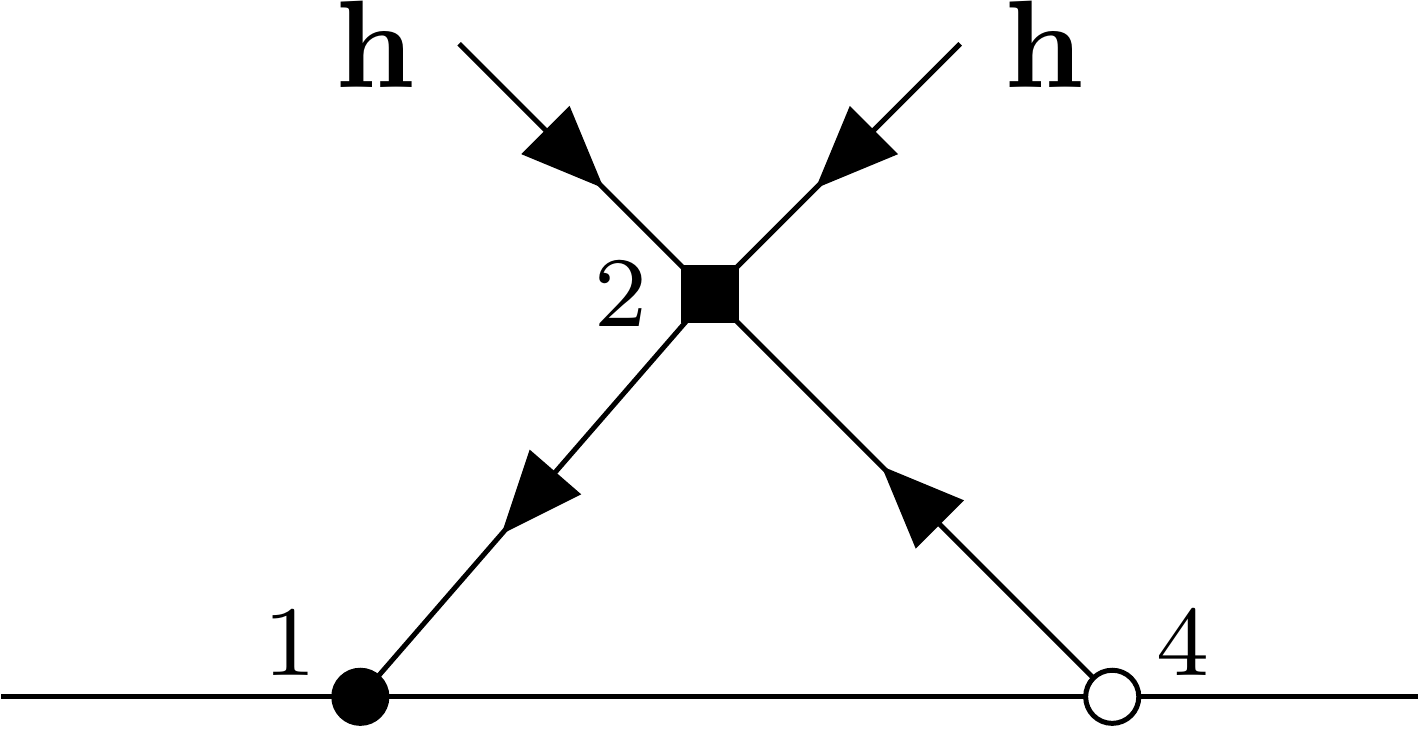}
	\caption{\emph{1-loop Feynman diagram obtained by contracting the edge connecting the two bulk vertices in subfigure \ref{subfig:box}. The square dot represents the combined bulk vertices.}}
	\label{fig:23-contraction}
\end{figure}

This leaves the boundary contribution. Writing $z_0 = (z_1+z_4)/2$, it takes the form
\be \label{eq:residue} \bigg(\frac{1}{2\pi\im}\bigg)^2\sum_{a,b,c,d\in\bbZ_{\geq0}}\int_\bbC\dif z_0 \oint_{|z_{14}|=\epsilon} \dif z_{14}\,w[a,b](z_1) \tilde w[c,d](z_4)\,\cI_{a,b,c,d}\big(z_1,z_4;\bh,\bh\big) \ee
where
\bea \label{eq:Iabcd}
&\cI_{a,b,c,d}\big(z_1,z_4;\bh,\bh^\prime\big) \\
&= \bigg(\frac{\im}{2\pi}\bigg)^2\int_{(\bbC^3)_{2}\times(\bbC^3)_{3}}\{\cD_{a,b,1}P_{12},\bh_2\}_2\,\dif^3 Z_2\,P_{23}\,\dif^3Z_3\,\{\bh^\prime_3,\cD_{c,d,4}P_{34}\}_3\Big\vert_{\dif\bar z_1=\dif\bar z_4=\dif\bar z_0}\,.
\eea
The subscripts on the Poisson brackets dictate on which copy of $\bbC^3$ they act. Prior to restricting $\dif\bar z_1=\dif\bar z_4=\dif\bar z_0$ the integral defines a  $(0,\bullet)$-polyform on $\{|z_{14}|\geq\epsilon\}\subset\bbC_1\times\bbC_4$. In equation \eqref{eq:residue} we pullback to the boundary of this region. Since the integral over $|z_{14}|=\epsilon$ is already saturated, only the $\dif\bar z_0$ component can contribute.

\begin{figure}[h!]
	\centering
	\begin{subfigure}[t]{0.49\textwidth}
		\centering
		\includegraphics[scale=0.3]{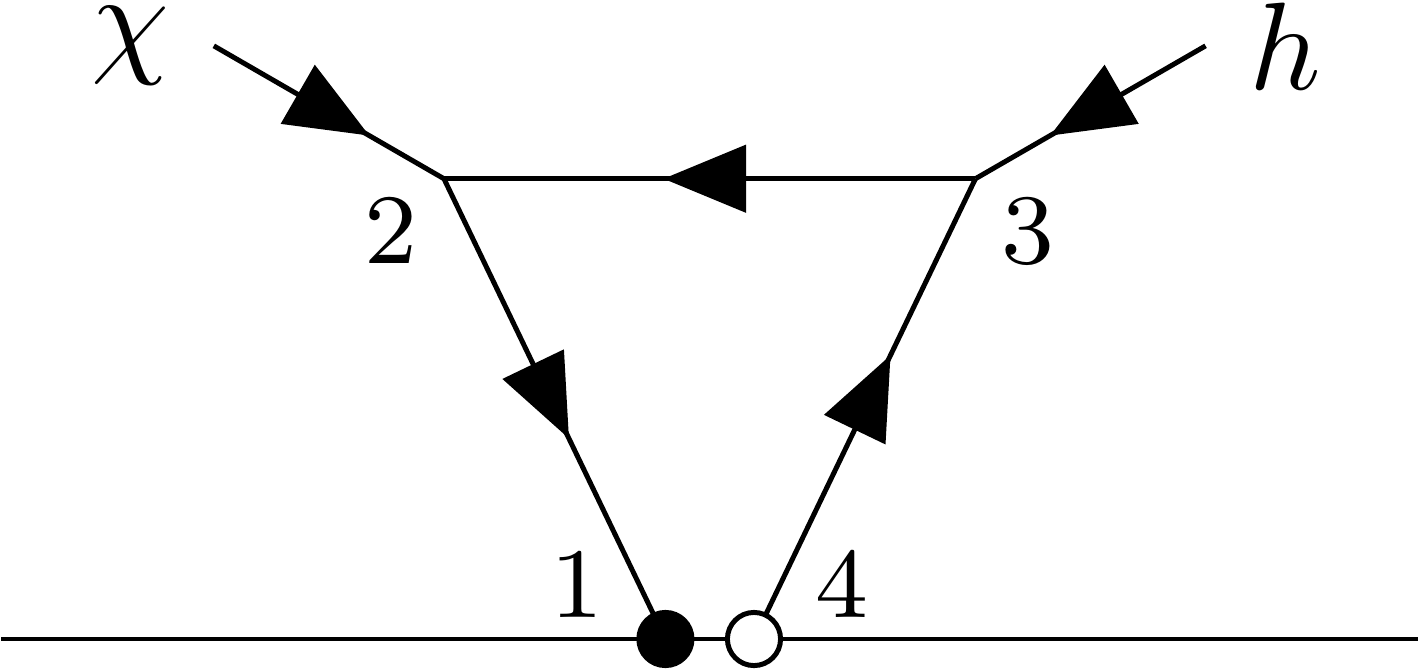}
		\subcaption{Anticlockwise}
		\label{subfig:anticlockwise}
	\end{subfigure}
	\begin{subfigure}[t]{0.49\textwidth}
		\centering
		\includegraphics[scale=0.3]{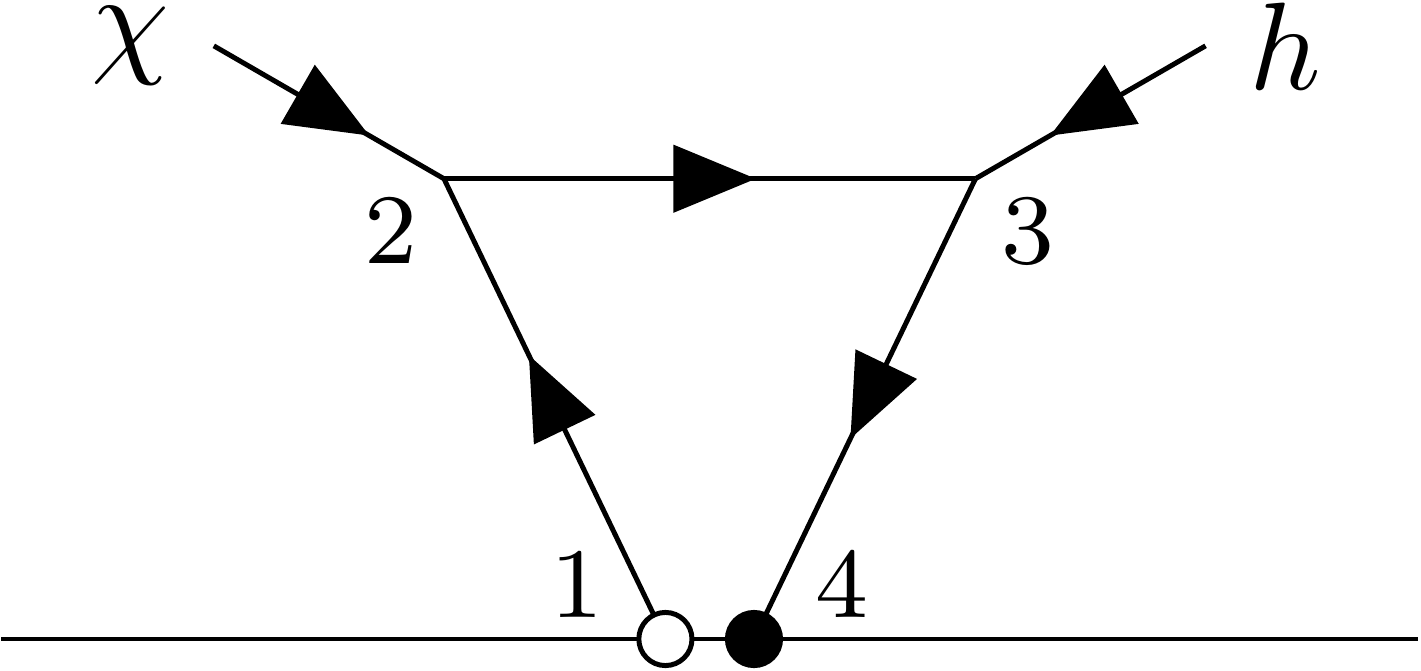}
		\subcaption{Clockwise}
		\label{subfig:clockwise}
	\end{subfigure}
	\caption{\emph{The BRST variation of subfigure \ref{subfig:box} gives a boundary term on the defect. Since we are taking the external legs to be distinct, there are now two contributing diagrams.}}
	\label{fig:box-bdry}
\end{figure}

By specialising the external fields we can extract the corrections to particular OPEs. Since we're computing an anomaly, we should take one external leg to be a ghost $\chi$ and the other a physical field $h$. These are distinct, so we must sum over the two possible orderings around the loop, illustrated in figure \ref{fig:box-bdry}. Fortunately these contribute almost identically: the second can be obtained from the first by exchanging $w\leftrightarrow\tilde w$ (before taking any OPEs on the defect).

We note that the diagrams in \ref{fig:box-bdry} can be interpreted as twistor uplifts of the non-factorizing diagrams appearing in the calculation of 1-minus graviton amplitudes by recursion methods \cite{Dunbar:2010xk,Alston:2015gea}, with the defect replacing the MHV gravity currents. This is evidence that the chiral algebra supported on the defect really does describe the holomorphic collinear limits of amplitudes in 1\textsuperscript{st}-order deformations of SD gravity.\\

Taking $\chi\sim (v^1)^3$, $h\sim (v^2)^3\dif\bar z$ we obtain the coefficient $\beta_{4,4}$ of the simple pole in $w[3,0],w[0,3]$ OPE. In this case $\fsl_2(\bbC)_+$ invariance and dimension matching imply that only non-vanishing term in the sum has $(a,b,c,d) = (0,0,0,0)$. Similarly, by taking $\chi\sim z(v^1)^4$, $h\sim (v^2)^4\dif\bar z$ we can extract the coefficients $\alpha,\beta_{5,5}^{2,2},\beta_{5,5}^{3,1},\beta_{5,5}^{1,3}$ appearing in the $w[4,0],w[0,4]$ OPE \eqref{eq:first-double-pole}. $\fsl_2(\bbC)_+$ invariance and dimension matching imply that the only non-zero contributions are from $(a,b,c,d) = (1,0,0,1)$, $(0,1,1,0)$, $(1,1,0,0)$, $(0,0,1,1)$. The first two options contribute to $\alpha,\beta^{2,2}_{5,5}$, and the latter two to $\beta^{3,1}_{5,5}$ and $\beta^{1,3}_{5,5}$ respectively. We'll concentrate on computing $\alpha$, since from the discussion in section \ref{sec:splitting} it can be identified with the coefficient of the 1-loop holomorphic splitting amplitude. We will therefore focus on the terms $(a,b,c,d) = (1,0,0,1)$, $(0,1,1,0)$ in the sum.\\

Recalling the explicit form of the propagators, the antiholomorphic form structure is
\bea
&(\varepsilon_{\bar a\bar b\bar c}\bar Z_{12}^{\bar a}\dif\bar Z_{12}^{\bar b}\dif\bar Z_{12}^{\bar c})(\varepsilon_{\bar d\bar e\bar f}\bar Z_{23}^{\bar d}\dif\bar Z_{23}^{\bar e}\dif\bar Z_{23}^{\bar f})(\varepsilon_{\bar g\bar h\bar i}\bar Z_{34}^{\bar g}\dif\bar Z_{34}^{\bar h}\dif\bar Z_{34}^{\bar i}) \\ 
&= 8[\bar v_2\,\bar v_3](\bar z_{12}\dif\bar z_{23}\dif\bar z_{34} - \bar z_{23}\dif\bar z_{12}\dif\bar z_{34} + \bar z_{34}\dif\bar z_{12}\dif\bar z_{23})\dif \bar v^1_2\dif\bar v^2_2\dif\bar v^1_3\dif\bar v^2_3\,.
\eea
Substituting the above into \eqref{eq:Iabcd}, and performing the restriction to $\dif\bar z_1 = \dif\bar z_4 = \dif\bar z_0$ gives
\bea
&\cI_{1,0,0,1}\big(z_1,z_4;z(v^1)^4,(v^2)^4\dif\bar z\big) \\
&= \frac{2^23^2\bar z_{14}\dif\bar z_0}{\pi^8}\int_{(\bbC^3)_2\times(\bbC^3)_3}\dif^6Z_2\,\dif^6Z_3\frac{z_2(v^1_2)^3(v^2_3)^3[\bar v_2\,\bar v_3]\bar v^1_2\bar v^2_2\bar v^1_3\bar v^2_3}{\|Z_{12}\|^{10}\|Z_{23}\|^6\|Z_{34}\|^{10}}\,, \\
&\cI_{0,1,1,0}\big(z_1,z_4;z(v^1)^4,(v^2)^4\dif\bar z\big) \\
&= \frac{2^23^2\bar z_{14}\dif\bar z_0}{\pi^8}\int_{(\bbC^3)_2\times(\bbC^3)_3}\dif^6Z_2\,\dif^6Z_3\frac{z_2(v^1_2)^3(v^2_3)^3[\bar v_2\,\bar v_3](\bar v^2_2)^2(\bar v^1_3)^2}{\|Z_{12}\|^{10}\|Z_{23}\|^6\|Z_{34}\|^{10}}\,.
\eea
We evaluate these integrals in appendix \ref{app:integrals}
\bea \label{eq:Feynman-integrals} &\cI_{1,0,0,1}\big(z_1,z_4;z(v^1)^4,(v^2)^4\dif\bar z\big) = - \frac{(9 z_{14} + 70 z_0)\dif\bar z_0}{70\pi^2z_{14}}\,, \\ &\cI_{0,1,1,0}\big(z_1,z_4;z(v^1)^4,(v^2)^4\dif\bar z\big) = - \frac{(2z_{14} + 28 z_0)\dif\bar z_0}{70\pi^2z_{14}}\,.
\eea
Plugging these expressions into equation \eqref{eq:residue} gives
\bea \label{eq:before-residue}
&\bigg(\frac{1}{2\pi\im}\bigg)^2\bigg(-\frac{1}{70\pi^2}\bigg)\int_\bbC\dif^2 z_0\oint_{|z_{14}|=\epsilon}\frac{\dif z_{14}}{z_{14}}\,\big((9z_{14}+70z_0)w[1,0](z_1)\tilde w[0,1](z_4) \\
& + (2z_{14}+28z_0)w[0,1](z_1)\tilde w[1,0](z_4)\big)\,.
\eea
At this point the dependence on the regulator $\epsilon$ drops out, as the integral over $z_{14}$ computes a residue. It receives two types of contributions: first from the explicit simple poles, and second from the tree OPEs between $w,\tilde w$. These are determined by the tree splitting amplitudes, so the second class of terms arise from a kind of triangle diagram. It can be interpreted as the twistor uplift of the off-shell triangle diagram on spacetime computing the 1-loop holomorphic splitting amplitude.

Using the classical OPEs \eqref{eq:chiral-algebra} the contour integral in equation \eqref{eq:before-residue} gives
\bea \label{eq:anticlockwise}
&\bigg(\frac{1}{2\pi\im}\bigg)\bigg(-\frac{1}{10\pi^2}\bigg)\int_\bbC\dif^2 z_0\,\big(\tilde w[0,0](z_0) \\
&+ 5z_0\normord{\{w[1,0],\tilde w[0,1]\}}\!(z_0) + 2z_0\normord{\{w[0,1],\tilde w[1,0]\}}\!(z_0)\big)\,,
\eea
where here $\{\ ,\ \}$ denotes the anticommutator. The value of the diagram in subfigure \ref{subfig:clockwise} is obtained from equation \eqref{eq:before-residue} by exchanging $w\leftrightarrow\tilde w$, so it contributes
\bea \label{eq:clockwise}
&\bigg(\frac{1}{2\pi\im}\bigg)\bigg(-\frac{1}{10\pi^2}\bigg)\int_\bbC\dif^2 z_0\,\big(\tilde w[0,0](z_0) \\
&+ 2z_0\normord{\{w[1,0],\tilde w[0,1]\}}\!(z_0) + 5z_0\normord{\{w[0,1],\tilde w[1,0]\}}\!(z_0)\big)\,.
\eea
Combining equations \eqref{eq:anticlockwise} and \eqref{eq:clockwise} gives
\bea \label{eq:total}
&\bigg(\frac{1}{2\pi\im}\bigg)\bigg(-\frac{1}{5\pi^2}\bigg)\int_\bbC\dif^2 z_0\,\big(\tilde w[0,0](z_0) + 7z_0(\normord{w[1,0]\tilde w[0,1]} + \normord{w[0,1]\tilde w[1,0]})(z_0)\big)\,.
\eea
This should cancel against the BRST variation of the bilocal term on the defect
\be \bigg(\frac{1}{2\pi\im}\bigg)^2\int_\bbC\dif z_0\oint_{|z_{14}|=\epsilon}\dif z_{14}\,w[4,0](z_1)w[0,4](z_4)\cD_{4,0,1}\chi_1\cD_{0,4,4}h_4  \ee
Taking the external legs to be $\chi\sim z(v^1)^4$, $h\sim(v^2)^4\dif\bar z$ and assuming the quantum corrected OPE takes the form in equation \eqref{eq:first-double-pole} this simplifies to
\be \label{eq:cancel-total} \frac{1}{4\pi\im}\int_\bbC\dif^2z_0\,\big(\alpha\tilde w[0,0](z_0) + 2\beta^{2,2}_{5,5}z_0(\normord{w[1,0]\tilde w[0,1]} + \normord{w[0,1]\tilde w[1,0]})(z_0)\big)\,.
\ee
Comparing equations \eqref{eq:total} and \eqref{eq:cancel-total}, we conclude that
\be \label{eq:coefficients-direct} \alpha = \frac{2}{5\pi^2}\,,\qquad\qquad \beta^{2,2}_{5,5} = \frac{7}{5\pi^2}\,. \ee
The value of $\alpha$ precisely matches the coefficient we found in section \ref{sec:splitting}. Associativity arguments in section \ref{sec:cure} provide a second check on these coefficients.

Instead taking $\chi\sim(v^1)^3$, $h\sim (v^2)^3\dif\bar z$ extracts the coefficient $\beta_{4,4}$ of the simple pole in the 1-loop correction to the $w[3,0],\tilde w[0,3]$ OPE. The value of the relevant Feynman integral is given appendix \ref{app:integrals} and it implies $\beta_{4,4} = 3/8\pi^2$. Similarly, taking $\chi\sim(v^1)^4$, $h\sim(v^2)^4\dif\bar z$ extracts the coefficient $\beta^{2,2}_{5,5}$, and the result matches equation \eqref{eq:coefficients-direct}.\\

In summary, we've obtained through the combination of symmetry arguments and a direct calculation precisely the deformed OPEs induced by the 1-loop holomorphic splitting amplitude from section \ref{sec:splitting}. This is strong evidence that the universal chiral algebra supported on holomorphic surface defects should be identified with the quantum deformation of the extended CCA. We've also seen that the 1-loop corrected OPEs involve subleading terms which are bilinear in generators. 


\subsection{Twistor interpretation of the associativity failure}

We saw in section \ref{sec:associativity} that 1-loop corrections to the extended CCA violate associativity of the operator product, and interpreted this as signalling the non-universality of the holomorphic collinear limits of amplitudes in 1\textsuperscript{st}-order deformations of SD gravity. We further argued that a necessary condition for universality is the vanishing of the 1-loop all-plus amplitudes.

Here we present an alternative interpretation based on the realisation of the quantum deformation of the extended CCA as the universal holomorphic surface defect.\\

It was recently argued that the twistor uplift of SD gravity, which is Poisson-BF theory, suffers from a gravitational anomaly. As reviewed in subsection \ref{subsec:quantum-SDGR}, it can be attributed to an anomalous 1-loop box diagram on twistor space. Given that the twistor formulation of the theory is anomalous, there is no reason to expect that holomorphic surface defects supported on real twistor lines should support consistent chiral algebras.

This is compatible with the perspective adopted in section \ref{sec:associativity}: the anomalous box diagram on twistor space can be identified with the 1-loop 4-point all-plus amplitude on spacetime.\\

This interpretation suggests the cancelling the twistorial anomaly is a sufficient condition for obtaining a quantum extended CCA with associative operator product, since consistency of the chiral algebras supported on holomorphic surface defects is expected if the twistorial theory exists. A number of methods cancelling the anomaly were identified in \cite{Bittleston:2022nfr}.


\section{Correcting associativity} \label{sec:cure}

In this section we explore methods of cancelling the twistorial anomaly in SD gravity, and hence of obtaining consistent extended CCAs. On spacetime this amounts to eliminating the 1-loop all-plus amplitudes, which we've seen is necessary for the collinear singularities of amplitudes in 1\textsuperscript{st}-order deformations to be universal.

The main approach we shall consider is by coupling to a $\p$-closed $(2,1)$-form field on twistor space, representing a 4\textsuperscript{th}-order gravitational axion on spacetime. By tuning the axion coupling the twistorial anomaly is cancelled via a kind of Green-Schwarz mechanism. Equivalently, on spacetime the 1-loop all-plus amplitudes are all cancelled by axion exchange \cite{Bittleston:2022nfr}. We show that incorporating the axion states into the chiral algebra restores associativity.

We will also briefly consider theories of SD gravity coupled to suitable matter in which the anomaly naturally vanishes. This occurs, e.g., in the case of SD supergravity theories.


\subsection{Incorporating axion states into the chiral algebra} \label{subsec:axion-chiral-algebra}

To obtain a consistent quantum extended CCA we adapt the proposal of \cite{Costello:2022upu}. Therein, the authors argue that a similar failure of associativity in the quantum deformation of the extended CCA of SD Yang-Mills can be remedied by cancelling a twistorial gauge anomaly. One method of doing so, originally proposed in \cite{Costello:2021bah}, is via a kind of Green-Schwarz mechanism. It involves coupling to a $\p$-closed $(2,1)$-form field on twistor space, describing a 4\textsuperscript{th}-order axion on spacetime. An advantage of this approach is that it cancels 1-loop and tree effects against one another, so can be used to bootstrap the 1-loop corrected chiral algebra from its tree OPEs.\\

In \cite{Bittleston:2022nfr} a variant of this mechanism was developed for SD gravity, which involves coupling to the same field on twistor space. This was reviewed in subsection \ref{subsec:quantum-SDGR}, but in brief, on twistor space we couple to a new polyform field $\bseta\in\Omega^{2,\bullet}_\mathrm{cl}(\bbPT)[1]$. It's physical part is a $\p$-closed $(2,1)$-form. For the BV action we take
\be \label{eq:eta-action} S[\bh,\bg;\bseta] = S_\mathrm{PBF}[\bh,\bg] + \frac{1}{4\pi\im}\int_\bbPT\big(\p^{-1}\bseta\,\nbar\bseta + \mu\,\bseta\,\p^\db\p_\da\bh\,\p^\da\p_\db\p\bh\big)\,. \ee
The physical part of $\bseta$ descends to a scalar $\rho$ on spacetime, and the physical part of the above action becomes
\be \label{eq:SDGR+axion-again} S_{\mathrm{SDGR}+\rho}[\rho;\Gamma,e] = S_\mathrm{SDGR}[\Gamma,e] + \int_{\bbR^4}\bigg(\vol_g\,\frac{1}{2}(\Delta_g\rho)^2 + \frac{\mu}{\sqrt{2}}\,\rho \,R^\mu_{~\,\nu}\wedge R^\nu_{~\,\mu}\bigg)\,, \ee
revealing $\rho$ to be kind of 4\textsuperscript{th}-order gravitational axion. The twistorial anomaly is cancelled by a Green-Schwarz mechanism if the axion coupling is tuned so that
\be \label{eq:tune-mu} \mu^2 = \frac{1}{5!}\bigg(\frac{\im}{2\pi}\bigg)^2\,. \ee
Since the twistorial theory exists, its holomorphic surface defects should support a consistent universal chiral algebra, which we expect can be identified with the quantum extended CCA. Mathematically this chiral algebra is Koszul dual to the algebra of local operators in Poisson-BF theory on twistor space. On spacetime, if $\mu$ satisfies equation \eqref{eq:tune-mu} then tree exchange of gravitational axions cancels the 1-loop all-plus amplitudes. The coupled theory has trivial amplitudes, so the non-universal collinear singularities discussed in subsection \ref{subsec:non-universal} do not arise.

In this subsection we verify that the associativity failure identified in proposition \ref{prop:associativity-failure} is cured in the 1-loop corrected extended CCA of the theory \eqref{eq:SDGR+axion-again}.\\

$\bseta$ contributes two new towers of generators to the chiral algebra, which have been identified in \cite{Costello:2022upu}. We denote them by $e[p,q]$ for $p+q>0$ and $f[r,s]$ for $r,s\geq0$. The quantum numbers of these states are listed in table \ref{tab:e-f}. 

\begin{table}[h!]
\centering
  \begin{tabular}{c c c c c}
    \toprule
	Generator & Field & Spin & $\fsl_2(\bbC)_+$ representation & Dimension \\ \midrule
	$e[m,n]\,,~m+n>0$ & $\bseta$ & $-(m+n)/2$ & $\mathbf{m+n+1}$ & $-m-n$ \\
	$f[m,n]\,,~m,n\geq0$ & $\bseta$ & $-(m+n)/2$ & $\mathbf{m+n+1}$ & $-2-m-n$ \\
    \bottomrule
  \end{tabular}
\caption{Quantum numbers of $e,f$.}
\label{tab:e-f}
\end{table}

The tree OPEs of these generators are determined by the vertices in the action \eqref{eq:eta-action}. They can be computed in a number of ways, but we'll adopt the approach from subsection \ref{subsec:classical-chiral-algebra}. The $e,f$ states are interpreted as parametrizing the coupling of $\bseta$ to the universal holomorphic defect supported on the real twistor line $\cL_0$:
\bea \label{eq:axion-defect}
&\frac{1}{2\pi\im}\sum_{m,n\in\bbZ_{\geq0},\,m+n>0}\int_{\cL_0}\dif z\,\bigg(e[m,n](z)\cD_{m,n}\bsgamma_z + \p_ze[m,n](z)\frac{\cD_{m-1,n}\bsgamma_1 + \cD_{m,n-1}\bsgamma_2}{m+n}\bigg) \\ 
&+ \frac{1}{2\pi\im}\sum_{m,n\in\bbZ_{\geq0}}\int_{\cL_0}\dif z\,f[m,n](z)\cD_{m,n}\bseta_{12}\,.
\eea
Here $\bsgamma=\p^{-1}\bseta\in\Omega^{1,\bullet}(\bbPT)[1]$ and the indices on $\bsgamma,\bseta$ refer to their holomorphic components. Note that the coupling to $e$ respects the ambiguity in $\bsgamma$ under addition of exact terms.

BRST invariance of the coupled bulk-defect system determines the OPEs of the $e,f$ states. At tree level we can simply take the variation of the defect \eqref{eq:axion-defect} directly, and cancel against the linearised variation of a bilocal term on the defect. Example calculations of this type are included in appendix \ref{app:axion-OPEs}. There are two new classes of OPEs. The first arise in the case of no counterterm coupling, i.e., $\mu=0$, and are given by
\begin{subequations} \label{eq:OPEs-tree-ef}
\begin{align}
&w[p,q](z)e[r,s](0)\sim\frac{ps-qr}{z}e[p+r-1,q+s-1](0)\,, \label{eq:OPE-tree-we} \\
\begin{split}
&w[p,q](z)f[r,s](0)\sim\frac{ps-qr}{z}f[p+r-1,q+s-1](0) \\ &+\frac{p+q}{r+s+2}\bigg(\frac{1}{z^2}e[p+r,q+s](0)+ \frac{1}{z}\frac{p+q-2}{p+q+r+s}\p_ze[p+r,q+s](0)\bigg)\,, \label{eq:OPE-tree-wf}
\end{split} \\
&e[p,q](z)f[r,s](0)\sim\frac{ps-qr}{z}\tilde w[p+r-1,q+s-1](0)\,, \label{eq:OPE-tree-ef} \\
\begin{split}
&f[p,q](z)f[r,s](0)\sim\frac{1}{z^2}\bigg(2 + \frac{p+q}{r+s+2} + \frac{r+s}{p+q+2}\bigg)\tilde w[p+r,q+s](0) \\ 
&+ \frac{1}{z}\bigg(1 + \frac{p+q}{r+s+2}\bigg)\p_z\tilde w[p+r,q+s](0)\,. \label{eq:OPE-tree-ff}
\end{split}
\end{align}
\end{subequations}
Together with the standard classical $w,\tilde w$ OPEs appearing in equation \eqref{eq:chiral-algebra} these define an extended CCA with associative operator product.\\

In the quantum theory we're required to switch on the counterterm coupling $\mu$ in order to cancel the Poisson-BF anomaly. This induces a second class of counterterm OPEs
\begin{subequations} \label{eq:OPEs-counter}
\begin{align}
\begin{split}
&w[p,q](z)w[r,s](0) \sim \mu R_2(p,q,r,s)\bigg(\frac{1}{z^2}e[p+r-2,q+s-2](0) \\ 
&+ \frac{1}{z}\frac{p+q-2}{p+q+r+s-4}\p_ze[p+r-2,q+s-2](0)\bigg) - \frac{\mu R_3(p,q,r,s)}{z}f[p+r-3,q+s-3](0)\,, \label{eq:OPE-counter-ww}
\end{split} \\
&w[p,q](z)e[r,s](0) \sim - \frac{\mu R_3(p,q,r,s)}{z}\tilde w[p+r-3,q+s-3](0)\,, \label{eq:OPE-counter-we} \\
\begin{split}
&w[p,q](z)f[r,s](0) \sim - \mu R_2(p,q,r,s)\bigg(\frac{1}{z^2}\frac{p+q+r+s}{r+s+2}\tilde w[p+r-2,q+s-2](0) \\
&+ \frac{1}{z}\p_z\tilde w[p+r-2,q+s-2](0)\bigg)\,, \label{eq:OPE-counter-wf}
\end{split}
\end{align}
\end{subequations}
for $R_\ell(p,q,r,s)$ as defined in \eqref{eq:Rpqrs} intertwining $\fsl_2(\bbC)_+$ representations
\be (\mathbf{p+q+1})\otimes(\mathbf{r+s+1})\to (\mathbf{p+q+r+s+1-2\boldsymbol{\ell}})\,.
\ee
Ignoring loop contributions, the counterterm OPEs \eqref{eq:OPEs-counter} break associativity of the operator product. However, if $\mu$ obeys equation \eqref{eq:tune-mu} we should obtain a consistent quantum extended CCA.\\

We emphasise that although we computed these OPEs using holomorphic surface defects on twistor space, they describe the holomorphic collinear singularities of tree amplitudes involving the axion in 1\textsuperscript{st}-order deformations of the theory \eqref{eq:SDGR+axion-again}. This is because the tree OPEs simply encode the vertices appearing the action, and the results of \cite{Bittleston:2022nfr} show that all non-trivial trees in the theory itself vanish.\\

Unfortunately there is a catch. Since the 4\textsuperscript{th}-order axion $\rho$ has a classical coupling to SD gravity, it can run through loops. The chiral algebra therefore receives new quantum corrections.


\subsection{1-loop corrections to operator products from the axion} \label{subsec:axion-quantum-corrections}

Unlike in the case of SD Yang-Mills, coupling SD gravity to the 4\textsuperscript{th}-order axion $\rho$ introduces new 1-loop corrections to OPEs, which are important to account when checking that associativity is restored. They can be constrained by symmetry as in subsection \ref{subsec:possible-corrections}. From the defect perspective we expect they can be attributed to anomalous 1-loop diagrams in the bulk-defect system: an example diagram is illustrated in figure \ref{fig:BRST-1loop-eta}. We will be less exhaustive in determining the possible 1-loop corrections to OPEs than in subsection \ref{subsec:possible-corrections}, concentrating only on those which are needed to verify that the associativity failure identified in proposition \ref{prop:associativity-failure} is cured.\\

The symmetries we use to constrain corrections are the same as those in subsection \ref{subsec:possible-corrections}. Reintroducing the parameter $\hbar$, we can simultaneously rescale $\hbar\mapsto\lambda\hbar$, $\bh\mapsto\bh$, $\bg\mapsto\lambda\bg$ and $\bseta\mapsto\lambda^{1/2}\bseta$ for $\lambda\in\bbC^*$. With this assignment, the coupling $\mu$ scales as $\mu\mapsto\lambda^{1/2}\mu$, consistent with its interpretation as a counterterm. We then further impose conformal invariance on the celestial sphere, $\gSL_2(\bbC)_+$ invariance and dimension matching.

\begin{figure}[h!]
	\centering
	\includegraphics[scale=0.3]{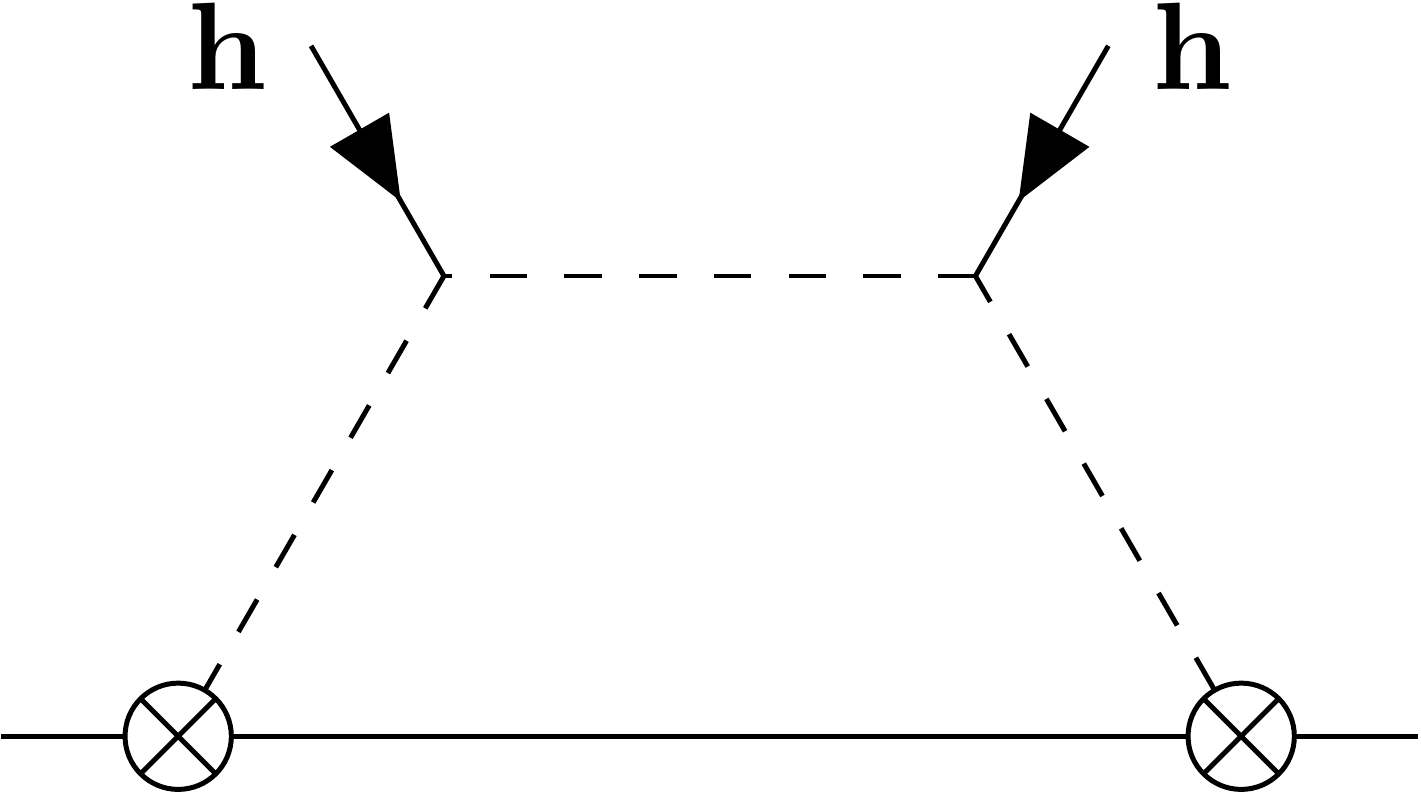}
	\caption{\emph{1-loop Feynman diagram in the bulk-defect system whose BRST variation necessitates corrections to the $w,w$ OPEs. Dotted lines designate $\bseta$ propagators, and crossed dots represent couplings of $\bseta$ to the $e,f$ generators.}}
	\label{fig:BRST-1loop-eta}
\end{figure}

The most important 1-loop corrections are to the OPEs of $w[m,n]$ generators with $m+n=4$. They are determined by
\bea
&w[4,0](z)w[0,4](0)\sim \gamma\bigg(\frac{1}{z^2}{\tilde w}[0,0] + \frac{1}{2z}\p_z\tilde w[0,0]\bigg)(0) \\
&+ \frac{1}{z}\big(2\delta^{3,1}_{5,5}e[1,1]f[0,0] + \delta^{2,2}_{5,5}(\normord{e[1,0]f[0,1]} + \normord{e[0,1]f[1,0]})\big)(0) + \dots
\eea
for constants $\gamma,\delta^{3,1}_{5,5},\delta^{2,2}_{5,5}\in\bbC$. Here $+\dots$ denotes terms which are quadratic in $e$ states and won't play a role in our associativity check. From the defect perspective, the potential double pole arises because the diagram in figure \ref{fig:BRST-1loop-eta} can couple to generators $e,f$ which themselves have the non-trivial tree OPEs given in equation \eqref{eq:OPE-tree-ef}.

In section \ref{sec:splitting} a double pole of this type was identified with the 1-loop holomorphic splitting amplitude. Here we're seeing that the axion $\rho$ is itself contributing to the splitting amplitude by running through the loop. This presents an added complexity as compared to the situation in SD Yang-Mills \cite{Costello:2022upu}. We wish to show that associativity is restored for $\alpha=2/5\pi^2$, but the coefficient of the double pole is now $\alpha + \gamma$. In subsection \ref{subsec:correct-associativity} we will explain how $\alpha,\gamma$ are related.\\

The remaining 1-loop corrections which must be accounted for are as follows. As was the case in subsection \ref{subsec:possible-corrections}, OPEs involving $w[m,n]$ for $m+n\leq2$ are not modified. The first relevant correction is determined by
\bea \label{eq:first-axion-simple-pole}
&w[3,0](z)w[0,3](0)\sim \varepsilon^{2,2}_{4,4}\bigg(\frac{2}{z^2}e[1,0]e[0,1] + \frac{1}{z}\p_z(e[1,0]e[0,1])\bigg)(0) \\
&+ \frac{\zeta^{2,2}_{4,4}}{z}(e[1,0]\p_ze[0,1] - e[0,1]\p_ze[1,0])(0)
\eea
for $\varepsilon^{2,2}_{4,4},\zeta^{2,2}_{4,4}\in\bbC$. Note that since the $e$ states have trivial OPEs among themselves, there is no need for normal ordering on the r.h.s. The next is to the OPEs between $w[p,q]$ for $p+q=4$ and $w[r,s]$ for $r+s=3$. Fortunately, only the corrections which factor through the $\mathfrak{sl}_2(\bbC)_+$ representation $\mathbf{4}$ play a role, and these are determined by
\bea
&w[2,2](z)w[3,0](0)\sim \frac{\varepsilon^{3,2}_{5,4}}{z^2}e[2,0]e[1,0](0) + \frac{1}{z}\big(\zeta^{3,2}_{5,4}e[2,0]\p_ze[1,0] + \zeta^{2,3}_{5,4}e[1,0]\p_ze[2,0]\big)(0) \eea
for constants $\varepsilon^{3,2}_{5,4},\zeta^{3,2}_{5,4},\zeta^{2,3}_{5,4}\in\bbC$. Finally, we'll also need
\bea
&w[4,1](z)w[0,3](0)\sim \frac{\delta_{6,4}^{3,1}}{z}e[1,1]f[0,0](0) + \frac{\delta_{6,4}^{2,2}}{z}(\normord{e[1,0]f[0,1]} + \normord{e[0,1]f[1,0]})(0) + \dots \eea
for $\delta^{3,1}_{6,4},\delta^{2,2}_{6,4}\in\bbC$. Here $+\dots$ denotes terms quadratic in $e$ which will not feature in the calculation. In particular, quantum corrections to the above OPE cannot introduce a double pole: on dimensional grounds it would have to accompany $\tilde w[0,0]$, but this transforms trivially under $\fsl_2(\bbC)_+$.\\

Before continuing, we emphasise that the $w,w$ OPEs are not the only ones to be corrected at 1-loop. The $w,e/f$ and $e/f,f$ OPEs will also be modified. There may also be 1-loop diagrams involving the counterterm coupling, but since these should be viewed as higher order quantum corrections.


\subsection{Correcting the failure of associativity} \label{subsec:correct-associativity}

We expect that SD gravity coupled to a 4\textsuperscript{th}-order gravitational axion admits a consistent quantum extended CCA. Here we verify that the associativity failure identified in subsection \ref{subsec:inconsistency} is cured.

\begin{proposition} \label{prop:associativity-check}
	Associativity of the operator product \eqref{eq:contour-identity} holds for the choice of operators \eqref{eq:operators-445} only if
	\be \alpha + \gamma = \frac{4}{5\pi^2}\,. \ee
\end{proposition}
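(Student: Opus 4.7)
The plan is to repeat the contour identity calculation \eqref{eq:contour-identity} for the operators \eqref{eq:operators-445} in the axion-extended theory, tracking specifically the coefficient of $\tilde w[0,0]$ on both sides. Because $\mu^2\sim\hbar$ by \eqref{eq:tune-mu}, contributions at $O(\hbar)$ arise both from genuine 1-loop OPEs and from products of two tree counterterm OPEs in \eqref{eq:OPEs-counter}.

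For the left hand side, the argument of proposition \ref{prop:associativity-failure} carries over, with the only modification being that the $w[2,2]w[2,2]$ OPE now has double-pole coefficient $(\alpha+\gamma)R_4(2,2,2,2)/(4!)^2=(\alpha+\gamma)/6$ for $\tilde w[0,0]$, obtained by $\fsl_2(\bbC)_+$-rotating the $w[4,0]w[0,4]$ correction. The remaining operators produced by the inner contour integral --- namely the central $w[0,0]\tilde w[0,0]$, the $e$-bilinears of \eqref{eq:first-axion-simple-pole}, and the counterterm-induced $18\mu\,\p_z e[1,1]-36\mu\,f[0,0]$ --- cannot generate $\tilde w[0,0]$ at $O(\hbar)$ by the symmetry arguments of subsection \ref{subsec:possible-corrections}. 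The left hand side therefore evaluates to $\frac{3(\alpha+\gamma)}{2}\tilde w[0,0](0)$.

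On the right hand side, both terms vanished classically but are now activated through products of counterterm OPEs. In the first term, the inner OPE $w[0,3](z_2)w[2,2](0)$ picks up a counterterm double pole $12\mu\,e[0,3]/z_2^2$ (with $R_2(0,3,2,2)=12$), which the explicit factor of $z_2$ converts to an inner residue $12\mu\,e[0,3](0)$; the outer counterterm $w[3,0](z_1)e[0,3](0)\sim -36\mu\,\tilde w[0,0]/z_1$ (with $R_3(3,0,0,3)=36$) then produces $-432\mu^2\,\tilde w[0,0]$. In the second term, the inner OPE $w[3,0](z_1)w[2,2](0)$ gives residue $6w[4,1]+4\mu\,\p_z e[3,0]$, where the factor of $4\mu$ rather than $12\mu$ follows from the derivative coefficient $(p+q-2)/(p+q+r+s-4)=1/3$ in \eqref{eq:OPE-counter-ww}; only the $\p_z e[3,0]$ piece contributes to $\tilde w[0,0]$, through the outer OPE $w[0,3](z_2)e[3,0](0)\sim 36\mu\,\tilde w[0,0]/z_2$ (using $R_3(0,3,3,0)=-36$), yielding $-144\mu^2\,\tilde w[0,0]$. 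Summing and substituting $\mu^2=-1/(480\pi^2)$ from \eqref{eq:tune-mu} gives a right hand side $\tilde w[0,0]$ coefficient of $-576\mu^2=6/(5\pi^2)$.

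Equating $\tilde w[0,0]$ coefficients yields $\frac{3(\alpha+\gamma)}{2}=\frac{6}{5\pi^2}$, which is the claimed relation. The main technical obstacle is careful bookkeeping of the derivative coefficients and the intertwiner signs in \eqref{eq:OPEs-counter}, together with the correct use of translation invariance when evaluating OPEs involving $\p_z e$-states --- where the derivative produces an additional $1/(z-w)^2$ pole that contributes crucially to the $-144\mu^2$ piece in Term 2. A useful consistency check I would perform is that the coefficients of $e[2,2]$ on the two sides also match, at $-288\mu$, without imposing any further constraint on $\alpha+\gamma$.
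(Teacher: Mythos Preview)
Your proof is correct and follows essentially the same approach as the paper: you track the coefficient of $\tilde w[0,0]$ through the contour identity, obtaining the same contributions $\tfrac{3}{2}(\alpha+\gamma)$ on the left and $-432\mu^2-144\mu^2=-576\mu^2$ on the right, and your $e[2,2]$ consistency check ($-288\mu$ on both sides) also matches the paper's explicit verification.
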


\begin{proof}
For simplicity we'll evaluate both sides of equation \eqref{eq:contour-identity} modulo terms quadratic in the $e$ states. We should not, however, ignore terms of this type which appear at intermediate steps of the calculation, though in practice these do not contribute.

Let's proceed by evaluating the l.h.s of \eqref{eq:contour-identity}. The interior integral gives
\bea \label{eq:corrected-associativity-12}
&\frac{1}{2\pi\im}\oint_{|z_{12}|=1}\dif z_{12}\,z_2w[3,0](z_1)w[0,3](z_2) 
= z_2\big(9w[2,2] + \beta_{4,4}w[0,0]\tilde w[0,0] + 18\mu\p_ze[1,1] \\
&- 36\mu f[0,0] + \varepsilon_{4,4}^{2,2}\p_z(e[1,0]e[0,1]) + \zeta_{4,4}^{2,2}(e[1,0]\p_ze[0,1] - e[0,1]\p_ze[1,0])\big)(z_2)\,,
\eea
where the first term is the classical OPE \eqref{eq:chiral-algebra}, the second is a 1-loop correction in the pure gravitational theory \eqref{eq:first-simple-pole}, the next two terms follow from counterterm OPEs \eqref{eq:OPEs-counter}, and the final two terms are 1-loop corrections involving the axion states \eqref{eq:first-axion-simple-pole}. Since we've yet to take the second OPE, we cannot discard the terms quadratic in $e$. Next we perform the exterior integral, which extracts the coefficient of the simple pole in the OPE of \eqref{eq:corrected-associativity-12} with $w[2,2](0)$. The only non-vanishing contributions, modulo terms quadratic in $e$, are
\bea
&z_2w[2,2](z_2)w[2,2](0)\sim -\frac{24\mu}{z_2}e[2,2](0) + \frac{\alpha + \gamma}{6z_2}\tilde w[0,0](0)\,, \\
&z_2f[0,0](z_2)w[2,2](0)\sim \frac{2}{z_2}e[2,2](0)\,.
\eea
The l.h.s. of equation \eqref{eq:contour-identity} therefore evaluates to
\be
\label{eq:lhs-442} \frac{3}{2}(\alpha + \gamma)\tilde w[0,0](0) - 288\mu e[2,2](0) + \dots
\ee

Turning our attention to the first term on the r.h.s of equation \eqref{eq:contour-identity}, the interior integral evaluates to
\be \frac{1}{2\pi\im}\oint_{|z_2|=1}\dif z_2\,z_2w[0,3](z_2)w[2,2](0) = 12\mu e[0,3](0) + \varepsilon_{5,4}^{2,3}e[0,1]e[0,2](0)\,, \ee
where the only contributions are from the counterterm and 1-loop corrected axion OPEs. The exterior integral then extracts the residue in the OPE of $w[3,0](z_1)$ with the above. Modulo terms quadratic in $e$, the only contribution is from
\be w[3,0](z_1)e[0,3](0)\sim \frac{9}{z_1}(e[2,2] - 4\mu\tilde w[0,0])(0)\,, \ee
so the first term on the r.h.s. is
\be \label{eq:rhs1-442} 108\mu e[2,2](0) - 432\mu^2\tilde w[0,0](0) + \dots\,. \ee

At last we arrive at final term in equation \eqref{eq:contour-identity}. The interior integral gives
\bea &\frac{1}{2\pi\im}\oint_{|z_1|=1}\dif z_1\,w[3,0](z_1)w[2,2](0) \\
&=6w[4,1](0) + 4\mu\p_ze[3,0](0) \\ 
&+ (\varepsilon_{5,4}^{3,2} - \zeta_{5,4}^{3,2})e[2,0]\p_ze[1,0](0) + (\varepsilon_{5,4}^{3,2} -\zeta_{5,4}^{2,3})e[1,0]\p_ze[2,0](0)\,,
\eea
where the classical, counterterm and 1-loop corrected axion OPEs all contribute. The exterior integral over $z_2$ extracts the simple pole in the OPE of $z_2w[0,3](z_2)$ with the above. Working again modulo terms quadratic in $e$ states, the only non-vanishing contributions are
\bea
&z_2w[0,3](z_2)w[4,1](0)\sim \frac{72\mu}{z_2}e[2,2](0)\,, \\
&z_2w[0,3](z_2)\p_ze[3,0](0)\sim - \frac{9}{z_2}e[2,2](0) + \frac{36\mu}{z_2}\tilde w[0,0](0)\,.
\eea
Therefore the final term in equation \eqref{eq:contour-identity} evaluates to
\be \label{eq:rhs2-442} 396\mu e[2,2](0) + 144\mu^2\tilde w[0,0](0) + \dots\,. \ee

Collecting equations \eqref{eq:lhs-442}, \eqref{eq:rhs1-442} and \eqref{eq:rhs2-442} we learn that
\bea
&\frac{3}{2}(\alpha + \gamma)\tilde w[0,0](0) - 288\mu e[2,2](0) \\
&= 108\mu e[2,2](0) - 432\mu^2\tilde w[0,0](0) - 396\mu e[2,2](0) - 144\mu^2\tilde w[0,0](0)\,. \eea
The terms involving $e[2,2]$ cancel, as is expected from classical BRST invariance of the twistorial theory at first order in $\mu$. In order for the $\tilde w[0,0]$ terms to cancel we must have
\be \alpha + \gamma = - 384\mu^2\,. \ee
As explained in subsection \ref{subsec:quantum-SDGR}, cancellation of the twistorial anomaly requires
\be \mu^2 = \frac{1}{5!}\bigg(\frac{\im}{2\pi}\bigg)^2\,. \ee
So we find that the failure of associativity identified in \ref{subsec:inconsistency} is remedied only if
\be \label{eq:associativity-constraint} \alpha + \gamma = \frac{4}{5\pi^2} \ee
as claimed.
\end{proof}

We note that the real difficulty in the above calculation, which has largely been suppressed, lies in showing that 1-loop corrections aside from the double pole do not contribute to the final answer.

Repeating the above computation for the choice of operators in equation \eqref{eq:operators-553} we find that associativity of the operator product implies
\be \beta^{2,2}_{5,5} + \delta^{2,2}_{5,5} = 2(\alpha+\gamma) + \frac{6}{5\pi^2} = \frac{14}{5\pi^2}\,. \ee
An important consequence of this is that in order for quantum extended CCA to be associative it must be non-linear.\\

All that remains to be done is to separate the coefficients $\alpha,\gamma$. On spacetime this can be understood as determining the relative contributions of gravitons and 4\textsuperscript{th}-order axions to the 1-loop holomorphic splitting amplitude.

Imagine coupling SD gravity to two 4\textsuperscript{th}-order scalars, and switching on a gravitational axion coupling for one of them. On twistor space they are represented by $\bseta_1,\bseta_2\in\Omega^{2,\bullet}_\mathrm{cl}(\bbPT)[1]$, and the non-vanishing counterterm coupling is $\mu_1$. The gravitons and 4\textsuperscript{th}-order scalars all contribute to the 1-loop 4-point all-plus amplitude, or equivalently the twistorial gravitational anomaly. To cancel it, we must tune \cite{Bittleston:2022nfr}
\be \label{eq:mu1} \mu_1^2 = \frac{3}{2}\mu^2 = \frac{3}{5!\,2}\bigg(\frac{\im}{2\pi}\bigg)^2\,. \ee
The chiral algebra includes two copies of the axion states, $e_i,f_i$ for $i=1,2$. These have the tree OPEs \eqref{eq:OPEs-tree-ef} and separately induce the 1-loop corrections from subsection \ref{subsec:axion-quantum-corrections}. However, only the $e_1,f_1$ states have the counterterm OPEs from equation \eqref{eq:OPEs-counter} with parameter $\mu_1$. In particular we have the 1-loop correction
\be w[4,0](z)w[0,4](0) \sim (\alpha + 2\gamma)\bigg(\frac{1}{z^2}{\tilde w}[0,0] + \frac{1}{2z}\p_z\tilde w[0,0]\bigg)(0) + \dots \ee
where $+\dots$ hides the bilinear terms.

Proposition \ref{prop:associativity-check} shows that associativity of the operator product now requires
\be \alpha + 2\gamma = -384\mu_1^2 = \frac{6}{5\pi^2}\,. \ee
Comparing to \eqref{eq:associativity-constraint} we conclude that
\be \alpha = \gamma = \frac{2}{5\pi^2}\,. \ee
This matches the value obtained in sections \ref{sec:splitting} and \ref{sec:defect}. Similarly demanding associativity for the choice of operators in equation \eqref{eq:operators-553} shows that $\beta^{2,2}_{5,5} = \delta^{2,2}_{5,5}$. We infer that $\beta_{5,5} = 7/5\pi^2$, the value we found by direct calculation in subsection \ref{subsec:evaluate-diagram}.\\

In summary, the associativity failure identified in subsection \ref{subsec:inconsistency} is not present in SD gravity coupled to a 4\textsuperscript{th}-order gravitational axion, so long as the counterterm coupling is tuned to cancel the twistorial gravitational anomaly. This has the effect of cancelling the 1-loop all-plus amplitudes, which are responsible for non-universal holomorphic collinear singularities of amplitudes in 1\textsuperscript{st}-order deformations of the theory.


\subsection{Subleading terms in 1-loop operator products} \label{subsec:subleading}

We've seen that in order for the quantum extended CCA of SD gravity coupled to a 4\textsuperscript{th}-order gravitational axion to be associative, its 1-loop corrected operator products must involve subleading simple poles which are bilinear in generators. These cannot be interpreted straightforwardly as splitting amplitudes, so what do they represent physically?\\

From the discussion in \ref{subsec:interpretation}, they should describe the subleading holomorphic collinear singularities of amplitudes in 1\textsuperscript{st}-order deformations \emph{of the axion coupled theory}. In particular, they will be visible in the holomorphic collinear limits of 1-minus amplitudes in the infinitesimal deformation to full Einstein gravity.\\

It would be interesting to verify this explicitly. Particularly given that the poles beneath double poles present a significant complication in the application of recursion methods to 1-loop graviton amplitudes \cite{Brandhuber:2007up,Dunbar:2010xk,Alston:2012xd,Alston:2015gea}.


\subsection{Alternative methods of anomaly cancellation} \label{subsec:alternatives}

An alternative means of cancelling the twistorial anomaly, or equivalently the 1-loop all-plus amplitudes, is by coupling SD gravity to appropriate matter.\\

Suppose we couple to SD Yang-Mills with gauge group $\fg$, Weyl fermions in the representation $R_f$ and quadratic scalars in the representation $R_s$. Then for the twistorial gravitational anomaly to vanish we must have
\be \label{eq:gravitational-coefficient} \dim R_s - 2\dim R_f + 2\dim \fg + 2 = 0\,. \ee
However, by coupling to SD Yang-Mills we introduce potential twistorial gauge and mixed gauge-gravitational anomalies. For these to vanish we require
\bea \label{eq:gauge-coefficients}
&\tr_{R_s}(X^4) - 2\tr_{R_f}(X^4) + 2\tr_\fg(X^4) = 0\,, \\
&\tr_{R_s}(X^2) - 2\tr_{R_f}(X^2) + 2\tr_\fg(X^2) = 0
\eea
respectively for all $X\in\fg$ \cite{Costello:2021bah,Bittleston:2022nfr}. Of course, the chiral anomaly on spacetime must not be present either. If all of these conditions hold, then we expect that the spacetime theory admits a consistent quantum extended CCA. Other important examples of anomaly free models are SD supersymmetric theories, although we emphasise that anomaly cancellation is a weaker constraint than supersymmetry.\footnote{Even in the supersymmetric case, the extended CCA describes the holomorphic collinear singularities of amplitudes in non-supersymmetric deformations.}\\

Supersymmetric Ward identities necessitate that the all- and mostly-plus graviton amplitudes in theories of gravity coupled to matter are proportional to the l.h.s. of equation \eqref{eq:gravitational-coefficient} \cite{Grisaru:1976vm,Grisaru:1977px,Parke:1985pn}. Given that the 1-loop holomorphic splitting amplitude identified in section \ref{sec:splitting} can be obtained from the holomorphic collinear limit of the 5-point mostly-plus amplitude, in a general theory of SD gravity coupled to matter it's given by
\be \mathrm{Split}^\mathrm{1-loop}_+(1^+,2^+;1/2) = \frac{(\dim R_s-2\dim R_f+2\dim\fg+2)}{360(4\pi)^2}\frac{[12]^4}{\la12\ra^2}\,. \ee
If equation \eqref{eq:gravitational-coefficient} holds then the 1-loop holomorphic splitting amplitude vanishes, and the 1-loop corrected $w,w$ OPEs are free from double poles.\footnote{We emphasise that in the case of SD QCD vanishing of the twistorial gauge anomaly does not necessarily remove double poles from the 1-loop corrected chiral algebra \cite{Costello:2022upu}. Curiously, the double poles in the 1-loop OPEs of gluon states are actually proportional to the coefficient of the twistorial mixed gauge-gravitational anomaly.} The associativity failure identified in proposition \ref{prop:associativity-failure} is proportional to the coefficient of the double poles, so is remedied. It's natural to ask whether the $w,w$ OPEs are deformed at all? Our Feynman diagram computation in section \ref{sec:defect} suggests that the bilinear terms in 1-loop $w,w$ OPEs are still present. Since these can in principle induce associativity failures, it would be worthwhile understanding why cancellation of the twistorial anomaly is sufficient to ensure they do not.


\section{Discussion} \label{sec:discussion}

In this work we've explored whether the holomorphic collinear singularities of amplitudes in 1\textsuperscript{st}-order deformations of SD gravity define a consistent chiral algebra. At tree level this is certainly the case, but at 1-loop its OPEs are deformed. The simplest corrections can be attributed to the 1-loop effective vertex describing double poles in 1-loop graviton amplitudes.

We found associativity of the operator product was violated in the 1-loop deformation of the extended CCA. This failure signals that the holomorphic collinear singularities of amplitudes in 1\textsuperscript{st}-order deformations of SD gravity are not universal, i.e., they depend on the choice of deformation. The non-vanishing 1-loop all-plus amplitudes are responsible for such non-universalities.

The universal holomorphic surface defect in the twistor formulation of classical SD gravity can be identified with the classical extended CCA \cite{Costello:2022wso}. We argued that this applies equally in the quantum setting. Anomalous 1-loop diagrams in the coupled bulk-defect system necessitate corrections to OPEs, which we showed by direct computation match those induced by the 1-loop effective vertex. These 1-loop corrections also involve subleading simple poles which are bilinear in generators. From this perspective, the associativity failure can be traced to the recently discovered gravitational anomaly in the twistor formulation of SD gravity \cite{Bittleston:2022nfr}.

The anomaly can be cancelled via a kind of Green-Schwarz mechanism by coupling to a $\p$-closed $(2,1)$-form field on twistor space describing a 4\textsuperscript{th}-order gravitational axion on spacetime. Incorporating the states of this new field into the chiral algebra, we found that the previously identified failure of associativity was remedied. We also briefly discussed cancelling the twistorial anomaly by coupling to suitable matter.\\

Thus far we have only considered the simplest quantum corrections to extended CCAs. Whilst these are sufficient to see the failure of associativity, it would be interesting to determine the 1-loop corrected extended CCA for a theory of SD gravity coupled to matter for which the twistorial gravitational anomaly vanishes, and perhaps to go beyond 1-loop.\\

There are also number of natural questions raised by this work:
\begin{itemize}
	\item[-] In \cite{Costello:2022wso} (see also \cite{Bu:2022dis}) a correspondence between BRST invariant local operators in SD Yang-Mills coupled to a 4\textsuperscript{th}-order axion and the conformal blocks of its extended CCA was demonstrated. Furthermore, amplitudes in the presence of a local operator were proven to be equal to correlators in the corresponding conformal block. In this way the authors were able to recover MHV, NMHV and 1-loop all-plus amplitudes in Yang-Mills theory. In the gravitational case we've argued that the natural counterparts of local operators are 1\textsuperscript{st}-order deformations of the theory. These are determined by 4-form local operators whose BRST variation is de Rham exact. Applying anomaly ascent gives local operators of higher ghost number. It would be interesting to extend the correspondence of \cite{Costello:2022wso} to such operators, and to compute graviton amplitudes using the quantum extended CCA.
	\item[-] In recent works \cite{Monteiro:2022lwm,Bu:2022iak,Guevara:2022qnm} an integrable deformation of SD gravity has been considered in which (the loop algebra of) $\mathrm{Ham}(\bbC^2)$ is deformed to (the loop algebra of) the symplecton. Whilst this is not the quantum deformation studied here, it would be fascinating to understand the full space of deformations of the extended CCA. Another option is to turn on a cosmological constant, which on twistor space amounts to deforming the contact structure so that it becomes non-degenerate \cite{Ward:1980am,Mason:2007ct}.
	\item[-] One difference between the extended CCAs we've considered here and the CCAs appearing in much of the literature is that they incorporate two towers of states, corresponding to positive- and negative-helicity modes. It's curious that there does exist a natural candidate for a twistorial theory whose extended CCA consists of just one tower of states, describing only positive-helicity modes. This is the theory of the K\"{a}hler scalar, which is believed to be the target space description of the $\cN=2$ string \cite{Ooguri:1990ww,Ooguri:1991fp}. We expect it's described by Poisson-Chern-Simons theory on twistor space. At the quantum level this theory suffers from an anomaly, and its extended CCA will also be deformed. In particular there is an anomalous 1-loop diagram involving holomorphic surface defects of the same form as figure \ref{subfig:box}. Unfortunately at this time we have no means of cancelling the anomaly, and hence correcting associativity of the chiral algebra.
	\item[-] $M$-theory in a twisted $\Omega$-background is described by a 5d non-commutative mixed topological-holomorphic Chern-Simons theory \cite{Costello:2016nkh}, which is closely related to Poisson-Chern-Simons. In the twisted $\Omega$-background M5 branes become holomorphic surface operators supporting generalized $W_{1+\infty}$ algebras. On the other hand, M2 branes are described by topological line operators supporting closely related associative algebras \cite{Costello:2017fbo,Gaiotto:2019wcc,Oh:2020hph,Gaiotto:2020vqj}. It's possible that these results could be leveraged to obtain the extended CCA of an anomaly free theory of SD gravity with matter.
	\item[-] A remarkable new holographic duality between the 4d WZW model in asymptotically flat Burns space and a particular 2d chiral algebra has recently been obtained in \cite{Costello:2022jpg}. It would be intriguing to find a similar duality for the theory of the K\"{a}hler scalar mentioned above. One candidate for the chiral algebra side of the duality is the twistor $\sigma$-model of \cite{Adamo:2021bej}.
\end{itemize}


\acknowledgments 
 
It is a pleasure to thank K. Costello, S. Heuveline, A. Sharma and D. Skinner for many helpful discussions, and K. Costello, N. Paquette, A. Sharma and A. Strominger for comments on a draft version of this paper. Research at Perimeter Institute is supported in part by the Government of Canada through the Department of Innovation, Science and Economic Development and by the Province of Ontario through the Ministry of Colleges and Universities.


\begin{appendix}
	

\section{Holomorphic collinear limit of a 1-loop mostly-plus amplitude} \label{app:5pt-amplitude}

In this appendix we show that at leading order in the holomorphic collinear limit $\la23\ra\to0$ the 1-loop 5-point mostly-plus graviton amplitude factorizes as
\be \cM^\mathrm{1-loop}_5(1^-;2^+,3^+,4^+.5^+) \sim \mathrm{Split}^\mathrm{1-loop}_+(2^+,3^+;t)\cM^\mathrm{tree}_4(1^-,P^-_{23};4^+,5^+) \ee
for the 1-loop holomorphic splitting amplitude given in equation \eqref{eq:holomorphic-splitting}.\\

The motivation for considering this amplitude is as follows: it's well known that the all-plus and mostly-plus tree amplitudes vanish in Einstein gravity.\footnote{There is a caveat here: in the complexified or ultrahyperbolic setting there's a non-vanishing 1-plus, 2-minus tree with $[12]=[23]=[31]=0$ but angle brackets non-vanishing \cite{Dixon:2013uaa}. It might therefore seem natural to consider the holomorphic collinear limits of the 1-loop mostly-plus 4-point amplitude. However, at 4-points momentum conservation necessitates both $\la12\ra[12]=\la34\ra[34]$ and $\la12\ra[24]+\la13\ra[34]=0$, suggesting that if we take $\la12\ra\to0$ we must also send $[34]\to0$.} The simplest non-vanishing tree is therefore the 2-plus, 2-minus amplitude
\be \label{eq:4pt-MHV} \cM^\mathrm{tree}_4(1^-,2^-;3^+,4^+) = -\im\frac{\la12\ra^6[34]}{\la13\ra\la14\ra\la 23\ra\la 24\ra\la34\ra}\,. \ee
The simplest 1-loop amplitude whose holomorphic collinear limits see the 1-loop holomorphic splitting amplitude will have one extra negative-helicity state compared to the above, and so is 5-point mostly-plus amplitude. The 1-loop splitting amplitude in Yang-Mills theory was first computed in \cite{Bern:1994zx} by analysing the 1-loop amplitude with precisely this helicity configuration.\\

The 1-loop mostly-plus 5-point amplitude was first computed in \cite{Dunbar:2010xk}, and is easiest to write down in pieces. First the amplitude is expressed as a sum over cyclic permutations of three positive-helicity legs
\be \label{eq:1-4+1loop}
\cM^\mathrm{1-loop}_5(1^-;2^+,3^+,4^+,5^+) = \frac{\im}{(4\pi)^2}\sum_{\sigma\in C(\{3,4,5\})}\cR(1;2,\sigma(3),\sigma(4),\sigma(5))\,.
\ee
Each term in this sum is then a sum of three further terms
\be
\cR(1;2,3,4,5) = \sum_{i=1}^3\cR^{(i)}(1;2,3,4,5)\,,
\ee
each corresponding to a different class of diagrams in \cite{Dunbar:2010xk}. The first and second of these are relatively straightforward
\begin{align}
	& \label{eq:R1} \cR^{(1)}(1;2,3,4,5) = - \frac{1}{180}\frac{\la14\ra^2\la15\ra^2[23][45]^4(\la34\ra^2\la15\ra^2 + \la13\ra\la34\ra\la45\ra\la15\ra + \la13\ra^2\la45\ra^2)}{\la12\ra^2\la23\ra\la35\ra^2\la34\ra^2\la45\ra^2}\,, \\
	&\label{eq:R2} \cR^{(2)}(1;2,3,4,5) = \frac{1}{60}\frac{\la15\ra[25]^4([13]^2[45]^2 + [23][34][45][25] + [34]^2[25]^2)}{\la34\ra^2[12]^2[15]}\,.
\end{align}
Whereas the third is a little more involved. Writing
\bea \label{eq:Delta}
&\Delta(1;2,3,4,5) = \frac{\la14\ra\la23\ra}{\la12\ra\la34\ra} + \frac{\la15\ra\la23\ra}{\la12\ra\la35\ra} + 6\frac{[42][52]\la23\ra\la45\ra}{\la43\ra\la53\ra[23][45]} + 6\frac{[43][53]\la23\ra\la45\ra\la31\ra^2}{\la43\ra\la53\ra[23][45]\la21\ra^2} \\
&+ 7\frac{[43][52]\la23\ra\la45\ra\la31\ra}{\la43\ra\la53\ra[23][45]\la21\ra} + 7\frac{[42][53]\la23\ra\la45\ra\la31\ra}{\la43\ra\la53\ra[23][45]\la21\ra}\,,
\eea
we have
\be \label{eq:R3} \cR^{(3)}(1;2,3,4,5) = -\frac{1}{180}\frac{\la12\ra^2\la13\ra^4[23]^4[45]}{\la14\ra\la15\ra\la23\ra^2\la34\ra\la35\ra\la45\ra}\bigg(1-\frac{\Delta(1;2,3,4,5)}{2}\bigg)\,. \ee
This expression for the amplitude is not manifestly invariant under permutations of the external positive-helicity states. It does, however, make taking the holomorphic collinear limit $\la23\ra\to0$ fairly easy. We work through the terms in the sum \eqref{eq:1-4+1loop} in sequence, starting with the trivial permutation.\\

In the holomorphic collinear limit of $\cR^{(1)}(1;2,3,4,5)$ we get a decomposition
\be -\frac{1}{4t(1-t)}\frac{[23]}{\la23\ra}\bigg(\frac{\la14\ra^2\la15\ra^2[45]^4(\la P_{23}4\ra^2\la15\ra^2 + \la1P_{23}\ra\la P_{23}4\ra\la45\ra\la15\ra + \la1P_{23}\ra^2\la45\ra^2)}{180\la1P_{23}\ra^2\la P_{23}4\ra^2\la 45\ra^2\la5P_{23}\ra^2}\bigg)\,.
\ee
The first factor is easily identified with the tree graviton splitting amplitude given in \cite{Bern:1998xc},
\be \mathrm{Split}^\mathrm{tree}_-(1^+,2^+;t) = - \frac{1}{4t(1-t)}\frac{[12]}{\la12\ra}\,. \ee
The second can be massaged into the 1-loop 4-point mostly-plus amplitude \cite{Bern:1993wt,Dunbar:1994bn}
\bea
&\cM^\mathrm{1-loop}_4(1^-;P_{23}^+,4^+,5^+) \\
&=\frac{\im}{(4\pi)^2}\frac{\la1P_{23}\ra^2\la14\ra^2\la15\ra^2[14]^2(\la1P_{23}\ra^2[1P_{23}]^2 + \la14\ra^2[14]^2 + \la15\ra[15]^2)}{360\la P_{23}4\ra^2\la45\ra^2\la5P_{23}\ra^4}\,.
\eea
This is the expected splitting in the true collinear limit. $\cR^{(2)}(1;2,3,4,5)$ is non-singular as $\la23\ra\to0$, leaving $\cR^{(3)}(1;2,3,4,5)$. Now all terms in $\Delta$ involve $\la23\ra$, so in the holomorphic limit the only double pole originates from the constant term in the bracket in equation \eqref{eq:R3}. This double pole is
\be \frac{4t(1-t)}{180}\frac{[23]^4}{\la23\ra^2}\bigg(-\frac{\la1P_{23}\ra^6[45]}{\la14\ra\la15\ra\la P_{23}4\ra\la P_{23}5\ra\la45\ra}\bigg)\,. \ee
The second factor matches the 4-point MHV amplitude in equation \eqref{eq:4pt-MHV}, and we identify the first with the 1-loop holomorphic splitting amplitude
\be \mathrm{Split}_+^\mathrm{1-loop}(1^+,2^+;t) = \frac{4t(1-t)}{180(4\pi)^2}\frac{[12]^4}{\la12\ra^2}\,. \ee
This is precisely the expression we obtain from the 1-loop effective vertex in subsection \ref{subsec:effective-vertex}.\\

We then move on to the remaining two terms in equation \eqref{eq:1-4+1loop}, corresponding to the permutations $\sigma = (345)$, $(354)$. The holomorphic limit $\la23\ra\to0$ in these terms corresponds to taking $\la25\ra\to0$, $\la24\ra\to0$ respectively in $\cR(1;2,3,4,5)$ prior to acting with the permutations. A glance at the expressions for $\cR^{(i)}(1;2,3,4,5)$ show that they have no singularities whatsoever in these limits.\\

The simple pole beneath the double pole is determined by $\Delta$ from equation \eqref{eq:Delta}. Relating this object to the simple poles in the 1-loop corrected OPEs of the chiral algebra would be of interest.
	

\section{More details on constraining quantum corrections} \label{app:extended-sl2C}

In this appendix we provide arguments constraining quantum corrections to the chiral algebra. In particular, we show that OPEs involving $w[p,q]$ for $p+q\leq2$ cannot be deformed, and further that the double poles are completely determined by the central term in equation \eqref{eq:first-correction}.\\

We find it convenient to introduce new notation $w^m_p = w[p,m-1-p]$, $\tilde w^m_p = \tilde w[p,m-1-p]$, so that $m$ is the dimension of the $\fsl_2(\bbC)_+$ representation in which the state transforms, and $p$ indexes a basis of this representation. We also suppress $z$ derivatives and dependence, writing the tree level OPEs schematically as
\be w^m_pw^n_r\sim K^{m,n}_{p,r}w^{m+n-2}_{p-1,r-1}\,,\qquad w^m_p\tilde w^n_r\sim K^{m,n}_{p,r}\tilde w^{m+n-2}_{p-1,r-1} \ee
for $K^{p+q+1,r+s+1}_{p,q} = ps-qr$. The poles on the right hand side are fixed by $z$ scaling symmetry. Consider the 1-loop corrections to the $w,w$ OPEs. From the discussion in subsection \ref{subsec:possible-corrections} they must take the schematic form
\be \label{eq:schematic} w^m_p w^n_r \sim X^{m,n;j}_{p,r;a}\,\tilde w^j_a +  Y^{m,n;k,\ell}_{p,r;b,c}\,:w^k_b\tilde w^\ell_c:\,, \ee
where summation convention over repeated indices is implicit on the right hand side. On dimensional grounds $j=m+n-9$, $k+\ell=m+n-6$. Now
\be \mathbf{m}\otimes\mathbf{n}\cong (\mathbf{|m-n|+1}) \oplus \dots \oplus (\mathbf{m+n-3}) \oplus (\mathbf{m+n-1})\,, \ee
so in order for $X^{m,n;j}_{p,r;a}$ to be non-vanishing we must have
\be m+n-9\geq|m-n|+1\,, \ee
i.e., $m,n\geq5$. This shows that linear deformations can arise only in the OPEs of generators $w[p,q]$ with $p+q\geq4$. Similarly,
\be \mathbf{k}\otimes(\mathbf{m+n-6-k})\cong (\mathbf{|m+n-6-2k|+1})\oplus\dots\oplus(\mathbf{m+n-7})\,, \ee
so in order for $Y^{m,n;k,\ell}_{p,r;b,c}$ to be non-vanishing we must have
\be m+n-7\geq|m+n-6-2k|+1\geq|m-n|+1\,, \ee
necessitating $m,n\geq4$. Hence deformations can arise only in the OPEs of generators $w[p,q]$ with $p+q\geq3$.

It's straightforward to apply these same arguments to the $w,\tilde w$ OPEs, and the 1-loop corrections involving the axion discussed in subsection \ref{subsec:axion-quantum-corrections}.\\

Now let's investigate the linear deformation from equation \eqref{eq:schematic} a little more carefully. We saw in subsection \ref{subsec:splitting-deformation} that the 1-loop holomorphic splitting amplitude induces terms of this type. We have already seen that on dimensional grounds it specialises to
\be \label{eq:schematic-linear} w^m_p w^n_r \sim X^{m,n;m+n-9}_{p,r;a}\,\tilde w^{m+n-9}_a\,. \ee
Invariance under the Cartan of $\fsl_2(\bbC)_+$ further restricts $m+n-10\geq a=p+r-4\geq0$. Full $\fsl_2(\bbC)_+$ invariance allows us to write
\be \label{eq:sl2C-invariance} X^{m,n;m+n-9}_{p,r;p+r-4} = \alpha_{m,n}R_4(p,m-1-p,r,n-1-r)\,, \ee
for $R_\ell(p,q,r,s)$ as in equation \eqref{eq:Rpqrs} vanishing for integers outside the range $p+q,p+r,s+q,s+r\geq\ell$.

To constrain the $\alpha_{m,n}$ note that $\fsl_2(\bbC)_+$ admits a split extension by the Heisenberg algebra generated by the zero-modes of $w[p,q]$ for $p+q\leq1$. Imposing invariance under this extended symmetry (or equivalently acting on both sides of \eqref{eq:schematic-linear} with $w[1,0],w[0,1]$) and substituting in equation \eqref{eq:sl2C-invariance} we find
\bea
&(p+r-4)R_4(p,m-p,r,n-r)\alpha_{m+1,n+1} \\
&= pR_4(p-1,m-p,r,n-r)\alpha_{m,n+1}
+ rR_4(p,m-p,r-1,n-r)\alpha_{m+1,n}\,.
\eea
Under the assumption that $\alpha_{m,n} = \alpha_{5,5}$ for $M+N>m+n$, $m,n\geq5$, we find that
\be
\alpha_{M,N} = \begin{cases}
\alpha_{M-1,N}\ \mathrm{for}\ M>5\,, \\
\alpha_{M,N-1}\ \mathrm{for}\ N>5\,, \end{cases}
\ee
so that inductively $\alpha_{m,n} = \alpha_{5,5}$ for all $m,n\geq 5$. This shows that linear deformations to the $w,w$ OPEs must take the schematic form
\be w^m_pw^n_r \sim \alpha_{5,5}R_4(p,m-1-p,r,n-1-r)\tilde w^{m+n-9}_{p+r-4}\,. \ee
In particular, the deformation
\be w[4,0](z)w[0,4](0)\sim \alpha\bigg(\frac{1}{z^2}\tilde w[0,0] + \frac{1}{2z}\p_z\tilde w[0,0]\bigg)(0) \ee
necessitates
\bea
&w[p,q](z)w[r,s](0) \\
&\sim \alpha\frac{R_4(p,q,r,s)}{(4!)^2}\bigg(\frac{1}{z^2}\tilde w[p+r-4,q+s-4] + \frac{1}{2z}\p_z\tilde w[p+r-4,q+s-4]\bigg)(0)\,,
\eea
as claimed in subsections \ref{subsec:splitting-deformation} and \ref{subsec:possible-corrections}.\\

The above argument can be adapted to show that $R_\ell(p,q,r,s)$ does not merely intertwine representations of $\fsl_2(\bbC)_+$, but also of its split extension by the Heisenberg algebra. This explains the ubiquity of this object in the OPEs of the chiral algebras appearing in the main text. Note that it already appears in the defining relations of $\mathrm{Ham}(\bbC^2)$ since $R_1(p,q,r,s) = ps-qr$.\footnote{It also appears in the defining relations of $W_\infty$ and its many variants. In the notation of \cite{Pope:1991ig} we have $R_\ell(p,q,r,s) = N_{\ell+1}^{(p+q)/2-1,(r+s)/2-1}((p-q)/2,(r-s)/2)$. Any linear deformation of $\mathrm{Ham}(\bbC^2)$ preserving the action of the split extension of $\fsl_2(\bbC)_+$ must have linear combinations of these intertwiners as structure constants. From this it follows that of the continuous family of algebras $\fsl_\infty^{(s)}(\bbC)$ with $s\geq-1/2$ deforming the wedge subalgebra of $w_{1+\infty}$, only the symplecton with $s=-1/4$ defines a deformation of $\mathrm{Ham}(\bbC^2)$. This has also been noted in \cite{Bu:2022iak}.}


\section{Feynman integrals} \label{app:integrals}

In this appendix we evaluate Feynman integrals appearing in the text. We will employ Feynman's trick
\be \frac{1}{c_1^{\alpha_1}\dots c_n^{\alpha_n}} = \frac{\Gamma(\alpha_1+\dots+\alpha_n)}{\Gamma(\alpha_1)\dots\Gamma(\alpha_n)}\int_{[0,1]^n}\dif t_1\dots\dif t_n\,\frac{t_1^{\alpha_1-1} \dots t_n^{\alpha_n-1}\delta(1-t_1-\dots -t_n)}{(t_1c_1 + \dots + t_nc_n)^{\alpha_1+ \dots +\alpha_n}}\,, \ee
which for $c_1=\dots=c_n=1$ can be rewritten in the form
\be \int_{[0,\infty)^{n-1}}\dif r_1\,\dots\,\dif r_{n-1}\,\frac{r_1^{\alpha_1-1}\dots r_{n-1}^{\alpha_{n-1}-1}}{(1+r_1+\dots+r_{n-1})^{\alpha_1+\dots+\alpha_n}} = \frac{\Gamma(\alpha_1)\dots\Gamma(\alpha_n)}{\Gamma(\alpha_1+\dots+\alpha_n)}\,. \ee\\

In subsection \eqref{subsec:evaluate-diagram} we are required to evaluate
\bea \label{eq:app-int-1}
&\cI_{1,0,0,1}\big(w,z;x^0(x^1)^4,(y^2)^4\big) \\
&= \frac{2^23^2(\bar w-\bar z)}{\pi^8}\int_{(\bbC^3)^2}\dif^6X\,\dif^6Y\,\frac{x^0[\bar x\,\bar y](x^1)^3\bar x^1\bar x^2(y^2)^3\bar y^1\bar y^2}{\|W-X\|^{10}\|X-Y\|^6\|Y-Z\|^{10}}\,,
\eea
where $W = (w,0,0)$, $Z=(z,0,0)$, $X = (x^0,x^\da)$ and $Y = (y^0,y^\da)$

We begin by performing the integral over $Y$
\bea
&\int_{\bbC^3}\dif^6Y\,\frac{[\bar x\,\bar y](y^2)^3\bar y^1\bar y^2}{\|X-Y\|^6\|Y-Z\|^{10}} \\
&= \frac{7!}{2!\,4!}\int_{[0,1]}\dif t\,t^2(1-t)^4\int_{\bbC^3}\dif^6\tilde Y\,\frac{[\bar x\,\bar {\tilde y}]({\tilde y}^2 + t x^2)^3(\bar{\tilde y}^1 + t\bar x^1)(\bar {\tilde y}^2 + t\bar x^2)}{(\|\tilde Y\|^2 + t(1-t)\|X-Z\|^2)^8}\,,
\eea
where we have employed the Feynman trick and defined $\tilde Y = Y - tX - (1-t)Z$. The integral over $\tilde Y$ only receives contributions from terms which have vanishing charge under phase rotations of $\tilde y^\da$. The invariant piece of
\be [\bar x\,\bar y](y^1)^3(\bar y^2)^2 = [\bar x\,\bar {\tilde y}]({\tilde y}^2 + t x^2)^3(\bar{\tilde y}^1 + t\bar x^1)(\bar {\tilde y}^2 + t\bar x^2) \ee
is
\be 3x^2(\bar x^1)^2t^2|\tilde y^2|^2(|\tilde y^2|^2 + t^2|x^2|^2)\,, \ee
allowing us to simplify the above to
\be \label{eq:I2-match} \frac{7!\,3}{2!\,4!}x^2(\bar x^1)^2\int_{[0,1]}\dif t\,t^4(1-t)^4\int_{\bbC^3}\dif^6\tilde Y\,\frac{|\tilde y^2|^2(|\tilde y^2|^2 + t^2|x^2|^2)}{(\|\tilde Y\|^2 + t(1-t)\|X-Z\|^2)^8}\,. \ee
Writing $r_0 = |\tilde y^0|^2$, $r_1=|\tilde y^1|^2$ and $r_2=|\tilde y^2|^2$ this is
\bea
&\frac{7!\,3}{2!\,4!}x^2(\bar x^1)^2(-2\pi\im)^3\int_{[0,1]}\dif t\,t^4(1-t)^4\int_{[0,\infty)^3}\dif r_0\,\dif r_1\,\dif r_2\,\frac{r_2(r_2 + t^2|x^2|^2)}{(r_0+r_1+r_2+t(1-t)\|X-Z\|^2)^8} \\
&= \frac{3}{2!\,4!}\frac{x^2(\bar x^1)^2}{\|X-Z\|^8}(-2\pi\im)^3\int_{[0,1]}\dif t\,\big((2!)^2t(1-t)\|X-Z\|^2 + 3! t^2|x^2|^2\big) \\
&= \frac{(-2\pi\im)^3x^2(\bar x^1)^2}{4!}\bigg(\frac{1}{\|X-Z\|^6} + \frac{3|x^2|^2}{\|Z-X\|^8}\bigg)\,.
\eea
The integral over $X$ in equation \eqref{eq:app-int-1} is then
\bea &\frac{(-2\pi\im)^3}{4!}\int_{\bbC^3}\dif^6X\,\frac{x^0|x^1|^6|x^2|^2}{\|W-X\|^{10}}\bigg(\frac{1}{\|X-Z\|^6} + \frac{3|x^2|^2}{\|X-Z\|^8}\bigg) \\
&= (-2\pi\im)^3\int_{[0,1]}\dif s\,s^4(1-s)^2\int_{\bbC^3}\dif^6\tilde X\,({\tilde x}^0 + s w + (1-s)z)|x^1|^6|x^2|^2 \\ &\bigg(\frac{7!}{2!(4!)^2}\frac{1}{(\|\tilde X\|^2+s(1-s)|w-z|^2)^8} + \frac{8!\,3}{3!(4!)^2}\frac{(1-s)|x^2|^2}{(\|\tilde X\|^2+s(1-s)|w-z|^2)^9}\bigg)\,,
\eea
where $\tilde X = X - sW - (1-s)Z$. The integral over $\tilde X$ may only receive contributions from terms which have vanishing charge under phase rotations of $\tilde x^0$. Writing $r_0=|\tilde x^0|^2$, $r_1=|\tilde x^1|^2$ and $r_2=|\tilde x^2|^2$ we have
\bea \label{eq:I1-penultimate}
&(-2\pi\im)^6\int_{[0,1]}\dif s\,s^4(1-s)^2(sw + (1-s)z)\int_{[0,\infty)^3}\dif r_0\,\dif r_1\,\dif r_2\,(r_1)^3(r_2) \\ &\bigg(\frac{7!}{2!(4!)^2}\frac{1}{(r_0+r_1+r_2+s(1-s)|w-z|^2)^9} + \frac{8!\,3}{3!(4!)^2}\frac{(1-s)r_2}{(r_0+r_1+r_2+s(1-s)|w-z|^2)^8}\bigg) \\
&= \frac{(-2\pi\im)^6}{|w-z|^2}\int_{[0,1]}\dif s\,s^3(1-s)(sw+(1-s)z)\bigg(\frac{3!}{2!(4!)^2} + \frac{3!}{(4!)^2}(1-s)\bigg) \\
&= \frac{(-2\pi\im)^6}{|w-z|^2}\bigg(\frac{3!}{2!(4!)^2}\bigg(\frac{4!}{6!}w + \frac{2!\,3!}{6!}z\bigg) + \frac{3!}{(4!)^2}\bigg(\frac{2!\,4!}{7!}w+\frac{(3!)^2}{7!}z\bigg)\bigg) = - \frac{\pi^6(22w+13z)}{1260|w-z|^2}\,.
\eea
We conclude that
\be \label{eq:I1-final} \cI_{1,0,0,1}\big(w,z;x^0(x^1)^4,(y^2)^4\big) = - \frac{22w+13z}{35\pi^2(w-z)}\,. \ee
Which is the form assumed in equation \eqref{eq:Feynman-integrals}.\\

In the same subsection we also need the value of
\bea \label{eq:app-int-2}
&\cI_{0,1,1,0}\big(w,z;x^0(x^1)^4,(y^2)^4\big) \\
&= \frac{2^23^2(\bar w-\bar z)}{\pi^8}\int_{(\bbC^3)^2}\dif^6X\,\dif^6Y\,\frac{x^0[\bar x\,\bar y](x^1)^3(\bar x^2)^2(y^2)^3(y^1)^2}{\|W-X\|^{10}\|X-Y\|^6\|Y-Z\|^{10}}\,.
\eea
This proceeds similarly to the evaluation of $\cI_{1,0,0,1}\big(w,z;x^0(x^1)^4,(y^2)^4\big)$, so we shall be brief. Having performed the Feynman trick, the integral over $\tilde Y = \tilde Y-tX-(1-t)Z$ only receives contributions from terms which are invariant under phase rotations of $\tilde y^\da$. The invariant piece of
\be [\bar x\,\bar y](y^1)^3(\bar y^2)^2 = [\bar x\,\bar y]({\tilde y}^2 + tx^2)^3(\bar{\tilde y}^1 + t\bar x^1)^2 \ee
is $3t^2(x^2)^2(\bar x^1)^3|\tilde y^2|^2$. At this point the calculation reduces to the second term in equation \eqref{eq:I2-match}. Chasing this through the rest of the calculation to equation \eqref{eq:I1-final} we find that
\be \cI_{0,1,1,0}\big(w,z;x^0(x^1)^4,(y^2)^4\big) = \frac{2^3 3^2(\bar w-\bar z)}{\pi^8}\bigg(-\frac{\pi^6(4w+3z)}{630|w-z|^2}\bigg) = - \frac{2(4w+3z)}{35\pi^2(w-z)}\,. \ee
This is the form quoted in equation \eqref{eq:Feynman-integrals}. Similar calculations show that
\bea
\cI_{0,0,0,0}(w,z;(x^1)^3,(y^2)^3) &= -\frac{3}{16\pi^2(w-z)} \,, \\
\cI_{1,0,0,1}(w,z;(x^1)^4,(y^2)^4) &= -\frac{1}{\pi^2(w-z)} \,, \\
\cI_{0,1,1,0}(w,z;(x^1)^3,(y^2)^3) &= -\frac{2}{5\pi^2(w-z)} \,. \\
\eea


\section{Classical and counterterm operator products involving the axion} \label{app:axion-OPEs}

In this appendix we determine some of the classical and counterterm OPEs involving the $e,f$ states appearing in equations \eqref{eq:OPEs-tree-ef}, \eqref{eq:OPEs-counter} by the method of Koszul duality. For the sake of brevity, we suppress summation symbols.\\

We begin by determining OPEs induced by the classical interaction
\be \frac{1}{4\pi\im}\int_\bbPT\p^{-1}\bseta\,\cL_{\{\bh,\ \}}\bseta = \frac{1}{4\pi\im}\int_\bbPT\bseta\,\{\bh,\ \}\ip\bseta\,. \ee
Note that this term has no counterpart in the case of holomorphic BF theory \cite{Costello:2022wso,Costello:2022upu}. It is responsible for the BRST transformations
\be \label{eq:BRST-transformations-tree} \delta\bg = \frac{1}{2}\cL_{\p^\da}(\bseta\,\p_\da\ip\bseta)\,,\qquad \delta\bseta = - \p(\p^\da\bh\,\p_\da\ip\bseta)\,, \ee
or, in terms of $\bsgamma$,
\be \label{eq:BRST-transformations-tree-gamma} \delta\bsgamma = \p^\da\bh\,\p_\da\ip\p\bsgamma\,. \ee
First consider the coupling of $\bseta$ to the $e$ states in the chiral algebra
\be \frac{1}{2\pi\im}\int_{\cL_0}\dif z\,\bigg(e[m,n](z)\cD_{m,n}\bsgamma_z + \p_ze[m,n](z)\frac{\cD_{m-1,n}\bsgamma_1 + \cD_{m,n-1}\bsgamma_2}{m+n}\bigg)\,. \ee
Under the transformation \eqref{eq:BRST-transformations-tree-gamma}
\bea \label{eq:BRST-tree-e}
&\delta\Bigg(\frac{1}{2\pi\im}\int_{\cL_0}\dif z\,\bigg(e[m,n](z)\cD_{m,n}\bsgamma_z + \p_ze[m,n](z)\frac{\cD_{m-1,n}\bsgamma_1 + \cD_{m,n-1}\bsgamma_2}{m+n}\bigg)\Bigg) \\
&- \frac{1}{2\pi\im}\int_{\cL_0}\dif z\,\bigg((ps-qr)e[p+r-1,q+s-1](z)\cD_{p,q}\bh\cD_{r,s}\bsgamma_z \\ 
&- pe[p+r,q+s](z)\cD_{p,q}\bh\cD_{r+1,s}\p_z\bsgamma_2 + qe[p+r,q+s](z)\cD_{p,q}\bh\cD_{r,s+1}\p_z\bsgamma_1 \\
&- \frac{p+q}{p+q+r+s}\p_ze[p+r,q+s](z)\cD_{p,q}\bh\cD_{r,s}\bseta_{12}\bigg)\,.
\eea
This partially cancelled by the linearised variation of the bilocal term
\bea
&\bigg(\frac{1}{2\pi\im}\bigg)^2\int_{\cL_{0,1}}\dif z_1\,w[p,q](z_1)\cD_{p,q}\bh_1\int_{\cL_{0,2}}\dif z_2\,\bigg(e[r,s](z_2)\cD_{r,s}\bsgamma_{z,2} \\
&+ \p_ze[r,s](z_2)\frac{\cD_{r-1,s}\bsgamma_{1,2} + \cD_{r,s-1}\bsgamma_{2,2}}{r+s}\bigg)\,,
\eea
which is
\bea
&\bigg(\frac{1}{2\pi\im}\bigg)^2\int_{\cL_{0,2}}\Bigg(\bigg(\lim_{\epsilon\to0}\oint_{|z_{12}|=\epsilon}\dif z_{12}\,w[p,q](z_1)e[r,s](z_2)\bigg)\dif z_2\,\cD_{p,q}\bh_1\cD_{r,s}\bsgamma_{z,2} \\
&+ \bigg(\lim_{\epsilon\to0}\oint_{|z_{12}|=\epsilon}\dif z_{12}\,w[p,q](z_1)\p_ze[r,s](z_2)\bigg)\dif z_2\,\cD_{p,q}\bh_1\frac{\cD_{r-1,s}\bsgamma_{1,2}+\cD_{r,s-1}\bsgamma_{2,2}}{r+s}\Bigg)\,.
\eea
By taking
\be \label{eq:OPE-tree-we-app} w[p,q](z)e[r,s](0)\sim\frac{ps-qr}{z}e[p+r-1,q+s-1](0) \ee
we cancel the first term in \eqref{eq:BRST-tree-e}, but this introduces a new contribution
\be - \frac{1}{2\pi\im}\int_{\cL_0}\frac{ps-qr}{r+s}e[p+r-1,q+s-1](z)\cD_{p,q}\bh(\cD_{r-1,s}\p_z\bsgamma_1 + \cD_{r,s-1}\p_z\bsgamma_2)\,. \ee
Therefore the total uncancelled variation is
\bea \label{eq:BRST-tree-uncancelled-e} &\frac{1}{2\pi\im}\int_{\cL_0}\Bigg(\bigg(p-\frac{p(s+1)-q(r+1)}{r+s+2}\bigg)e[p+r,q+s](z)\cD_{p,q}\bh\cD_{r+1,s}\p_z\bsgamma_2 \\
&- \bigg(q+\frac{p(s+1)-q(r+1)}{r+s+2}\bigg)e[p+r,q+s](z)\cD_{p,q}\bh\cD_{r,s+1}\p_z\bsgamma_1 \\
&+ \frac{p+q}{p+q+r+s}\p_ze[p+r,q+s](z)\cD_{p,q}\bh\cD_{r,s}\bseta_{12}\Bigg)\,, \\
&= \frac{1}{2\pi\im}\int_{\cL_0}\bigg(\frac{p+q}{r+s+2}e[p+r,q+s](z)\cD_{p,q}\bh\cD_{r,s}\p_z\bseta_{12} \\
&+ \frac{p+q}{p+q+r+s}\p_ze[p+q,r+s](z)\cD_{p,q}\bh\cD_{r,s}\bseta_{12}\bigg)\,.
\eea
This is compensated by the linearised variation of the bilocal term
\be \bigg(\frac{1}{2\pi\im}\bigg)^2\int_{\cL_{0,1}}\dif z_1\,w[p,q](z_1)\cD_{p,q}\bh_1\int_{\cL_{0,2}}\dif z_2\,f[r,s](z_2)\cD_{r,s}\bseta_{12,2}\,. \ee
Assuming terms in the $w,f$ OPEs of the form
\be w[p,q](z)f[r,s](0)\sim\frac{1}{z^2}\cO^{(2)}_{p,q,r,s}(0) + \frac{1}{z}\cO^{(1)}_{p,q,r,s}(0)\,, \ee
this is equal to
\bea
&\frac{1}{2\pi\im}\int_{\cL_0}\dif z\,\big(\cO^{(2)}_{p,q,r,s}(z)\cD_{p,q}\p_z\bh\cD_{r,s}\bseta_{12} + \cO^{(1)}_{p,q,r,s}(z)\cD_{p,q}\bh\cD_{r,s}\bseta_{12}\big) \\
&= - \frac{1}{2\pi\im}\int_{\cL_0}\dif z\,\big(\cO^{(2)}_{p,q,r,s}(z)\cD_{p,q}\bh\cD_{r,s}\p_z\bseta_{12} + (\p_z\cO^{(2)}_{p,q,r,s}(z) - \cO^{(1)}_{p,q,r,s}(z))\cD_{p,q}\bh\cD_{r,s}\bseta_{12}\big)\,.
\eea
Comparing to equation \eqref{eq:BRST-tree-uncancelled-e} we find that
\bea
&\cO^{(2)}_{p,q,r,s}(z) = \frac{p+q}{r+s+2}e[p+r,q+s](z)\,, \\ &\cO^{(1)}_{p,q,r,s}(z) = \frac{(p+q)(p+q-2)}{(r+s+2)(p+q+r+s)}\p_ze[p+r,q+s](z)\,,
\eea
so that
\bea \label{eq:OPE-tree-wf-1-app}
&w[p,q](z)f[r,s](0)\sim \frac{p+q}{r+s+2}\bigg(\frac{1}{z^2}e[p+q,r+s](0) \\
&+ \frac{1}{z}\frac{p+q-2}{p+q+r+s}\p_ze[p+r,q+s](0)\bigg)\,.
\eea
The OPEs \eqref{eq:OPE-tree-we-app}, \eqref{eq:OPE-tree-wf-1-app} coincide with equations \eqref{eq:OPE-tree-ef}, \eqref{eq:OPE-tree-wf} in the main text. The remaining tree OPEs \eqref{eq:OPE-tree-ef}, \eqref{eq:OPE-tree-ff} can be evaluated using similar arguments.\\

Next let's determine OPEs induced by the counterterm interaction
\be \frac{\mu}{4\pi\im}\int_\bbPT\bseta\,\tr(\bs\p\bs) = \frac{\mu}{4\pi\im}\int_\bbPT\bseta\,\p^\da\p_\db\bh\,\p^\db\p_\da\p\bh\,, \ee
which is responsible for the following BRST transformations
\be \label{eq:BRST-transformations-counter} \delta\bg = - \mu\,\cL_{\p^\da}\cL_{\p_\db}\bseta\,\p^\db\p_\da\p\bh\,,\qquad \delta\bseta = - \frac{\mu}{2}\tr(\p\bs^2) = -\frac{\mu}{2}\p^\da\p_\db\p\bh\,\p^\db\p_\da\p\bh\,, \ee
or, in terms of $\bsgamma$,
\be \label{eq:BRST-transformations-counter-gamma} \delta\bsgamma = \frac{\mu}{2}\p^\da\p_\db\bh\,\p^\db\p_\da\p\bh\,. \ee

Consider the coupling of $\bg$ to a $\tilde w$ generator
\be \frac{1}{2\pi\im}\int_{\cL_0}\dif z\,\tilde w[m,n](z)\cD_{m,n}\tilde\bh\,. \ee
A tedious calculation shows that under the transformations \eqref{eq:BRST-transformations-counter}
\bea \label{eq:BRST-counter-g-1}
&\delta\bigg(\frac{1}{2\pi\im}\int_{\cL_0}\dif z\,\tilde w[m,n](z)\cD_{m,n}\tilde\bh\bigg) \\
&= \frac{\mu}{2\pi\im}\int_{\cL_0}\dif z\,\Big(\tilde w[p+r-2,q+s-2](z)\big(- R_2(p,q,r,s)\cD_{p,q}\bseta_{12}\cD_{r,s}\p_z\bh \\
&+ (2p(q+1)rs(s-1) - p(p-1)s(s-1)(s-2) - (q+1)qr(r-1)s)\cD_{p,q+1}\p_z\bsgamma_1\cD_{r,s}\bh \\
&- (2(p+1)qr(r-1)s - q(q-1)r(r-1)(r-2) - (p+1)prs(s-1))\cD_{p+1,q}\p_z\bsgamma_2\cD_{r,s}\bh\big) \\
&+ \tilde w[p+r-3,q+s-3](z)R_3(p,q,r,s)\cD_{p,q}\bsgamma_z\cD_{r,s}\bh\Big)\,.
\eea
The final term involving $\bsgamma_z$ can be cancelled by the linearised variation of the bilocal term
\bea \label{eq:bilocal-ew} 
&\bigg(\frac{1}{2\pi\im}\bigg)^2\int_{\cL_{0,1}}\dif z_1\,\bigg(e[p,q](z_1)\cD_{p,q}\bsgamma_{z,1} + \p_ze[p,q](z_1)\frac{\cD_{p-1,q}\bsgamma_{1,1} + \cD_{p,q-1}\bsgamma_{2,1}}{p+q}\bigg) \\
&\int_{\cL_{0,2}}\dif z_2\,w[r,s](z_2)\cD_{r,s}\bh_2\,,
\eea
which is
\bea \label{eq:BRST-bilocal-ew}
&\bigg(\frac{1}{2\pi\im}\bigg)^2\int_{\cL_{0,2}}\Bigg(\bigg(\lim_{\epsilon\to0}\oint_{|z_{12}|=\epsilon}\dif z_{12}\,e[p,q](z_1)w[r,s](z_2)\bigg)\dif z_2\,\cD_{p,q}\bsgamma_{z,1}\cD_{r,s}\bh_2\\
&+ \bigg(\lim_{\epsilon\to0}\oint_{|z_{12}|=\epsilon}\dif z_{12}\,\p_z e[p,q](z_1)w[r,s](z_2)\bigg)\dif z_2\,\frac{\cD_{p-1,q}\bsgamma_{1,1} + \cD_{p,q-1}\bsgamma_{2,1}}{p+q}\cD_{r,s}\bh_2\Bigg)\,.
\eea
Comparing to equation \eqref{eq:BRST-counter-g-1} it's clear that by taking
\be \label{eq:OPE-counter-we-app} e[p,q](z)w[r,s](0)\sim - \frac{\mu R_3(p,q,r,s)}{z}\tilde w[p+r-3,q+s-3](0) \ee
we can cancel the term involving $\bsgamma_z$. However, the second line in equation \eqref{eq:BRST-bilocal-ew} now contributes. A careful calculation shows that the total uncancelled variation is
\be
- \frac{\mu}{2\pi\im}R_2(p,q,r,s)\int_{\cL_0}\dif z\,\tilde w[p+r-2,q+s-2](z)\bigg(\cD_{p,q}\bseta_{12}\cD_{r,s}\p_z\bh - \frac{r+s-2}{p+q+2}\cD_{p,q}\p_z\bseta_{12}\cD_{r,s}\bh\bigg)\,.
\ee
Integrating by parts with $\p_z$ in the first term this is
\bea \label{eq:BRST-counter-g-2}
&\frac{\mu}{2\pi\im}R_2(p,q,r,s)\int_{\cL_0}\dif z\,\bigg(\frac{p+q+r+s}{p+q+2}\tilde w[p+r-2,q+s-2](z)\cD_{p,q}\p_z\bseta_{12}\cD_{r,s}\bh \\
&+ \p_z\tilde w[p+r-2,q+s-2](z)\cD_{p,q}\bseta_{12}\cD_{r,s}\bh\bigg)\,.
\eea
It can be cancelled by the linearised variation of the bilocal term
\be \bigg(\frac{1}{2\pi\im}\bigg)^2\int_{\cL_{0,1}}\dif z_1\,f[p,q](z_1)\cD_{p,q}\bseta_{12,1}\int_{\cL_{0,2}}\dif z_2\,w[r,s](z_2)\cD_{r,s}\bh_2\,, \ee
which is
\be \bigg(\frac{1}{2\pi\im}\bigg)^2\int_{\cL_{0,2}}\bigg(\lim_{\epsilon\to0}\oint_{|z_{12}|=\epsilon}\dif z_{12}\,f[p,q](z_1)w[r,s](z_2)\bigg)\dif z_2\,\cD_{p,q}\bseta_{12,1}\cD_{r,s}\bh_2\,. \ee
It's straightforward to see that for the above to offset equation \eqref{eq:BRST-counter-g-2} we should take
\bea \label{eq:OPE-counter-wf-app}
&f[p,q](z)w[r,s](0) \sim - \mu R_2(p,q,r,s)\bigg(\frac{1}{z^2}\frac{p+q+r+s}{p+q+2}\tilde w[p+r-2,q+s-2](0) \\
&+ \frac{1}{z}\p_z\tilde w[p+r-2,q+s-2](0)\bigg)\,.
\eea
The OPEs \eqref{eq:OPE-counter-we-app}, \eqref{eq:OPE-counter-wf-app} coincide with equations \eqref{eq:OPE-counter-we}, \eqref{eq:OPE-counter-wf} in the main text. The remaining counterterm OPEs \eqref{eq:OPE-counter-ww} can be computed in a similar manner.

\end{appendix}


\newpage

\bibliographystyle{JHEP}
\bibliography{references}


\end{document}